\documentclass{article}
\usepackage{csquotes}
\usepackage[english]{babel}

\usepackage[letterpaper,top=2cm,bottom=2cm,left=3cm,right=3cm,marginparwidth=1.75cm]{geometry}
\usepackage[
  backend=biber,
  style=alphabetic,   
  sorting=nyt
]{biblatex}

\usepackage{mdframed}
\usepackage{amsmath}
\usepackage{amssymb}
\usepackage{amsthm}
\usepackage{graphicx}
\usepackage{braket}
\usepackage{comment}
\usepackage[colorlinks=true, allcolors=blue]{hyperref}
\theoremstyle{definition} 
\newtheorem{theorem}{Theorem}[section] 
\newtheorem{definition}[theorem]{Definition}
\newtheorem{fact}[theorem]{Fact} 
\newtheorem{lemma}[theorem]{Lemma} 

\usepackage{caption}
\usepackage{enumitem}
\usepackage{thm-restate}
\usepackage{multirow}

\newcommand{\Tr}{\operatorname{Tr}}

\newcommand{\eps}{\varepsilon}
\newcommand{\tr}{\mathrm{tr}}
\newcommand{\state}[1]{\ket{#1}\bra{#1}}
\newcommand{\F}{\mathsf{F}}

\title{Near-optimal entanglement-communication tradeoffs for remote state preparation}
\author{Srijita Kundu \\
\small Quantum Computing Research Centre \\
\small Hon Hai (Foxconn) Research Institute \\
\small \texttt{srijita.kundu@foxconn.com.sg} \\
\and Olivier Lalonde \\
\small Institute for Quantum Computing \\
\small University of Waterloo \\
\small \texttt{olalonde@uwaterloo.ca}
}
\date{}

\begin{document}
\maketitle
\begin{abstract}
    We study the following task: Alice is given a classical description of a rank-$k$ projector $P$ on $\mathbb{C}^d$, and Alice and Bob want to prepare the quantum state $P/k$ on Bob's side using shared entanglement and classical communication. The general form of this task is known as remote state preparation (RSP). We give nearly-matching (up to small additive factors) lower and upper bounds for the entanglement cost and communication cost for RSP of the states $P/k$. Ours are the first nearly-matching upper and lower bounds for RSP of mixed states, and in the special case of pure states, our lower bound outperforms the best previously known lower bound. Our results show that any pure entangled state that can be used to do RSP of these states with $o(d)$ bits of communication, can distill $\log d$ ebits of entanglement, and conversely, any state that can distill $\log d$ ebits of entanglement can be used to do RSP of these states efficiently. As applications of our results, we rederive a previously-known incompressibility result for states of the form $P/k$, and give a new entanglement-assisted communication protocol for the equality function that uses $\frac{1}{2}\log n + O(1)$ many ebits, and  $O(1)$ communication. 
\end{abstract}

\section{Introduction}

A central goal of quantum information theory is to understand what resources are required to transmit quantum states between a sender and a receiver. There are various settings in which such a transmission can be performed. One setting is the task of \emph{quantum teleportation}, in which the sender Alice holds a quantum state that is unknown to her and wants to transmit this to the receiver Bob. Alice and Bob pre-share an entangled quantum state which can be used as a resource for this task, and moreover, Alice can communicate with Bob using classical bits. One of the foundational results of quantum information theory is that a $d$-dimensional quantum state can be teleported using a $d$-dimensional maximally entangled state (i.e., $\log d$ many EPR pairs or ebits) shared between Alice and Bob, and $2\log d$ bits of classical communication \cite{BBC+93}, and that this is the optimal cost.

Moreover, it is known that the tasks of quantum teleportation and \emph{entanglement distillation} are essentially equivalent. In entanglement distillation, Alice and Bob share some arbitrary (potentially mixed) entangled state, and want to distill a maximally entangled state of some dimension from this state, via local operations and classical communication (LOCC). It is easy to show that if some mixed entangled state can be used to teleport $d$-dimensional quantum states with some communication, then Alice and Bob can distill a $d$-dimensional maximally entangled state from this shared state, using the same communication as the teleportation protocol.\footnote{The construction here is to simply run the teleportation protocol in superposition over a basis.}

A task quite similar to the teleportation task is \emph{remote state preparation} (RSP). Here too Alice wants to send a quantum state to Bob using shared entanglement and classical communication, but the difference is that Alice knows the description of the quantum state she wants to send. Due to this difference, this task can be done more efficiently \cite{Bennett_2001}: using only $\log d + \log\log d + O(1)$ communication and the same $d$-dimensional maximally entangled state, for arbitrary pure states. The tradeoff between classical communication and entanglement required for remote state preparation of pure states has been further studied in many works \cite{DB01,Lo00,bennett2005remote}. \cite{bennett2005remote} showed lower bounds for RSP of pure states: they showed that a pure entangled state that can be used for RSP of $d$-dimensional pure states with $o(d)$ bits of classical communication cannot have Schmidt rank smaller than $d$. Moreover, they showed that $\log d - O(1)$ bits of classical communication are needed regardless of the amount of entanglement.

RSP for mixed states has often been studied under the name one-shot quantum state compression \cite{JRS03, jain2005, JSR08, Jain06, BNR18, Bab_Hadiashar_2020}.\footnote{Technically, one-shot state compression is a more general setting where Alice can also send quantum states to Bob, and she wants to send fewer qubits than the number in the target state. But due to teleportation, it is sufficient to consider only shared entanglement and classical communication.} One-shot state compression has connections to several other tasks of interest in quantum information theory, such as \emph{state splitting} \cite{Dev06, BCR11, AJ22}, and \emph{state merging} \cite{HOW05, HOW06, AJ22}, as well as communication complexity \cite{JRS03, jain2005, AJM+16}. \cite{JRS03, BNR18} gave a protocol for state compression for an arbitrary ensemble of mixed states, which is efficient in communication, but uses a very large amount of entanglement. They also showed a communication lower bound of $\log d - O(1)$ for a certain restricted class of protocols for an ensemble of $d$-dimensional mixed states. \cite{Bab_Hadiashar_2020} generalized this lower bound for general protocols, and in fact showed a communication-entanglement tradeoff: they showed that the sum of the communication cost and entanglement cost (measured in number of shared EPR pairs) for a particular ensemble of $d$-dimensional mixed states should be at least $\log d - O(1)$. \cite{AJ22} gave a protocol with optimal communication and a much smaller amount of entanglement than previous protocols, but their entanglement cost was still far from $\log d$.

Unlike for teleportation, RSP with mixed states as the shared entanglement resource has not really been studied. In all the works mentioned above, it is assumed that Alice and Bob share EPR pairs, and the entanglement cost is measured in terms of these. A variant of RSP known as oblivious RSP was studied in \cite{LS03}, and it was shown that this is essentially equivalent to teleportation in the exact (zero-error) case. The requirement in oblivious RSP is that Bob prepares exactly one copy of the target state and does not get any further information about the state. In the exact case, oblivious RSP requires $2\log d$ communication like teleportation, and it is not difficult to show that like teleportation, if oblivious RSP can be achieved using some shared mixed entangled state, EPR pairs can be distilled from the same shared state using the same communication. However, for RSP without the obliviousness condition, entanglement cost for mixed states, or even entanglement measures other than Schmidt rank for pure states, has not been studied at all. In particular it is not understood whether there is some connection between the ability to do RSP with little communication with an entangled state\footnote{Note that we can do RSP without any entanglement at all if allowed $\Omega(d)$ bits of communication.} and the ability to distill EPR pairs from that entangled state.

\subsection{Our results}
In this work, we take a closer look at RSP with arbitrary pure states as the shared entanglement. We specifically study RSP protocols for \emph{flat states}.

\paragraph{Flat states.} A $d$-dimensional rank-$k$ flat state is a mixed state of the form $\frac{P}{k}$, where $P$ is a rank-$k$ projector on $\mathbb{C}^d$ (they are so called due to their spectrum being flat at $\frac{1}{k}$). We will call the set of all such states $G(d,k)$, and a protocol for remote state preparing these states will be called a $(d,k)$-RSP protocol.

Flat states are an important class of states in the context of RSP: they were the ensemble studied in \cite{JRS03, Bab_Hadiashar_2020} as well. Flat states are in a sense complete for RSP: RSP for arbitrary states can be done using RSP for flat states via a method called the Brothers Extension \cite{AJM+16} (the method was proposed in \cite{AJM+16} but only called the Brothers Extension in later works). Moreover, it can be seen that in any one-way communication protocol between Alice and Bob which involves shared EPR pairs and projective measurements, the state on Bob's side at the end of the protocol must be a flat state. Therefore, any protocol of this form is essentially doing RSP of a flat state, and it may be possible to use our results to prove bounds for such protocols. 

\subsubsection{Lower bound}
Our main result shows lower bounds on the entanglement and communication cost of $(d,k)$-RSP protocols. The entanglement measure we use in our lower bound is smoothed entanglement min-entropy, denoted by $H^\eps_{\min}(A)_\sigma$ or $H^\eps_{\min}(B)_\rho$ when Alice and Bob share an entangled state $\ket{\sigma}^{AB}$. This quantity is a one-shot version of the more well-known measure entanglement entropy $H(A)_\rho$ (or $H(B)_\rho$). Although this is only a useful entanglement measure for pure states, it is a stronger measure than the previously studied Schmidt rank, and is related to distillable entanglement for pure states. Moreover, it is possible to extend our result to lower bound the entanglement of formation for mixed entangled states used in RSP; we leave as an open question whether there is a connection to distillable entanglement as well in the mixed state case.

Our resource lower bounds are captured by the following theorem.
\begin{theorem}[Combined version of Theorems \ref{thm:comm-lowerbound} and \ref{thm:minentropy}]\label{thm:lowerbound}
    For all $\gamma > 0$, any $(d,k)$-RSP protocol with relaxed average error $\eps_r$, $m$ bits of communication, and initial shared entangled state $\ket{\sigma}^{AB}$, must satisfy
    \[H^{\delta + \gamma}_{\text{min}}(A)_\sigma \geq \log d - 3 \log(1/\gamma) - O(1) \]
    where $\delta = F\left(\frac{k}{d}+O(\sqrt{\frac{m}{d}}), 1-\varepsilon_r\right) $, and $F$ is a truncated version of the fidelity function.\footnote{See Section \ref{sec:prelim} for a formal definition of $F$. What is important to know is that $F$ is small if the first argument is much smaller than the second.} Moreover, for any (possibly mixed) shared state, the communication $m$ must satisfy
    \[ m \geq \left\lfloor\log\frac{d}{k}\right\rfloor + \log(1-\eps_r).\]
\end{theorem}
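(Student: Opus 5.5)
The plan has two independent parts, mirroring the two claims of Theorem~\ref{thm:lowerbound}: a counting argument for the communication bound, and an entanglement bound that compares Bob's possible outputs with the (nearly) Haar-random target states.

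\textbf{Communication bound.} We fix the shared state $\sigma^{AB}$. Bob's output is determined by the message $c\in\{0,1\}^m$ he receives, so it is $\rho_c = \Phi_c(\sigma^B_c)$, where $\sigma^B_c$ is Bob's conditional state given message $c$ and $\Phi_c$ is his decoding channel. The probability of $c$ is at most $1$, but the unnormalized conditional states satisfy $\sum_c p(c)\sigma^B_c = \sigma^B$, independently of $P$. We then choose a family of $\lfloor d/k\rfloor$ mutually orthogonal rank-$k$ projectors $P_1,\dots,P_{\lfloor d/k\rfloor}$. Success for $P_j$ is measured by $\Tr(P_j\rho)$, the relaxed fidelity-type quantity defining $\eps_r$; under the relaxed average error, $\Tr(P_j\rho)\ge 1-\eps_r$ on average. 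Summing over $j$ gives
\[ \sum_j \Pr[\text{success on } P_j]\le \sum_j\sum_c p_j(c)\Tr(P_j\Phi_c(\sigma^B_c)) \le \sum_c \max_{j}\ldots \le 2^m, \]
because for fixed $c$ and fixed Bob state, $\sum_j \Tr(P_j\,\cdot)\le 1$. To make this rigorous, we have Bob replace Alice's message by a uniformly random guess $c$, which succeeds with probability $2^{-m}$ (post-selection). This yields a single state $\tau=2^{-m}\sum_c\Phi_c(\sigma^B_c)$-like object that achieves $\Tr(P_j\tau)\ge 2^{-m}(1-\eps_r)$ for every $j$ simultaneously. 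Summing over $j$ then gives $\lfloor d/k\rfloor 2^{-m}(1-\eps_r)\le 1$, which is the stated bound after taking logarithms.

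\textbf{Entanglement bound.} We use a smoothing/duality argument, the harder direction.
\begin{enumerate}[label=(\roman*)]
\item Using post-selection on the message again, with Bob guessing $c$, we obtain that for each $P$ there is an operator $\omega_P\le$ (something built from Bob's reduced state) such that $\Tr(P\,\Phi(\sigma^B_{c(P)}))\ge 1-\eps_r$.
\item If $H_{\min}^{\delta+\gamma}(A)_\sigma$ were small, $\sigma^B$ would, after smoothing by $\delta+\gamma$, have large eigenvalues, i.e. effectively be supported on few dimensions relative to $d$. Bob's conditional states $\sigma^B_c$ are each dominated by $2^m\sigma^B$ in operator order.
\item For a Haar-random rank-$k$ projector $P$, we show by concentration of measure (Lévy's lemma on the Grassmannian, together with a union bound or net over the $2^m$ messages) that for any fixed operator of small support, the overlap with $P$ is at most $k/d+O(\sqrt{m/d})$ with high probability.
\item Comparing this typical overlap with the required $1-\eps_r$ through the truncated fidelity $F$ shows that Bob's state can succeed only if the portion of $\sigma^B$ outside a low-min-entropy piece carries weight $\ge 1-\delta$.
\end{enumerate}
The error term $3\log(1/\gamma)$ comes from converting operator inequalities into smoothed min-entropy via a gentle-measurement/purification argument, using Uhlmann's theorem on $\ket{\sigma}^{AB}$ to move the smoothing to $A$.

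\textbf{Main obstacle.} We expect the main obstacle to be step (iii)--(iv): the communication lets Bob's state adapt to $P$, so concentration must be uniform over the $2^m$ messages. This uniformity is exactly what produces the $\sqrt{m/d}$ term. Turning the resulting average statement into a smoothed min-entropy bound with only $O(\log(1/\gamma))$ loss will also require care. The first thing we would try is to bound $\mathbb{E}_P\max_c\Tr(P\Phi_c(\Pi\sigma^B_c\Pi))$ for a projector $\Pi$ onto the large-eigenvalue part of $\sigma^B$.
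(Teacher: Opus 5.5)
\textbf{Entanglement bound: a genuine gap.} Steps (i)--(iv) never identify where the truncated fidelity $F$ comes from, and the two mechanisms you do name each lose a factor $2^m$.

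Post-selecting on a guessed message is one of them. Dominating the conditional states by $2^m\sigma^B$ is the other, and it is false as stated. The valid relation is $p(c|P)\,\sigma^B_{P,c}\preceq\sigma^B$, where $\sigma^B_{P,c}$ is Bob's state given input $P$ and message $c$, and $p(c|P)$ can be far below $2^{-m}$. Also, $\eps_r$ controls success only on average over $P$ and $c$, not for each $P$.

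A multiplicative $2^m$ loss becomes an additive loss of about $m$ in min-entropy. This is vacuous, because $m\geq\log\frac{d}{k}$ is forced, so such a bound is at most about $\log k$. It is the same reason the communication-plus-entanglement bound of \cite{Bab_Hadiashar_2020} says nothing about entanglement at optimal communication. In the theorem, $m$ may enter only through the $O(\sqrt{m/d})$ shift inside $F$.

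The fix is to keep the exact no-signalling weights. For every $P$, $\sigma^B=\sum_c p(c|P)\sigma^B_{P,c}$. So for $\Pi$ projecting onto the top $l$ eigenvectors of $\sigma^B$ and $Q_c=V_c\Pi V_c^\dagger$, you get $\Tr(\Pi\sigma^B)=\sum_c p(c|P)\Tr(Q_c\chi_{P,c})$. The paper reaches the equivalent majorization, Lemma~\ref{prop:majorizes}, by running the protocol in superposition over inputs and applying Fact~\ref{thm:jonathanplenio}.

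The missing key lemma is Lemma~\ref{prop:optsdp}: $\max\{\Tr(M\rho):\Tr(N\rho)\geq t\}=F(\|MNM\|_\infty,t)$, proved via Jordan's lemma. Your quantity $\Tr(P\Phi_c(\Pi\sigma^B_c\Pi))$ keeps only a diagonal block, so bounding it cannot constrain success. The cross terms between $\Pi$ and its complement contribute up to order $\sqrt{a\,w(1-w)}$, with $a=\|Q_c(I\otimes P)Q_c\|_\infty$ and $w=\Tr(\Pi\sigma^B)$. This is exactly why the answer is $F$ and not a linear bound.

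With the lemma and concavity of $F$, you get $\Tr(\Pi\sigma^B)\leq F\bigl(\mathbb{E}_{P,c}f_c(P),1-\eps_r\bigr)$, where $f_c(P)=\|Q_c(I\otimes P)Q_c\|_\infty$. The concentration you need is for this operator norm of a rank-$l$ projector on $B_1B_2$. Its Haar mean is at most $\frac{k}{d}+O(\sqrt{l/d})$, not $\frac{k}{d}$, and that $\sqrt{l/d}$ term is essential later.

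For the $\sqrt{m/d}$ term, the paper uses a H\"older change of measure with density at most $1/p(c)$, then Jensen to reach $\sqrt{H(c)/d}\leq\sqrt{m/d}$. Your union bound over the $2^m$ fixed projectors $Q_c$, via $\sum_c p(c|P)f_c(P)\leq\max_c f_c(P)$, would work equally well at this point. It works only after the SDP step, and only without any post-selection.

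\textbf{Two further corrections.} First, the $3\log(1/\gamma)$ does not come from gentle measurement or Uhlmann; for pure $\sigma^{AB}$, smoothing on $A$ or on $B$ is the same because the spectra coincide. It comes from having the eigenvalue bound for every $l$, with its $\sqrt{l/d}$ growth, and trading it off against a threshold $S$. By Lemma~\ref{lem:boundfidelity},
$\sum_{j\leq l}(\lambda_j-S)\leq F(\frac{k}{d}+A\sqrt{m/d}+A\sqrt{l/d},1-\eps_r)-Sl\leq F(\frac{k}{d}+A\sqrt{m/d},1-\eps_r)+O((Sd)^{-1/3})$.
Choosing $S=\Theta(1/(\gamma^3 d))$ and applying Lemma~\ref{lem:minentropydef} finishes the proof.

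Second, in the communication part, $\eps_r$ is a Haar average and guarantees nothing on a fixed orthogonal family $P_1,\dots,P_{\lfloor d/k\rfloor}$. So ``for every $j$ simultaneously'' is unjustified, and $\tau$ must be built from $\sigma^B$, not from the $P$-dependent conditional states. The paper twirls the family with a shared Haar unitary and invokes Fact~\ref{thm:nonsignalling}.

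Alternatively, drop the family entirely. The state $\tau=2^{-m}\sum_c\Tr_{B_1}(V_c\sigma^BV_c^\dagger)$ does not depend on $P$ and satisfies $\Tr(P\tau)\geq 2^{-m}\,\mathbb{E}_{c\sim p(\cdot|P)}\Tr(P\chi^{B_2}_{P,c})$. Averaging over Haar $P$ with $\mathbb{E}[P]=\frac{k}{d}I$ gives $m\geq\log\frac{d}{k}+\log(1-\eps_r)$, which even removes the floor.
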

The relaxed error measure considered in Theorem \ref{thm:lowerbound} is among the weakest measures of correctness for RSP protocols (weaker than the average error considered in many other works), and our lower bound holds for protocols satisfying even this weak condition.

Note that in order for $H_{\min}^{\delta+\gamma}(A)_\sigma$ to be meaningful, the smoothing parameter $\delta+\gamma$ has to be smaller than $1$. This means that in order to get a nontrivial lower bound on entanglement for fixed $k$, we must have $m \ll d$, which is what we expect. Moreover, for such an $m$, the function $F$ is only smaller than $1$ if $k/d < 1- \eps_r$, i.e., $\eps_r < 1-\frac{k}{d}$. This is also what we expect, because there is a trivial protocol achieving error $1-k/d$, in which there is no entanglement or communication and Bob always outputs the $d$-dimensional maximally mixed state.

Theorem \ref{thm:lowerbound} is an improvement over the result of \cite{Bab_Hadiashar_2020} in several ways. Firstly, the entanglement measure we use is stronger; in fact, one open question in their work was to find a lower bound on entanglement cost stronger than Schmidt rank. Secondly, for protocols with average error $\eps/2$, the communication cost + entanglement cost lower bound in their result is $\log d - 3\log(1/(1-\eps)) - O(1)$, so they do not get any nontrivial bound on entanglement at all for protocols with $\log d$ communication. On the other hand, our lower bound is nontrivial all the way up to $m$ only being smaller than $d$ by a constant factor (depending on the value of $k$). \cite{Bab_Hadiashar_2020} do show an additional entanglement lower bound of $\log d - \frac{1}{2}\log k - O(1)$ for protocols with optimal communication, but our lower bound also outperforms this.
In the special case where $k=1$ and we are considering pure states, our entanglement lower bound is an improvement over the result of \cite{bennett2005remote} due to our stronger measure.

Our result shows a connection between communication-efficient RSP and entanglement distillation, due to the fact that $H^\eps_{\min}(A)_\rho$ captures distillable entanglement for pure states. It is known due to \cite{WTB17} that there exists a one-way entanglement distillation protocol between Alice and Bob which distills $H^{\eps}_{\min}(A)_\sigma - 2\log(1/\eps) - O(1)$ many EPR pairs from their shared pure entangled state $\ket{\sigma}^{AB}$. Moreover, this rate of entanglement distillation is asymptotically optimal --- it was shown in \cite{BBPS96} that the number of EPR pairs that can be distilled from $\ket{\sigma}^{AB}$  in the asymptotic i.i.d. setting is $H(A)_\sigma$ (which $H^\eps_{\min}(A)_\sigma$ converges to). Previous results do not show such a connection, because it is not possible to distill EPR pairs up to the Schmidt rank of a pure entangled state.

\subsubsection{Upper bounds}
We also give efficient protocols for $(d,k)$-RSP that nearly match our communication and entanglement lower bounds. We give two different protocols: our first protocol is based on a previous protocol of \cite{bennett2005remote} for RSP pure states, and has near-optimal entanglement use but suboptimal communication. It is simple to analyze, and moreover, it only has one-sided error in the average case (this means the protocol either fails, or it succeeds and Bob outputs exactly the target state --- the failure probability is upper bounded on average). Our second protocol has the same entanglement cost but lower communication cost, although its error is no longer one-sided. This protocol is original to this work, and is more complicated to analyze than the first protocol. 
\begin{restatable}{theorem}{upperbound}
\label{thm:rsp-upper}
    For all $d \geq k$ and for all $\varepsilon > 0$, there exist $(d,k)$-RSP protocols, with the following parameters:
    \begin{enumerate}
    \item A protocol which uses a maximally entangled state of local dimension $d$, has communication $m =\log \frac{d}{k} + \log \log d + 2\log \frac{1}{\varepsilon_a} + O(1)$, and has one-sided error $\eps_a$ on average;
    \label{item:prot-1}
    \item A protocol which uses a maximally entangled state of local dimension $d$, has communication $m=\log\frac{d}{k} + 6\log\frac{1}{\eps} + 2\log\log\frac{1}{\eps} + O(1)$, and has worst-case error $\eps$. \label{item:prot-2}
    \end{enumerate}
\end{restatable}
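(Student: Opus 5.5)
Both protocols start from the standard maximally entangled state $\ket{\Phi}=\frac{1}{\sqrt d}\sum_i \ket{i}^A\ket{i}^B$. The key identity we use is that if Alice measures with a projector $Q$ and succeeds, Bob's state collapses to $\bar Q/\operatorname{rank}(Q)$, where $\bar Q$ is the complex conjugate (transpose) of $Q$. So to prepare $P/k$, Alice wants to apply $\bar P$ on her side. A single such projection succeeds only with probability $k/d$. The plan is to boost this success probability by using a family of unitaries known to both parties.

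\textbf{Protocol 1.} We generalize the Bennett et al.\ pure-state protocol. Alice and Bob fix in advance $N = \frac{d}{k}\cdot O(\log d)\cdot \frac{1}{\eps_a^2}$ unitaries $U_1,\dots,U_N$, drawn i.i.d.\ from Haar measure (a derandomized choice is discussed below).

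Alice performs the two-outcome POVM $\{U_j^T \bar P \bar U_j,\ I-\cdot\}$ sequentially for $j=1,\dots,N$. Because the shared state is maximally entangled, each attempt succeeds with probability $k/d$ on the current Bob state. Two steps need care:
\begin{itemize}
\item Sequential attempts disturb the state. To avoid this we instead use a single POVM $\{\frac{1}{c}\sum_j U_j^T \bar P\bar U_j\}$ normalized by a constant $c$.
\item We use the matrix Chernoff bound to show $\frac{d}{Nk}\sum_j U_j P U_j^\dagger$ is close to $I$ in operator norm, within $\eps_a$, with high probability over the $U_j$. This is where the $\log d$ term enters.
\end{itemize}
Alice's measurement is then $\{\alpha U_j^T\bar P\bar U_j\}_j\cup\{\text{fail}\}$ with $\alpha\approx\frac{d}{Nk}(1-\eps_a)$, which is a valid POVM. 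On outcome $j$, Alice sends $j$ using $\log N=\log\frac dk+\log\log d+2\log\frac1{\eps_a}+O(1)$ bits, and Bob applies $U_j^\dagger$. Conditioned on outcome $j$, Bob's state is exactly $P/k$, so the error is one-sided. The failure probability is $\le O(\eps_a)$. The Chernoff step holds with positive probability for a fixed $P$. To obtain a fixed unitary family that works for all $P$ at once, we would take a union bound over a net of $G(d,k)$; since the net is exponential in $dk$, this would cost too much. We therefore state the guarantee on average, taking the error as an average over $P$, or equivalently a shared random choice of unitaries absorbed into the entanglement.

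\textbf{Protocol 2.} Here we must remove the $\log\log d$ term and get worst-case error. The idea is to replace the operator-norm Chernoff bound, which inherently costs $\log d$, by a fidelity-based rejection-sampling argument.

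Alice uses $N = \frac{d}{k}\,\mathrm{poly}(1/\eps)$ unitaries. She measures the POVM $\{\beta U_j^T\bar P\bar U_j\}$, where now $\sum_j\beta U_j^T\bar P\bar U_j \le I$ is enforced only after clipping. Concretely, we set $M=\sum_j U_j P U_j^\dagger$, use elements $M^{-1/2}U_jPU_j^\dagger M^{-1/2}$ (a pretty-good-measurement construction), and transpose. Bob's post-measurement state given $j$ is $U_j^\dagger\,\cdot\,U_j$ applied to a conjugated version of $\overline{M^{-1/2}U_jPU_j^\dagger M^{-1/2}}$. We need its fidelity with $P/k$ to be $\ge 1-\eps$. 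This reduces to showing that $\frac{dk}{N}\cdot M$ has most of its weight, on the support of each $U_jPU_j^\dagger$, close to the identity.

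The route is a second-moment (Hilbert–Schmidt) bound rather than an operator-norm bound, using Weingarten/Haar moment estimates: $\mathbb{E}\,\Tr[(M-\frac{Nk}{d}I)^2 U_jPU_j^\dagger]$ is small relative to $k$. Markov's inequality then converts this into fidelity error $\eps$, at cost polynomial in $1/\eps$. That cost gives the $6\log\frac1\eps+2\log\log\frac1\eps$ overhead, with the $\log\log$ coming from a final union/repetition step to make the error worst-case. Worst-case error over $P$ follows from unitary invariance: Alice first applies a shared random unitary twirl, so the error for every $P$ equals the average.

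This moment computation, and the conversion from a Hilbert–Schmidt deviation of $M^{-1/2}$ into a fidelity bound on each branch, is the main obstacle. The likely approach is to write $M^{-1/2}=(\frac{Nk}{d})^{-1/2}(I+\Delta)^{-1/2}$ and control it by expanding in $\Delta$ on a high-probability event where $\|\Delta\|\le 1/2$ restricted to the relevant subspace.
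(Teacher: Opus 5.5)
Your item 1 is essentially the paper's Kraus-operator protocol, with one fix needed. A $P$-independent normalization $\alpha$ does not give a valid POVM for those $P$ where $\|\frac{d}{Nk}\sum_j U_jPU_j^\dagger\|_\infty>1/(1-\eps_a)$. Normalize by $\|M\|_\infty$ instead, as the paper does. The failure probability is then $1-1/\|M\|_\infty\le\|M\|_\infty-1$, whose expectation over $P$ and $\mathcal U$ is bounded by integrating the Chernoff tail, after which a good fixed family exists. Also drop the ``shared randomness absorbed into the entanglement'' option.

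The genuine gaps are in item 2, and the first is the worst-case step. A shared random unitary twirl is not available: the protocol may use only the $d$-dimensional maximally entangled state and one-way classical communication. A Haar twirl needs unbounded shared randomness, and any randomness extracted from the $\log d$ ebits is paid for with the entanglement the protocol needs. The missing idea is to \emph{derandomize} the twirl:
\begin{enumerate}
\item Fix $N'=O(\delta^{-4}\log\frac1\delta)$ unitaries in advance.
\item Alice rounds each $U_iPU_i^\dagger$ to a $\delta/4$-net point and picks the index on which the average-case protocol has smallest error.
\item She runs the protocol on that point and sends the index, costing $4\log\frac1\delta+\log\log\frac1\delta+O(1)$ bits; Bob undoes $U_i$.
\end{enumerate}
A good family exists for the following reason. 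The inputs with error below $\eps_a+\delta/2$ have Haar measure at least $\delta/2$, by Markov. The distance to this set is $\frac1{\sqrt{2k}}$-Lipschitz in Frobenius norm. So concentration on $G(d,k)$ makes all $N'$ random conjugates of a fixed net point miss the good set's neighbourhood with probability $\exp(-\Omega(dk\delta^4N'))$. The factor $dk$ in this exponent beats the $(C/\delta)^{2k(d-k)}$ net size in a union bound, contrary to your remark under Protocol 1 that such a union bound is unaffordable.

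The second gap is that the average-case analysis of your pretty-good-measurement protocol is left open, and the suggested route fails as stated. Write $P_j=U_jPU_j^\dagger$. Bob's conditional state is proportional to $M^{-1/2}P_jM^{-1/2}$, with fidelity $(\Tr P_jM^{-1/2})^2/(k\Tr P_jM^{-1})$ to $P_j/k$. This depends on $M$ on all of $\mathbb{C}^d$, not only on the compression of $\Delta$ to the support of $P_j$. So expanding in $\Delta$ needs a global bound such as $\|\Delta\|_\infty\le 1/2$. The tool you have for that, matrix Chernoff, requires $Nk/d\gtrsim\log d$, which brings back the $\log\log d$ you set out to remove. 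Sharper random-matrix estimates might avoid this, but you would have to supply them. Nothing in your sketch pins down $6\log\frac1\eps+2\log\log\frac1\eps$ either.

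The paper sidesteps concentration entirely. Alice performs sequential damped measurements $\{\sqrt\eta Q_i,\sqrt{I-\eta Q_i}\}$ with $Q_i=U_i\overline{P}U_i^\dagger$ and $U_i$ from a $1$-design. Two first-moment lemmas drive the analysis:
\begin{enumerate}
\item A failed weak measurement lowers the probability-weighted fidelity with $\Phi_+$ by at most $c^2k/d$, where $c=1-\sqrt{1-\eta}\le\eta$.
\item A successful one preserves that fidelity in expectation.
\end{enumerate}
Together these give expected fidelity at least $1-(1-\frac{\eta k}{d})^N-\eta$. With $\eta=\eps_a^2/2$ and $N=\lceil\frac{2d}{\eps_a^2k}\ln\frac{2}{\eps_a^2}\rceil$, this costs $\log\frac dk+2\log\frac1{\eps_a}+\log\ln\frac1{\eps_a}+O(1)$ bits. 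The reduction with $\eps_a=\delta=\eps/2$ then gives exactly the stated bound.
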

The first protocol in Theorem \ref{thm:rsp-upper} has near-optimal entanglement cost, as per the lower bound in Theorem \ref{thm:lowerbound}, but its communication has an additional factor of $\log\log d$; the second protocol avoids this factor. It is possible to give a worst-case correct version of the first protocol, and in that case it no longer has one-sided error, and therefore has no advantages over the second protocol.

It is worth noting that the entanglement and communication cost of the protocols above nearly match the lower bounds in Theorem \ref{thm:lowerbound}, even though the lower bounds work for protocols which are only correct on average. In fact, the way we design the second protocol is by designing a protocol with average-case correctness first (because this is easier). Then we convert this to a protocol that is worst-case correct via the following average-case to worst-case reduction, which works for generic $(d,k)$-RSP protocols, and may be of independent interest.
\begin{restatable}{theorem}{avtoworst}
\label{thm:worstcase}
Suppose there exists a $(d,k)$-RSP protocol with average-case error $\eps_a$, communication $m$, and which uses a shared entangled state $\sigma^{AB}$. Then, for all $\delta > 0$, there exists a $(d,k)$-RSP protocol that achieves worst-case error $\eps_a + \delta$, using the same entangled state $\sigma^{AB}$ as $\mathcal{P}$, and which communicates $m+4\log(1/\delta)+\log\log(1/\delta)+O(1)$ bits.
\end{restatable}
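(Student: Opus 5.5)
The plan is a random-unitary symmetrization, the same idea that turns average-case guarantees into worst-case ones for compression protocols. The ensemble $G(d,k)$ is unitarily invariant: its natural measure is the one induced by Haar-random $U$ acting on a fixed rank-$k$ projector. So the average error $\eps_a$ of $\mathcal{P}$ is an average over $P = U P_0 U^\dagger$ with $U$ Haar. For a fixed target $P$, if Alice applied a Haar-random $U$ and ran $\mathcal{P}$ on $U^\dagger P U$, Bob could undo $U$ and we would get expected error $\eps_a$ for every $P$. The difficulty is that Bob does not know $U$, and sending it costs unbounded communication.

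To fix this, I would replace the Haar measure by a finite set of unitaries $U_1,\dots,U_N$ with $N = \mathrm{poly}(1/\delta)$. Alice and Bob would agree on this set in advance; it is a classical object and consumes no entanglement. On input $P$, Alice would pick $i$ uniformly at random and run $\mathcal{P}$ on $U_i^\dagger P U_i$, additionally sending $i$ to Bob. Bob would then apply $U_i$ to his output. This costs $\log N$ extra bits. The amortized error for target $P$ is $\frac1N\sum_i e(U_i^\dagger P U_i)$, where $e(\cdot)\in[0,1]$ is the error function of $\mathcal{P}$ (fidelity or trace-distance based, whichever the paper uses for worst-case error; convexity makes the mixed output fine).

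The key step is choosing the $U_i$ so that for \emph{every} $P\in G(d,k)$ this empirical average is at most $\eps_a+\delta$. I would use a probabilistic method with a net. For each fixed $P$, sampling the $U_i$ i.i.d. Haar gives independent bounded variables with mean $\eps_a$, so Hoeffding yields a failure probability of $e^{-2N\delta^2}$. A union bound over a net of $G(d,k)$, however, has size $\exp(\Theta(dk\log(1/\delta)))$, which would force $N$ to depend on $d$. That contradicts the target bound $4\log(1/\delta)+\log\log(1/\delta)$, which is independent of $d$.

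So I expect the main obstacle to be getting a $d$-independent overhead, and a pure sampling argument will not suffice. The alternative I would try is an iterative or adaptive amplification that uses $\mathcal{P}$'s own structure. Alice knows $P$ and can simulate the protocol's error classically. She samples $U$ from a fixed shared list of $N$ random unitaries, retrying until she finds an index $i$ with $e(U_i^\dagger P U_i)\le \eps_a+\delta/2$.

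By Markov's inequality, a random $U$ satisfies this with probability at least about $\delta/2$, but only on average over $P$, not for every $P$. Concentration of the Haar measure helps here: for fixed $P$, the distribution of $U^\dagger P U$ is exactly Haar on $G(d,k)$, whatever $P$ is. Hence the fraction of good $U$ is at least $\Omega(\delta)$ for every $P$, with no union bound needed. For this I must use the shared randomness rather than a fixed list. I would obtain shared randomness for free by having the list be i.i.d. Haar and arguing in expectation over the list. Then I would derandomize only the failure event: with $N = O(\delta^{-1}\log(1/\delta))$ samples, the probability of no good index is at most $\delta/2$. If Alice finds no good index, she sends an abort symbol, which costs error at most $\delta/2$.

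The expectation over the list gives error at most $\eps_a+\delta$ for every $P$. By convexity, some fixed list works for a given $P$, but not necessarily uniformly over all $P$. To handle this, I would instead define the worst-case guarantee with respect to public shared randomness, assuming the model allows it, or with respect to a fixed choice via a more careful second-level Hoeffding over a small set. The bookkeeping of the index encoding, repeated to reach error $\delta$, is what I expect to produce the $4\log(1/\delta)+\log\log(1/\delta)$ count.
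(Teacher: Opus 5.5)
There is a genuine gap, and it sits exactly at the obstacle you flag; neither of your fallbacks closes it.

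\begin{itemize}
\item Public shared randomness is not available. The new protocol may use only $\sigma^{AB}$, which running $\mathcal{P}$ consumes, and the paper stresses that the reduction needs no shared randomness.
\item A Haar list fixed in advance is not shared randomness. Averaging over it only gives a good list for each $P$ separately, as you note.
\item A ``second-level Hoeffding over a small set'' cannot work, because $G(d,k)$ has no small net: every $\delta$-net has size $\exp(\Theta(dk\log(1/\delta)))$.
\item Your observation that $U^\dagger P U$ is Haar for every fixed $P$ is unitary invariance, not concentration. With Markov it only gives each sampled unitary success probability $\Omega(\delta)$. A list of $N$ unitaries then fails for a given $P$ with probability about $(1-\Omega(\delta))^N$. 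Beating the net size forces $N \gtrsim dk\log(1/\delta)/\delta$, i.e.\ a $\log d$ overhead.
\item Any union bound over a net has a further problem. The error function $\varepsilon(\cdot)$ of an arbitrary $\mathcal{P}$ need not be continuous, so good behaviour at net points says nothing about nearby inputs.
\end{itemize}

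So the count $4\log(1/\delta)+\log\log(1/\delta)$ is not derived from anything in your argument.

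The missing idea is genuine measure concentration on $G(d,k)$. It should be applied not to $\varepsilon(\cdot)$ but to the distance function $f_A(Q)=\inf_{R\in A}\|Q/k-R/k\|_{\tr}$, where $A$ is the set of inputs on which $\mathcal{P}$ has error below $\eps_a+\delta/2$. By Markov, $A$ has Haar measure at least about $\delta/2$.

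\begin{itemize}
\item $f_A$ is Lipschitz however irregular $\varepsilon(\cdot)$ is. So Lemma~\ref{lem:isoperimetry} gives $\Pr[f_A(Q)\geq t]\leq \exp(-c''dk\,t^2\mu(A)^2)$.
\item Hence a single Haar-random $U_i$ fails to bring $U_iP'U_i^\dagger$ within $O(\delta)$ of $A$ with probability only $\exp(-\Omega(dk\delta^4))$. This is exponentially small in $dk$, not merely $1-\Omega(\delta)$.
\item Your retry-until-good structure is essentially right. Alice must take the best index rather than a random one, since an average of a non-Lipschitz error function need not concentrate, whereas the event ``some sample lands near $A$'' does.
\item Alice must also run $\mathcal{P}$ on a nearby good projector rather than on $U_iPU_i^\dagger$ itself. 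The paper uses the best net point in $B(U_iPU_i^\dagger,\delta/4)$. This costs $O(\delta)$ extra trace distance after Bob undoes $U_i$.
\item Because $f_A$ is Lipschitz, checking net points $P'$ then controls all inputs, which makes the union bound legitimate.
\end{itemize}

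Now $N$ independent unitaries all fail for a fixed net point with probability $\exp(-\Omega(Ndk\delta^4))$. Once $N=O(\delta^{-4}\log(1/\delta))$, this beats $|\mathcal{N}|\leq (C/\delta)^{2k(d-k)}$, because the factors of $dk$ cancel. The overhead is therefore $\log N=4\log(1/\delta)+\log\log(1/\delta)+O(1)$, independent of $d$. The exponent $4$ comes from $t^2\mu(A)^2$ with $t,\mu(A)=\Theta(\delta)$.
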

The average-case to worst-case reduction may also be applied to the first protocol, but as stated before, the resulting protocol has no advantages over the second protocol.

We stress that this equivalence between average-case error and worst-case error for $(d,k)$-protocols holds even in the absence of shared randomness. This may be surprising to some considering that, in communication complexity, in the absence of shared randomness, average-case and worst-case error typically give rise to very different complexity measures.


We give a comparison between our protocols and other protocols for $(d,k)$-RSP in the literature in Table \ref{tab:comparison}, where it can be seen that our protocols are significantly better in terms of entanglement cost compared to previous protocols, and comparable in terms of communication cost. Note that most previous works don't consider $(d,k)$-RSP explicitly: we derive the parameters by using their protocols for state compression, state splitting, etc. to derive $(d,k)$-RSP protocols. In addition to these, we can also compare our protocol in the special case $k=1$ to the RSP protocol for pure states given in \cite{bennett2005remote}. Our first protocol in this case is essentially the same as theirs: their communication is $\log d + \log\log d + O(\log(1/\eps))$, although they get a better constant in the $O$ than us.
\begin{table}[!ht]
\centering
\begin{tabular}{|c|c|c|c|}
\hline
\textbf{Protocol} & \textbf{Entanglement} & \textbf{Communication} & Error \\
\hline
Rejection sampling & \multirow{2}{*}{$O\left(\frac{d}{k}\cdot\log\frac{d}{\eps}\right)$} & \multirow{2}{*}{$\log\frac{d}{k} + O(\log\log\frac{1}{\eps})$} & \multirow{2}{*}{Worst-case} \\
\cite{JRS03, BNR18} & & & \\
\hline
Efficient decoupling & \multirow{2}{*}{$O\left(\frac{1}{\eps^2}\cdot\log\frac{d}{\eps}\right)$} & \multirow{2}{*}{$\log \frac{d}{k} + O\left(\log\frac{1}{\eps}\right)$} & \multirow{2}{*}{Average-case} \\
\cite{AJ22} & & & \\
\hline
$\eps$-net protocol & \multirow{2}{*}{None} & \multirow{2}{*}{$\Theta(d\log\frac{1}{\eps})$} & Relaxed \\
\cite{Bennett_2001} & & & average-case \\
\hline
Kraus operator protocol & \multirow{2}{*}{$\lceil\log d \rceil$} & \multirow{2}{*}{$\log\frac{d}{k} + \log\log d + O\left(\log\frac{1}{\eps}\right)$}  & Average-case, \\
Theorem \ref{thm:rsp-prot-1} & & & one-sided \\
\hline
Damped rejection sampling & \multirow{2}{*}{$\lceil\log d\rceil$}  & \multirow{2}{*}{$\log\frac{d}{k} + O\left(\log\frac{1}{\eps}\right)$} & \multirow{2}{*}{Worst-case} \\
Theorem \ref{thm:rsp-prot-2} & & & \\
\hline
\end{tabular}
\caption{Comparison of entanglement cost (in ebits) and communication cost between previously-known protocols, and our protocols, for $(d,k)$-RSP}
\label{tab:comparison}
\end{table}

We also note that our upper bound and lower bound together show that for pure entangled states, usefulness for communication-efficient $(d,k)$-RSP, and ability to distill $\log d - O(1)$ EPR pairs, are one and the same. The lower bound shows that any pure state that is usable for doing $(d,k)$-RSP with $o(d)$ communication has nearly $\log d$ ebits of distillable entanglement. Our first protocol then shows that any pure state from which $\log d$ ebits of entanglement can be distilled is usable for efficient RSP, since one can distill EPR pairs first, and then use our first protocol.\footnote{This does additionally require that the entanglement distillation protocol is efficient in communication, but it can be checked that this is the case for the protocol in \cite{WTB17}.}

\subsubsection{Applications}
\paragraph{Visible compression.} First, as an application of our entanglement lower bound for $(d,k)$-RSP protocols, we rederive a result of \cite{Bab_Hadiashar_2020} on the impossibility of visible compression for an ensemble of flat states. This is the main result of \cite{Bab_Hadiashar_2020}, which supersedes a number of lower bounds in the literature for similar tasks; we can prove an arguably stronger version of the \cite{Bab_Hadiashar_2020} incompressibility result, and therefore supersede these previous results as well.
\begin{restatable}{theorem}{incompress}
\label{cor:incompress}
For any $\eta > 0$, there exists $d_0$ and $C > 0$ such that for all $(k,d,\eps)$ with $d \geq d_0$ and $\eps < 1-
\frac{k}{d}-\eta$, any visible $(d',\eps)$-compression scheme for the ensemble of all flat states in $G(d,k)$ must satisfy 
\[\log d' \geq \log d - C.\]
This means that this ensemble cannot be compressed by more than a constant number of qubits.
\end{restatable}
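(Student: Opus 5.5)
The plan is to reduce visible compression to RSP and then invoke the entanglement lower bound of Theorem \ref{thm:lowerbound}. A visible $(d',\eps)$-compression scheme for $G(d,k)$ consists of an encoding that, given the classical description of $P$, outputs a state on $d'$ dimensions, possibly using shared entanglement $\ket{\sigma}^{AB}$ and classical communication. Bob then decodes to a state within error $\eps$ of $P/k$. We may assume the compressed message is a $d'$-dimensional quantum register with no extra classical side channel; if the definition allows a side channel, it is absorbed into the same counting. So the scheme first turns into a $(d,k)$-RSP protocol.

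\textbf{Step 1 (reduction).} Alice teleports the $d'$-dimensional message to Bob using an additional maximally entangled state of local dimension $d'$ and $2\log d'$ bits of classical communication. Bob then decodes as in the compression scheme. This gives a $(d,k)$-RSP protocol with shared state $\ket{\sigma}^{AB}\otimes\ket{\Phi_{d'}}$, communication $m = 2\log d'$ (plus whatever classical messages the scheme already uses), and error $\eps$ under the scheme's error notion. That notion is at least as strong as relaxed average error, so $\eps_r\le\eps$.

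\textbf{Step 2 (apply the lower bound).} We may assume $d'\le d$, since otherwise there is nothing to prove. Then $m\le 2\log d$, so $\sqrt{m/d}=O(\sqrt{\log d/d})\to 0$, and for $d\ge d_0$ this term is below any fixed margin. Since $\eps<1-k/d-\eta$, we get $1-\eps_r > k/d+\eta$. The first argument of $F$ is therefore at most $k/d+\eta/2$, while the second exceeds $k/d+\eta$, and the truncated fidelity gives $\delta\le 1-c(\eta)$ for some $c(\eta)>0$. This uniformity in $k$ and $d$ is the step I expect to be the main obstacle. It needs a quantitative property of $F$ from Section \ref{sec:prelim}: roughly, $F(a,b)\le 1-\Omega((b-a)^2)$, uniformly over $a\in[0,1]$ and for $b-a\ge\eta/2$. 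If $F$ is something like $\sqrt{ab}+\sqrt{(1-a)(1-b)}$ truncated, this follows from a Hellinger-type bound. Choosing $\gamma=c(\eta)/2$ gives $\delta+\gamma<1$, and Theorem \ref{thm:lowerbound} yields
\[H^{\delta+\gamma}_{\min}(A)_{\sigma\otimes\Phi_{d'}}\ge \log d - 3\log(1/\gamma)-O(1) = \log d - C_1(\eta).\]

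\textbf{Step 3 (pay for the entanglement).} Visible compression allows arbitrary entanglement $\sigma$, so a bound on total entanglement does not by itself bound $d'$. This is where the structure of the scheme has to be used. The key point is that in visible compression Bob's decoding acts only on his half of $\sigma$ and the received message; without the message, Bob's state is independent of $P$. I would therefore rerun the argument behind Theorem \ref{thm:minentropy} with the smoothed min-entropy of Bob's received system conditioned on his share of $\sigma$, rather than the marginal entanglement. The conditional entropy of the $d'$-dimensional message is at most $\log d'$. Alternatively, I would pass the whole of $\sigma$ to the encoder's randomness and observe that $\sigma$ contributes only through the $\Phi_{d'}$ part, by giving a monotonicity argument that $H_{\min}$ of the relevant register is at most $\log d'$.

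Either way, the conclusion is $\log d'\ge\log d - C_1(\eta)-O(1)$, and we set $C=C_1(\eta)+O(1)$. If only the non-entanglement-assisted visible model is intended, Step 3 is immediate: the entanglement is just $\Phi_{d'}$, whose smoothed min-entropy is at most $\log d'$, and the result follows directly.
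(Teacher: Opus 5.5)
For the model the paper actually defines (visible compression \emph{without} entanglement: $E$ is any measurable map from the description of $P$ to a state on $\mathbb{C}^{d'}$, $D$ is CPTP, and there is no shared state), your Steps 1--2 plus your final sentence are exactly the paper's proof. You teleport $E(P)$ using $\log d'$ ebits and $2\log d'$ bits, then apply Theorem~\ref{thm:minentropy} with $m\le 2\log d$. You choose $d_0$ so that the $\sqrt{m/d}$ term is below $\eta/2$, keep $\delta$ away from $1$, and compare with the smoothed min-entropy of $\Phi_{d'}$.

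Your uniform bound $\delta\le 1-\eta^2/4$ (Fuchs--van de Graaf applied to the two Bernoulli distributions) is something the paper leaves implicit. The paper only notes $\delta<1$ and sets $\gamma=(1-\delta)/2$; your bound is what makes $C$ depend on $\eta$ alone.

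One small correction: under the characterization of Lemma~\ref{lem:minentropydef}, which is what the proof of Theorem~\ref{thm:minentropy} actually controls, $H^{\delta+\gamma}_{\min}$ of $\Phi_{d'}$ equals $\log d'+\log\frac{1}{1-\delta-\gamma}$, not ``at most $\log d'$''. This only shifts $C$ by $O_\eta(1)$.

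Step 3, however, should be deleted rather than kept as an alternative, because in an entanglement-assisted visible model the statement is false. No rerun of the argument with conditional min-entropy can prove it. With unlimited shared EPR pairs, Alice can run the damped rejection sampling protocol of Theorem~\ref{thm:rsp-upper}, which uses $\log\frac{d}{k}+O(\log\frac{1}{\eps})$ classical bits. She can then superdense-code those bits into a message register with $\log d'\approx\frac12\log\frac{d}{k}+O(\log\frac{1}{\eps})$. For $k=1$ and constant $\eps$ this gives $\log d-\log d'\approx\frac12\log d$, which is unbounded.

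So your heuristic that $\sigma$ ``contributes only through the $\Phi_{d'}$ part'' is exactly what fails: pre-shared entanglement genuinely shrinks the quantum message. This is why the theorem is, and must be, stated for compression without entanglement, where your final sentence is the whole argument.
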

Visible $(d',\eps)$-compression schemes are formally defined in Section \ref{sec:incompress}, but it essentially means a compression scheme where the compressor has a full description of the state to be compressed, rather than only having the ability to apply a CPTP map to a copy of the state. Our Theorem \ref{cor:incompress} is a bit stronger than the result of \cite{Bab_Hadiashar_2020} because we show incompressibility for any nontrivial error tolerance, whereas \cite{Bab_Hadiashar_2020} only gave this result for $\eps \leq \frac{1}{2}$. The only downside of our result compared to theirs is that they prove the existence of a small finite ensemble of flat states which is incompressible, whereas our ensemble is essentially the whole of $G(d,k)$ (which can be discretized into an extremely large but finite ensemble via an $\eps$-net). We note however that the proof technique of our entanglement lower bound Theorem \ref{thm:minentropy} (see Lemma \ref{prop:majorizes}) can in fact yield the existence of an ensemble that isn't too large (of size polynomial in $d$, let's say) and which is still incompressible.

\paragraph{Communication complexity of equality.} Additionally, as an application of our efficient $(d,k)$-RSP protocols, we give an entanglement-optimal bounded error protocol for the well-studied equality function on $n$ bits. In $\mathrm{EQ}_n$, Alice and Bob are given inputs $x, y \in \{0,1\}^n$ and want to communicate to determine whether these strings are equal or not. In the entanglement-assisted quantum communication setting, Alice and Bob pre-share entanglement and want to communicate classically in order to do this.

It is known that equality can be solved with $O(1)$ communication with either shared randomness or shared entanglement. However, the most well-known entanglement-assisted protocol for this uses $\log n$ ebits of shared entanglement (and these ebits are just used to generate shared randomness and then implement the best classical protocol).

We show that it is possible to halve the number of EPR pairs by using them in a genuinely quantum way. The $n$-dependence in the result is optimal, as it is known that $\frac{1}{2} \log n + O(1)$ qubits of communication are both necessary and sufficient to compute the equality function with constant error probability \cite{lalonde2023tightboundsrandomizedquantum}. Our result is a reproof of their upper bound, using our efficient RSP protocol.
\begin{restatable}{theorem}{eq}
\label{thm:equpper}
For all $\varepsilon > 0$, there exists an entanglement-assisted protocol for $\mathrm{EQ}_n$ with worst-case error probability $\varepsilon$ which uses $\frac{1}{2}\log n + O(\log\frac{1}{\varepsilon})$ shared EPR pairs and $O(\log\frac{1}{\varepsilon})$ classical communication.
\end{restatable}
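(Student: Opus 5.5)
We use the damped rejection sampling protocol of Theorem \ref{thm:rsp-upper}, item \ref{item:prot-2}, as a subroutine. The plan is to reduce $\mathrm{EQ}_n$ to a quantum fingerprinting test in which Bob receives, via RSP, a flat state encoding Alice's input.

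\textbf{Step 1: choose a code.} Fix an error-correcting code $E:\{0,1\}^n\to\{0,1\}^N$ with $N=O(n)$ and constant relative distance. Set $d=\Theta(\sqrt{n})$, or more precisely $\log d = \frac12\log n + O(\log\frac1\eps)$. Map each $x$ to a rank-$k$ projector $P_x$ on $\mathbb{C}^d$, with $k=d/2$ or some constant fraction of $d$, such that $\Tr(P_xP_y)/k$ is bounded away from $1$ whenever $x\neq y$.

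Dimension counting makes this plausible. $G(d,d/2)$ has real dimension $\Theta(d^2)=\Theta(n)$. So a random collection of $2^n$ projectors of rank $d/2$ should have pairwise normalized overlaps at most $1/2+c$. This follows from concentration of measure on the Grassmannian: for a fixed $P_y$, the quantity $\Tr(P_xP_y)/k$ concentrates around $1/2$ with deviations of order $1/d$ and tails like $e^{-\Omega(d^2 t^2)}$. A union bound over $2^{2n}$ pairs then works once $d^2\geq Cn$.

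\textbf{Step 2: prepare the state.} Alice runs the $(d,k)$-RSP protocol for $P_x/k$ with error $\eps'$. This uses $\lceil\log d\rceil=\frac12\log n+O(1)$ ebits and $\log\frac dk+O(\log\frac1{\eps'})=O(\log\frac1{\eps'})$ bits of communication, since $d/k=2$.

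\textbf{Step 3: Bob's test.} Bob measures $\{P_y, I-P_y\}$ on the state he received and sends the outcome bit.
\begin{itemize}
\item If $x=y$, he sees outcome $P_y$ with probability at least $1-\eps'$.
\item If $x\neq y$, he sees it with probability at most $1/2+c+\eps'$.
\end{itemize}
This gives one-sided constant-gap acceptance. To reach error $\eps$, repeat $O(\log\frac1\eps)$ times with fresh entanglement, which gives the claimed $\frac12\log n+O(\log\frac1\eps)$ ebits after absorbing factors. A better option is to amplify inside a single round: take $k$ fixed and $d/k$ of size $\mathrm{poly}(1/\eps)$, so that overlaps for $x\neq y$ are at most about $\eps$. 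Then $\log d=\frac12\log n+O(\log\frac1\eps)$ and the communication is $\log\frac dk+O(\log\frac1\eps)=O(\log\frac1\eps)$, which matches the theorem's form exactly.

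\textbf{Main obstacle.} The hard step is Step 1: showing that $2^n$ flat states on $\mathbb{C}^d$ with $d=O(\sqrt n\,\mathrm{poly}(1/\eps))$ have pairwise overlaps $\Tr(P_xP_y)/k\le \frac{k}{d}+\eps$. I would try a random construction with Lipschitz concentration of $U\mapsto \Tr(P U P_0 U^\dagger)$ under the Haar measure on $U(d)$. Its Lipschitz constant is $O(\sqrt{k})$ after normalization, which gives tails $\exp(-\Omega(d\,k\,t^2))$. The union bound therefore needs $dk\,\eps^2\gtrsim n$. Choosing $k$ of order $d\eps$ yields $d^2\eps^3\gtrsim n$, so $\log d=\frac12\log n+O(\log\frac1\eps)$, as required. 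The protocol is only existential, so Alice and Bob agree on the family in advance.
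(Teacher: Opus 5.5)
Your final argument (the last paragraph) is essentially the paper's, and it is correct: random $P_x\in G(d,k)$ with $k=\Theta(\eps d)$, Lipschitz concentration of $\frac1k\Tr(P_xP_y)$ and a union bound giving $d^2\eps^3\gtrsim n$, then a worst-case RSP of $P_x/k$ followed by Bob's $\{P_y,I-P_y\}$ measurement. Invoking item~2 of Theorem~\ref{thm:rsp-upper} is the right choice, since the Kraus operator protocol named in the paper's text is only average-case and has an extra $\log\log d$ term in its communication. Fix three slips, though: repeating with fresh entanglement multiplies the $\frac12\log n$ ebit cost by $O(\log\frac1\eps)$ rather than adding to it; a literally constant $k$ would force $d\gtrsim n/\eps^2$ and lose the factor $\frac12$; and the normalized Lipschitz constant is $1/\sqrt{k}$, not $O(\sqrt{k})$ (your stated tail $e^{-\Omega(dkt^2)}$ already uses the right value).
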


It should be noted that our result is slightly worse than the upper bound of \cite{lalonde2023tightboundsrandomizedquantum} in that the $\varepsilon$ dependence we get is a bit worse. Moreover, their protocol has perfect completeness, i.e. the protocol always declares Alice and Bob's inputs to be equal if they are, which is not true of our protocol. The point here is to show that this result is a simple corollary of our RSP protocols.

\subsection{Our techniques}
\subsubsection{Lower bound}
Our communication lower bound is easy to derive: it just follows from the fact that flat states in $G(d,k)$ can be used to send $\log\frac{d}{k}$ bits of information. We'll focus our attention on the proof of entanglement lower bound. Due to the fact that the entanglement min-entropy measure has not been considered before in the literature for proving entanglement lower bounds for RSP, our techniques here are quite different from existing techniques in the literature.

In order to derive a lower bound on $H_{\min}^\eps(A)_\sigma$, we need to control the Schmidt spectrum of the initial shared state $\ket{\sigma}^{AB}$. The correctness condition of the protocol only tells us about the state of one of Bob's registers at the end of the protocol, so we need to connect the spectrum of the initial state $\sigma$ of the protocol to the spectrum of ensemble of final states $(p(c|P), \chi_{P,c})$ for each possible message $c$ Alice can send to Bob. Here $p(c|P)$ is the probability of the message $c$ when Alice's input is $P$, and $\chi_{P,c}$ is the final state of the protocol for this input and message. We can connect the spectrum of $\sigma$ to the spectrum of this ensemble via the well-known result in \cite{NielsenVidal2001}, which says that since the transformation is done via LOCC, the Schmidt spectrum of $\sigma$ must be majorized by the ensemble average of the Schmidt spectra of $\chi_{P,c}$.

Moreover, we need to make use of the fact that the marginals of $\chi_{P,c}$ on Bob's target register are on average (over $c$ and $P$) close to $\frac{P}{k}$. We'll do this by employing the trick used to distill entanglement from a teleportation protocol: we'll run the protocol in superposition over Alice's input $P$, though this superposition will be over all of $G(d,k)$ instead of just a basis.\footnote{For technical reasons, we actually need to do a finite version of this infinite superposition.} This will let us compare the spectrum of $\sigma$ to the spectrum of the ensemble $(p(c), \rho_c)$, where $\rho_c$ is the expected value of $\chi_{P,c}$ with $P$ drawn from the Haar measure in $G(d,k)$ conditioned on the message being $c$. This is because the reduced state on Bob's side for message $c$ actually is $\rho_c$ in the superposition protocol. These states are much easier to handle than the $\chi_{P,c}$ states for arbitrary $P$.

We notice that if the averaging over $P$ in the definition of $\rho_c$ had been according to the Haar measure instead of the Haar measure conditioned on $c$, then the spectrum of $\rho_c$ would straightforwardly have been like the maximally mixed state $\frac{I}{d}$ (which has $\log d$ min-entropy). This is because averaging $\frac{P}{k}$ over the Haar measure gives $\frac{I}{d}$, and correctness requires that Bob's register in $\chi_{P,c}$ is close to $\frac{P}{k}$.\footnote{Relaxed average-case correctness requires something a bit weaker than this, but we'll ignore this difference for now.} Now the actual averaging in $\rho_c$ is conditioned on a particular $c$, but this can't make things too bad. This is because $c$ is $m$-bits long, and conditioning on a particular $c$ changes probabilities by a multiplicative factor of at most $2^m$. This is not too bad when $m$ is not too large. Using some tricks, it is possible to upper bound the spectrum of $\rho_c$ (for most $c$) by what the spectrum would have been with Haar-averaging, along with an additive factor of $O(\sqrt\frac{m}{d})$. This in turn means that the spectrum of $\rho_c$ is close to the spectrum of the maximally mixed state, with the closeness or smoothing term having a dependence on $O(\sqrt{\frac{m}{d}})$. This lets us lower bound the $H^{\gamma+\delta}_{\min}$ of $\rho_c$ on average over $c$, which lets us get the lower bound for $\sigma$.

\subsubsection{Upper bound}
Our first protocol is a generalization of the protocol of \cite{bennett2005remote} to flat states. It is not too difficult to do this generalization, and we get the same entanglement and communication cost as \cite{bennett2005remote} for pure states for this reason. We focus on the second protocol, which we call damped rejection sampling, which is more novel. We will first describe the average-case version of our second protocol, and then describe the average-case to worst-case reduction.

\paragraph{Damped rejection sampling, average-case.} This protocol can be described as a more refined version of the rejection sampling protocol that has appeared many times in the literature \cite{JRS03, BNR18, Bab_Hadiashar_2020}. The idea behind rejection sampling is the following: suppose Alice and Bob share $N$ copies of the maximally entangled state $\ket{\Phi_+}$ on $\mathbb{C}^d\times\mathbb{C}^d$. If Alice, on input $P$, measures in the basis $\{\overline{P}, I-\overline{P}\}$ on her half of each of these copies, and gets the outcome corresponding to $\overline{P}$ on one copy, then due to the nature of the maximally entangled state, Bob's residual state on his half of the corresponding copy will be $\frac{P}{k}$. Of course, Bob does not know which copy Alice succeeded on, so Alice has to send the index $i\in[N]$ in order for him to output that register. The probability of Alice not obtaining outcome $\overline{P}$ in any of the copies, and therefore this procedure failing, is $(\Tr((I-\overline{P})\state{\Phi_+}))^N = \left(1-\frac{k}{d}\right)^N$, due to $P$ being a rank-$k$ matrix. Taking $N=\frac{d}{k}\ln\frac{1}{\eps}$ makes this probability at most $\eps$, and gives the communication and entanglement cost of this protocol that we've mentioned previously.

We notice that the most important thing in this analysis was the fact that the probability of Alice not succeeding in each iteration was $1-\frac{k}{d}$. This is true if Alice measures a fixed projector $\overline{P}$ on her half of a maximally entangled state, but it's also true if Alice measures a uniformly random rank-$k$ projector on her half of any state. In particular, if she kept on measuring in the basis $\{\overline{P}_i, I - \overline{P}_i\}$, for random projectors $P_1, \ldots, P_N$, sequentially on the same state, her probability of not obtaining the $\overline{P}_i$ outcome for any $i\in[N]$ will be $\left(1-\frac{k}{d}\right)^N$. But what use is measuring random projectors $P_1, \ldots, P_N$ when we have a fixed projector $P$ as our input? As it turns out, for any fixed $P$ and a random unitary $U_i$, $Q_i=U_i \overline{P}U_i^\dagger$ is a random projector. So if Alice and Bob use shared randomness to sample random unitaries $U_1, \ldots, U_N$,\footnote{This sharing of randomness will not be needed in the final protocol, because we will be able to use the probabilistic method to fix a good choice of $U_1, \ldots, U_N$.} and Alice does the measurements $\{Q_i, I-Q_i\}$ sequentially on her half of a shared $\ket{\Phi_+}\in \mathbb{C}^d\otimes\mathbb{C}^d$, her probability of not obtaining the good outcome for any $i\in[N]$ will be $\left(1-\frac{k}{d}\right)^N$.\footnote{In fact, we don't even need to use Haar-random $U_i$-s in order for this to hold. Taking $U_i$-s from a $1$-design suffices, since the only property required is that $\mathbb{E}[Q_i]=\frac{k}{d}I$.}

Unfortunately, even if Alice obtains the good outcome for some $i\in [N]$, the state on Bob's side after this would not be $\frac{U_iPU_i^\dagger}{k}$, so that Bob is able to undo $U_i$ and recover $\frac{P}{k}$. Although this holds in the first iteration when Alice and Bob share $\ket{\Phi_+}$, on reaching the $i$-th iteration, the shared state between Alice and Bob is conditioned on the previous $i-1$ measurements failing. The shared state may still be close to $\ket{\Phi_+}$ if $\frac{k}{d}$ is small and $i$ is not too large (by the Gentle Measurement Lemma), in which case Bob's final state would still be close to $\frac{P}{k}$. But in general we don't have control over how large $\frac{k}{d}$ is. To mitigate this situation, we need to modify or ``damp'' our measurements. Instead of doing the projective measurement $\{Q_i,I-Q_i\}$, we will do the POVM with Kraus operators given by $\{\sqrt{\eta Q_i}, \sqrt{I-\eta Q_i}\}$, for a small damping parameter $\eta$ (we'll need to take $\eta$ to be $\eps^2$ in the end, where $\eps$ is the desired average error). If the first outcome of this measurement is obtained, the state on Alice's side is still exactly $\frac{Q_i}{k}$. The probability of obtaining the first outcome is reduced to $\frac{\eta k}{d}$, so the probability that the first outcome is not obtained after $i$ iterations is $\left(1-\frac{\eta k}{d}\right)^i$.

It may seem like we have worsened our situation by requiring a larger number of iterations (so that we can make the probability of not obtaining the first outcome in any iteration as small as we want), but now we can control the damage to the state in each unsuccessful iteration a lot better. Using arguments similar to the Gentle Measurement Lemma, we show that after each iteration in which we obtain the second outcome, the fidelity between the state conditioned on the second outcome and $\ket{\Phi_+}$ decreases by only $\frac{\eta^2k}{d}$, in expectation over the choice of $U_i$. This means that after $(i-1)$ failed iterations, the fidelity between the shared state at the start of the $i$-th iteration and $\ket{\Phi_+}$ goes like $1-(i-1)\frac{\eta^2k}{d}$. Consequently, if this iteration succeeds, and then Bob undoes $U_i$, the fidelity between the resulting state on Bob's side and $\frac{P}{k}$ goes like $1-(i-1)\frac{\eta^2k}{d}$. Since the probability of getting to the $i$-th iteration at all goes like $\left(1-\frac{\eta k}{d}\right)^{i-1}$, we can lower bound the resulting average fidelity (over the choice of $U_i$, and the $i$ that the protocol may succeed at) by taking $N'=\frac{d}{\eta k}\ln(1/\eps)$ many iterations.

We note that the intuition behind this protocol is similar to Elitzur-Vaidman bomb testing \cite{EV93}, in which $n$ weak measurements are done on a state, each measurement disturbing it by $\sim\frac{1}{n^2}$, so that we are able to get a successful outcome while only disturbing the state by $\sim\frac{1}{n}$ overall.

\paragraph{Average-case to worst-case reduction.} Our average-case to worst-case reduction will use a similar idea of picking some good unitaries $U_1, \ldots, U_N$ probabilistically. Suppose we are given a protocol $\mathcal{P}$ that has average-case error $\eps$, and we want to construct a protocol $\mathcal{P}'$ that has worst-case error $\eps+\delta$. Suppose on input $P$, Alice could look at the minimum error of $\mathcal{P}$ on inputs $\frac{UPU^\dagger}{k}$ for all unitaries $U$, perform $\mathcal{P}$ on input $\frac{UPU^\dagger}{k}$ with Bob, and then have Bob undo $U$. The correctness of this hypothetical protocol will be the best-case correctness of $\mathcal{P}$, which is certainly at least the average-case correctness. But of course Alice cannot find the best $U$ for every $P$ and communicate it to Bob efficiently, so in the actual protocol $\mathcal{P}'$, we will approximate the minimum over the whole unitary group $U(d)$ with a minimum over a finite set of unitaries $U_1, \ldots, U_N$. These $U_1, \ldots U_N$ can be probabilistically selected by sampling i.i.d. from $U(d)$. The set of unitaries for which $UPU^\dagger$ has error not much more than $\eps$ is decent enough, due to the average-case correctness of $\mathcal{P}$. So we can show that the probability that all the unitaries $U_1, \ldots, U_N$ fall in the large-error bad set for a particular $P$ is small, by concentration of measure on $U(d)$. Taking a union bound over a $\delta/4$-net on $G(d,k)$ then gives us worst-case error $\eps+\delta$ for $\mathcal{P}$.

\subsection{Open problems}
\subsubsection{Extension of lower bound to mixed states}
 The most obvious open problem left by our work is to non-trivially extend our entanglement lower bound to mixed entangled states. There is a panoply of entanglement measures for mixed state entanglement. One such measure is entanglement of formation, and our result actually can be extended to lower bound the entanglement of formation for mixed states. Entanglement of formation of a mixed state $\sigma^{AB}$ is defined as
 \[ E_F(\sigma^{AB}) = \inf\left\{\sum_ip_iH(A)_{\psi_i}: \sigma^{AB} = \sum_ip_i\state{\psi_i}^{AB}\right\},\]
 where the infimum is taken over all convex combinations of pure states that equal $\sigma^{AB}$.
 Since we can asymptotically lower bound entanglement entropy of pure states usable for RSP, we can asymptotically lower bound the entanglement of formation by lower bounding the entanglement entropy of each pure state in any convex combination (see e.g. \cite{AY18} where such a strategy is employed to test entanglement of formation of mixed states).
 
 A more intriguing question is whether there is a connection between usability for efficient RSP and entanglement distillability for mixed states. Note that unlike pure states, where every entangled state has some amount of distillable entanglement, not all mixed entangled states have distillable entanglement. So for example, there exist states that have non-zero entanglement of formation but zero distillable entanglement. Moreover, there exist mixed entangled states that are useful in various ways, e.g. for violating Bell inequalities, which do not have distillable entanglement \cite{VB14}.
 
 The entanglement distillation result of \cite{WTB17} also works for mixed states --- their protocol can distill $H^{\eps}_{\min}(A|E)_\sigma - 2\log(1/\eps) - O(1)$ many EPR pairs from a mixed entangled state $\sigma^{AB}$, where $H^\eps_{\min}(A|E)_\sigma$ is a conditional smoothed min-entropy and $\ket{\sigma}^{ABE}$ is a purification of $\sigma^{AB}$. If our result could be extended to a lower bound on $H^\eps_{\min}(A|E)_\sigma$ for mixed entanglement, it would show that entangled states that are useful for doing RSP efficiently do have distillable entanglement, which would be a nontrivial result about mixed state entanglement.

 \subsubsection{Generalization of upper bounds}

 Our second protocol for $(d,k)$-RSP is near-optimal in entanglement and communication cost, with its entanglement cost being significantly better than all previously-known protocols. However, the previous protocols such as rejection sampling \cite{jain2005} and \cite{AJ22} work for more than just RSP. In fact, \cite{ADJ17} do a variety of quantum communication tasks such as state splitting, state redistribution etc using a coherent version of rejection sampling, which uses the same extravagant amount of entanglement. \cite{AJ22} reduces the entanglement cost of these tasks compared to \cite{ADJ17}, but their entanglement cost still has a multiplicative factor of $\frac{1}{\eps^2}$ as shown in Table \ref{tab:comparison}. It would be interesting to see if there is a coherent version of our protocol that has the same entanglement cost. This would give protocols that are near-optimal in communication and entanglement cost for a variety of quantum communication tasks.
 
 Another intriguing question is whether our protocols can be made computationally efficient. \cite{AJ22} give protocols that are worse than ours in terms of communication and entanglement cost, but their protocols are computationally efficient in some sense, essentially due to considering a computationally efficient version of the Convex Split Lemma from \cite{ADJ17}. The average-case version of our second protocol uses efficient unitaries (essentially unitaries from the Pauli group), but the number of iterations is very large (even bigger than that in rejection sampling). Moreover, our average-case to worst-case reduction uses Haar-random unitaries. It would be interesting to see if these issues can be handled, so that the protocol is efficient overall.



\section{Preliminaries}\label{sec:prelim}
In this section, we define quantities we will need for the rest of the paper, and state some technical results, starting with the set of rank-$k$ $d$-dimensional projectors. We assume the reader is familiar with basic concepts in quantum information.
\paragraph{Grassmannian.} The set of all $k$-dimensional linear subspaces of a $d$-dimensional vector space (typically $\mathbb{C}^d$) is called the Grassmannian $G(d,k)$. In our applications, we will identify a $k$-dimensional linear subspace with the projector onto it. So an element of $G(d,k)$ will be a rank-$k$ projector on $\mathbb{C}^d$. We will call a mixed state of the form $\frac{P}{k}$ a flat state, and with some abuse of notation, we will also identify $P\in G(d,k)$ with this flat state. We make the Grassmannian into a metric space by defining the distance between two projectors $P$ and $Q$ to be the trace distance $\|\frac{P}{k} - \frac{Q}{k}\|_\tr$ (trace distance is defined formally later in this section). We will also talk about the Haar measure (or the uniform distribution) on $G(d,k)$. By this we mean sampling a projector $P \in G(d,k)$ in the following way: sampling a $U$ in the unitary group $U(d)$ first, and then taking $P = U_kU_k^\dagger$, where $U_k$ refers to the first $k$ columns of $U$.

All logarithms throughout the paper will be assumed to be in base 2.

\subsection{Quantum information background}
A generalized measurement (POVM) $\{M_i\}_{i=1}^m$ on $\mathbb{C}^d$ is a collection of $d \times d$ complex matrices (sometimes called Kraus operators) such that
\[\sum_i M_i^\dagger M_i = I\]
Given that the pure state $\rho$ is measured according to this measurement, the probability of obtaining outcome $i$ is given by $p_i = \tr(\rho M_i^\dagger M_i)$ and the residual state is $\frac{M_i \rho M_i^\dagger}{p_i}$.

For a given $d \geq 1$, the standard maximally entangled state on $\mathbb{C}^d \otimes \mathbb{C}^d$ is defined as
\[\ket{\Phi_+}^{AB} = \frac{1}{\sqrt{d}} \sum_{i=1}^d \ket{i}^A \ket{i}^B\]
We have the following standard fact:
\begin{fact} \label{fact:maximallyentangled}
    If the first half of $\ket{\Phi^+}$ is measured according to the measurement $\{M_i\}_{i=1}^m$, yielding outcome $i$, the residual state on the first half is given by $\frac{M_i^T M_i}{p_i}$
\end{fact}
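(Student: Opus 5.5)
The plan is a direct computation based on the ``transpose trick'' for the maximally entangled state. I read the residual state in the statement as the state on the \emph{unmeasured} half $B$. That is the half that matters for remote state preparation. The state on the measured half $A$ simply depends on which Kraus representation is used, so it cannot be the intended reading. I also read the formula $\frac{M_i^T M_i}{p_i}$ as shorthand for the normalized operator $M_i^T\overline{M_i}/\Tr(M_i^\dagger M_i)$. The derivation below makes both the conjugation and the normalization explicit.

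\emph{Step 1: the transpose identity.} For any $d\times d$ matrix $M$,
\[(M\otimes I)\ket{\Phi_+}^{AB} = (I\otimes M^T)\ket{\Phi_+}^{AB}.\]
To verify this, expand both sides in the computational basis. The left side is $\frac{1}{\sqrt d}\sum_{i,j} M_{ji}\ket{j}\ket{i}$. The right side is $\frac{1}{\sqrt d}\sum_{i,j}(M^T)_{ij}\ket{j}\ket{i}$, and $(M^T)_{ij}=M_{ji}$, so the two agree.

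\emph{Step 2: the outcome probability.} By the preliminaries, $p_i=\bra{\Phi_+}(M_i^\dagger M_i\otimes I)\ket{\Phi_+}$. Since the marginal of $\ket{\Phi_+}$ on $A$ is $I/d$, this gives $p_i=\Tr(M_i^\dagger M_i)/d$.

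\emph{Step 3: the post-measurement state on $B$.} The unnormalized post-measurement state is $(M_i\otimes I)\state{\Phi_+}(M_i^\dagger\otimes I)$. Using Step 1 to move $M_i$ onto $B$, it becomes $(I\otimes M_i^T)\state{\Phi_+}(I\otimes \overline{M_i})$, where $\overline{M_i}=(M_i^T)^\dagger$. Taking the partial trace over $A$, and using again that the marginal of $\ket{\Phi_+}$ is $I/d$, the reduced state on $B$ is
\[\frac{M_i^T\,\overline{M_i}}{d\,p_i}=\frac{M_i^T\,\overline{M_i}}{\Tr(M_i^\dagger M_i)}.\]
This is the claimed formula, with the conjugation and the factor $d$ written out.

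\emph{Step 4: the case used later.} In the protocols the Kraus operators are Hermitian positive semidefinite, for example $\sqrt{\eta Q}$ or $\overline{P}$. For such $M$ we have $\overline{M}=M^T$, so the residual state is $(M^2)^T/\Tr(M^2)$. In particular, for a projector $\overline{P}$ this is $\overline{P}^T/k = P^\dagger/k = P/k$, because $\overline{P}^T = P^\dagger$ and $P$ is Hermitian. This is exactly what the rejection-sampling descriptions require.

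No step is a genuine obstacle. The only care needed is in the bookkeeping of transposes and complex conjugates, and in the normalization by $p_i$ versus $d\,p_i$. I would state the result in the explicit form above to avoid that ambiguity.
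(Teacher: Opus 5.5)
Your computation is correct. The paper gives no proof of this Fact and simply quotes it as standard; your transpose-trick derivation is the standard one. It yields the residual state $M_i^T\overline{M_i}/(d\,p_i) = (M_i^\dagger M_i)^T/\Tr(M_i^\dagger M_i)$ on the unmeasured half, which is exactly how the paper uses the Fact (e.g.\ with $\Tr_A$ in the damped rejection sampling analysis), and it repairs the missing factor $d$ and complex conjugation in the literal statement.

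The only weak point is your reason for choosing that reading. The measured half's state $M_iM_i^\dagger/\Tr(M_i^\dagger M_i)$ is perfectly well defined once the Kraus operators are fixed. The better justification is the transpose in the formula and the paper's own usage.
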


\begin{definition}[Schatten $p$-norm]
The Schatten $p$-norm of an operator $A$ is given by $\Vert A\Vert_p = \left[\Tr\left((A^\dagger A)^{p/2}\right)\right]^{1/p}$. If $\{s_i\}_i$ are the singular values (eigenvalues for normal operators) of an operator, then $\Vert A\Vert_p = \left(\sum_i s_i^p\right)^{1/p}$.

Of particular interest to us will be the Schatten $1$-norm (also known as trace norm), the $2$-norm (also known as Frobenius norm) and the $\infty$-norm (also known as spectral norm). The spectral norm is equal to the largest singular value of an operator.
\end{definition}

\begin{definition}[Trace distance]
The trace distance between two quantum states with density matrices $\rho$ and $\sigma$ is given by
\[ \Vert \rho - \sigma\Vert_{\mathrm{tr}} = \frac{1}{2}\Vert\rho-\sigma\Vert_1.\]
\end{definition}

\begin{fact}[Variational characterization of the trace distance] \label{fact:variational}
Given two states $\rho$ and $\sigma$, we have
\[\|\rho-\sigma\|_{\tr} = \sup_{P \text { a projector}} \Tr(P(\rho-\sigma)).\]
\end{fact}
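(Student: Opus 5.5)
We prove Fact~\ref{fact:variational} by the Jordan decomposition of the Hermitian operator $\rho-\sigma$: split it into its positive and negative parts and show that the projector onto the positive part attains the supremum.

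First, decompose $\rho-\sigma$. Since $\rho-\sigma$ is Hermitian, the spectral theorem gives $\rho-\sigma=\sum_j \lambda_j \state{v_j}$ with real $\lambda_j$ and an orthonormal basis $\{\ket{v_j}\}$. Set
\[Q=\sum_{j:\lambda_j>0}\lambda_j\state{v_j},\qquad S=-\sum_{j:\lambda_j<0}\lambda_j\state{v_j}.\]
Then $Q,S\geq 0$, they have orthogonal supports, and $\rho-\sigma=Q-S$. Since the eigenvalues of $\rho-\sigma$ are the $\lambda_j$, the Schatten $1$-norm is $\|\rho-\sigma\|_1=\sum_j|\lambda_j|=\Tr Q+\Tr S$. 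Because $\Tr\rho=\Tr\sigma=1$, we have $\Tr Q-\Tr S=\Tr(\rho-\sigma)=0$. Hence $\Tr Q=\Tr S$, and therefore
\[\|\rho-\sigma\|_{\tr}=\tfrac12(\Tr Q+\Tr S)=\Tr Q.\]

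Second, the upper bound. Let $P$ be any projector. Then $\Tr(PS)\geq 0$, because $PSP\geq 0$ and $\Tr(PS)=\Tr(PSP)$. Also $\Tr(PQ)=\Tr(Q^{1/2}PQ^{1/2})\leq \Tr Q$, because $P\leq I$ implies $Q^{1/2}PQ^{1/2}\leq Q$. Combining these,
\[\Tr(P(\rho-\sigma))=\Tr(PQ)-\Tr(PS)\leq \Tr Q=\|\rho-\sigma\|_{\tr}.\]

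Third, attainment. Take $P_+=\sum_{j:\lambda_j>0}\state{v_j}$, the projector onto the support of $Q$. Then $P_+Q=Q$, and $P_+S=0$ by orthogonality of supports. So $\Tr(P_+(\rho-\sigma))=\Tr Q$, and the supremum is attained; it is in fact a maximum.

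There is no genuine obstacle in this argument. The only step that needs care is the operator inequality $\Tr(PQ)\leq\Tr Q$ for $0\leq P\leq I$, together with using $\Tr\rho=\Tr\sigma$ to identify half the $1$-norm with $\Tr Q$.
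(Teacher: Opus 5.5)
Your proof is correct and complete. It is the standard argument: split $\rho-\sigma=Q-S$ into positive and negative parts, use $\Tr\rho=\Tr\sigma$ to get $\|\rho-\sigma\|_{\tr}=\Tr Q$, bound $\Tr(P(\rho-\sigma))\le\Tr(PQ)\le\Tr Q$, and attain the bound with the projector onto the positive eigenspace. The paper quotes this as a standard Fact without proof, so there is no different route to compare against. The only place normalization genuinely enters is your step $\Tr Q=\Tr S$; without it the supremum equals $\Tr Q$, which need not be half the $1$-norm.
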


\begin{definition}[Fidelity]
The fidelity between two quantum states $\rho$ and $\sigma$ is given by
\[ \mathsf{F}(\rho, \sigma) = \left[\Tr\left(\sqrt{\sqrt{\rho}\sigma\sqrt{\rho}}\right)\right]^2.\]
If $\rho$ and $\sigma$ are diagonal in the same basis, i.e., they are essentially classical probability distributions $p = \{p_i\}_i$ and $q = \{q_i\}$ with the same support, then the expression for fidelity reduces to
\[ \mathsf{F}(p, q) = \left(\sum_i \sqrt{p_iq_i}\right)^2.\]
\end{definition}
The following collects well-known properties of the fidelity and the trace distance:
\begin{fact} \label{fac:distanceproperties}
Given two quantum states $\rho$ and $\sigma$ on the same system, we have:
\begin{enumerate}
\item For any two pure states $\psi$ and $\phi$, it holds that
\[\mathsf{F}(\ket{\psi}\bra{\psi}, \ket{\phi}\bra{\phi}) = \left|\braket{\psi|\phi}\right|^2.\] \label{item:pure-f}
\item (Unitary invariance) For any unitary $U$, 
\[\|U\rho U^\dagger-U\sigma U^\dagger\|_{\tr} = \|\rho -\sigma \|_{\tr}\]
\[\mathsf{F}(U\rho U^\dagger, U\sigma U^\dagger) = \mathsf{F}(\rho,\sigma).\] \label{item:unitary-inv}
\item (Fuchs-van de Graaf inequality) $\|\rho - \sigma\|_\tr \leq \sqrt{1-\mathsf{F}(\rho,\sigma)}$. \label{item:fvdg}
\item (Monotonicity under partial tracing) If $\rho$ and $\sigma$ are states on the system $AB$, we have 
\[\|\Tr_A(\rho) - \Tr_A(\sigma)\|_\tr \leq \|\rho - \sigma\|_\tr\]
\[\mathsf{F}(\Tr_A(\rho), \Tr_A(\sigma)) \geq \mathsf{F}(\rho, \sigma).\] \label{item:partial}
\end{enumerate}
\end{fact}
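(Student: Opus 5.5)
We prove the four items separately. Items 1 and 2 follow by direct computation. Items 3 and 4 both go through Uhlmann's theorem, $\mathsf{F}(\rho,\sigma)=\max|\braket{\psi_\rho|\psi_\sigma}|^2$, where the maximum is over purifications $\ket{\psi_\rho},\ket{\psi_\sigma}$ on a common reference system. We assume Uhlmann's theorem as standard background, since it is not among the facts stated earlier in the paper.

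For item 1, if $\rho=\ket{\psi}\bra{\psi}$ then $\sqrt{\rho}=\rho$. Hence $\sqrt{\rho}\,\sigma\sqrt{\rho}=|\braket{\psi|\phi}|^2\ket{\psi}\bra{\psi}$, whose square root has trace $|\braket{\psi|\phi}|$, and squaring gives the claim. For item 2, the matrix square root commutes with unitary conjugation: $\sqrt{U\rho U^\dagger}=U\sqrt{\rho}U^\dagger$, because $f(UXU^\dagger)=Uf(X)U^\dagger$ for any spectral function $f$. The inner expression then becomes $U\sqrt{\sqrt\rho\sigma\sqrt\rho}\,U^\dagger$, and trace is invariant under conjugation. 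Likewise the singular values of $U(\rho-\sigma)U^\dagger$ equal those of $\rho-\sigma$, so the trace norm is unchanged.

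For item 4 we handle the two quantities separately. For trace distance we use Fact \ref{fact:variational}. For any projector $Q$ on $B$, the operator $I_A\otimes Q$ is a projector on $AB$, and $\Tr(Q(\Tr_A\rho-\Tr_A\sigma))=\Tr((I_A\otimes Q)(\rho-\sigma))\le\|\rho-\sigma\|_\tr$. Taking the supremum over $Q$ gives the inequality. For fidelity, take purifications $\ket{\psi_\rho},\ket{\psi_\sigma}$ of $\rho,\sigma$ that achieve $\mathsf{F}(\rho,\sigma)$ in Uhlmann's theorem. Viewing $A$ as part of the purifying system, these are also purifications of $\Tr_A\rho$ and $\Tr_A\sigma$. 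The maximum defining $\mathsf{F}(\Tr_A\rho,\Tr_A\sigma)$ is therefore at least $|\braket{\psi_\rho|\psi_\sigma}|^2=\mathsf{F}(\rho,\sigma)$.

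Item 3 needs the most care. First we treat pure states $\psi,\phi$. The operator $\ket\psi\bra\psi-\ket\phi\bra\phi$ has rank at most $2$ and trace $0$, so its nonzero eigenvalues are $\pm\lambda$. Computing $\Tr\bigl((\ket\psi\bra\psi-\ket\phi\bra\phi)^2\bigr)=2-2|\braket{\psi|\phi}|^2$ gives $\lambda=\sqrt{1-|\braket{\psi|\phi}|^2}$. The trace distance is then $\lambda$, so with item 1 we have equality $\|\psi-\phi\|_\tr=\sqrt{1-\mathsf{F}(\psi,\phi)}$. For general $\rho,\sigma$, choose optimal Uhlmann purifications $\ket{\psi_\rho},\ket{\psi_\sigma}$. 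By the trace-distance part of item 4, $\|\rho-\sigma\|_\tr\le\|\psi_\rho-\psi_\sigma\|_\tr=\sqrt{1-|\braket{\psi_\rho|\psi_\sigma}|^2}=\sqrt{1-\mathsf{F}(\rho,\sigma)}$.

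The main obstacle is the reliance on Uhlmann's theorem, which is the only nontrivial ingredient. The rest consists of routine spectral calculations: the rank-two eigenvalue computation for pure states and the commutation of matrix functions with unitaries.
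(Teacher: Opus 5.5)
Your proof is correct. The rank-one computation for item 1 works, as does the identity $\sqrt{U\rho U^\dagger}=U\sqrt{\rho}\,U^\dagger$ for item 2. So does the variational characterization with the projector $I_A\otimes Q$ for the trace-distance half of item 4. For item 3, the pure-state equality $\|\psi-\phi\|_\tr=\sqrt{1-|\braket{\psi|\phi}|^2}$ combined with monotonicity and optimal Uhlmann purifications gives Fuchs--van de Graaf.

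The paper gives no proof of this Fact. It introduces the items as well-known properties of fidelity and trace distance and uses them as background, so there is no argument to compare against. What you wrote is the standard textbook derivation. Assuming Uhlmann's theorem is consistent with the paper, which invokes it itself in the proof of Lemma~\ref{lem:damagecontrol}.

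For full precision, note that you use both directions of Uhlmann's theorem. In items 3 and 4 you use attainment, to choose purifications of $\rho,\sigma$ with overlap exactly $\mathsf{F}(\rho,\sigma)$. In the fidelity half of item 4 you use the bound $|\braket{\psi_\rho|\psi_\sigma}|^2\le\mathsf{F}$ for arbitrary purifications on an arbitrary common reference, since the purifying system of $\Tr_A\rho$ there is $A$ together with the original reference. Your phrasing of the theorem as a maximum over purifications on a common reference covers both directions, so this is only a matter of stating it explicitly.
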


We will also need a function representing a truncated version of the classical fidelity on boolean variables.
\begin{definition}[Truncated fidelity function]
For $(x, y) \in [0,\infty)\times [0,1]$, the truncated fidelity function $F(x,y)$ is defined as follows
\[ F(x,y) = \begin{cases} \left(\sqrt{xy} + \sqrt{(1-x)(1-y)}\right)^2 & \text{ if } x \leq y \\ 1 & \text{otherwise.} \end{cases} \]

\end{definition}

$F(x, y)$ is essentially fidelity between the binary distributions $\{x,1-x\}$ and $\{y,1-y\}$, but we need to set it to $1$ for $x > y$ (and also allow $x$ to be bigger than $1$), because of technical reasons related to our application. The following property of $F$ can be seen.
\begin{lemma}
The truncated fidelity function $F(x,y)$ satisfies $F(x,y) \leq 1$ and is concave in $x$ and $y$.
\end{lemma}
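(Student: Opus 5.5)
We interpret ``concave in $x$ and $y$'' as concavity in each variable separately with the other held fixed, which is what the definition supports. Joint concavity is not claimed, and we do not expect it to hold, because $F$ is the square of a concave function. The plan is to do everything through the explicit formula on the region $x\le y$, and to handle the flat region $x>y$ with a matching-derivative argument at the boundary $x=y$.

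\textbf{Step 1: $F \le 1$.} Off the region $x\le y$ we have $F=1$. On it we have $0\le x\le y\le 1$, so every quantity under a square root is nonnegative. By Cauchy--Schwarz,
\[\sqrt{xy}+\sqrt{(1-x)(1-y)} \le \sqrt{x+(1-x)}\,\sqrt{y+(1-y)} = 1,\]
and squaring this nonnegative quantity gives $F\le 1$.

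\textbf{Step 2: concavity in $x$ for fixed $y\in[0,1]$.} On $[0,y]$, expanding the square gives
\[F(x,y)=xy+(1-x)(1-y)+2\sqrt{y(1-y)}\sqrt{x(1-x)}.\]
The first two terms are affine in $x$. The function $\sqrt{x(1-x)}$ is concave on $[0,1]$, and it is multiplied by the nonnegative constant $2\sqrt{y(1-y)}$. Hence $F(\cdot,y)$ is concave on $[0,y]$.

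At $x=y$ the bracket equals $y+(1-y)=1$, so $F$ is continuous there with value $1$. For $0<y<1$, the derivative at $x=y^-$ is
\[(2y-1)+2\sqrt{y(1-y)}\cdot\frac{1-2y}{2\sqrt{y(1-y)}}=0.\]
So on $[0,y]$ the derivative is nonincreasing and reaches $0$. On $(y,\infty)$ the function is the constant $1$, with derivative $0$.

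The derivative is therefore nonincreasing on all of $[0,\infty)$, and $F(\cdot,y)$ is concave there. The edge cases $y\in\{0,1\}$ need a separate check. For $y=0$, $F(\cdot,0)$ is $1$ at $x=0$ and $1$ for $x>0$. For $y=1$, $F(x,1)=x$ on $[0,1]$ and $1$ beyond. Both are concave.

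\textbf{Step 3: concavity in $y\in[0,1]$ for fixed $x\ge 0$.} If $x>1$, then $F(x,\cdot)\equiv 1$, which is concave. If $x\le 1$, the formula is symmetric in $x$ and $y$. The argument of Step 2, with the roles swapped, shows that $F(x,\cdot)$ is concave on $[x,1]$, equals $1$ with zero one-sided derivative at $y=x$, and is constantly $1$ on $[0,x)$. The one-sided derivatives are therefore nonincreasing across $y=x$, which gives concavity on $[0,1]$.

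\textbf{Main obstacle.} The only delicate point is the gluing at the boundary $x=y$. Concavity across the boundary requires both continuity there (value exactly $1$) and a derivative that does not increase when passing from the formula region to the constant region. The computation that the one-sided derivative of the formula vanishes at $x=y$ is exactly what makes the truncation at $1$ preserve concavity. At the degenerate endpoints, where a one-sided derivative may be infinite, we argue directly from the explicit form of $F$ rather than by differentiation.
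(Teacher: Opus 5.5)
\textbf{Gap: you prove separate concavity, but the paper needs joint concavity.} Your Steps 1--3 correctly prove concavity in each variable separately, including the gluing at $x=y$ and the degenerate cases. The problem is that this reading is too weak for how the lemma is used. The paper gives no argument for the lemma, but it applies Jensen's inequality with both arguments varying at once:
\begin{itemize}
\item In the proof of Theorem~\ref{thm:eigvalbound} it bounds $\mathbb{E}_{P\sim\mu}[F(f(P),t(P))]$ by $F(\mathbb{E}_{P\sim\mu}[f(P)],1-\varepsilon)$, where $f(P)$ and $t(P)$ are both random.
\item In the proof of Theorem~\ref{thm:minentropy} it bounds $\sum_c p(c)F(x_c,1-\varepsilon_c)$ by $F(\sum_c p(c)x_c,1-\varepsilon_r)$.
\end{itemize}
Both steps require joint concavity, and separate concavity does not imply it (consider $xy$). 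Your claim that joint concavity should fail because $F$ is the square of a concave function is also false. Squaring does not preserve concavity in general, but for binary distributions the squared fidelity is jointly concave; it loses joint concavity only with three or more outcomes.

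\textbf{Missing ingredient.} You need a Hessian computation plus a way to write the truncation as a partial maximization.

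First, on $[0,1]^2$ write
\[G(x,y)=\bigl(\sqrt{xy}+\sqrt{(1-x)(1-y)}\bigr)^2=1-x-y+2xy+2u(x)u(y),\qquad u(t)=\sqrt{t(1-t)}.\]
Differentiating $u^2=t-t^2$ twice gives $uu''=-(1+u'^2)$. Hence on $(0,1)^2$
\begin{equation*}
G_{xx}=2u''(x)u(y)\le 0,\qquad \det\nabla^2 G=4\bigl[(1+u'(x)^2)(1+u'(y)^2)-(1+u'(x)u'(y))^2\bigr]=4\bigl(u'(x)-u'(y)\bigr)^2\ge 0 .
\end{equation*}
So $G$ is jointly concave on the open square, and by continuity on $[0,1]^2$.

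Second, your Step 2 shows that $G(\cdot,y)$ is nondecreasing on $[0,y]$ with $G(y,y)=1\ge G$. Therefore, on all of $[0,\infty)\times[0,1]$,
\[F(x,y)=\max_{0\le x'\le \min(x,1)}G(x',y).\]
Take two points $z_1,z_2$ with maximizers $x_1',x_2'$. Since $\min(\cdot,1)$ is concave, $\lambda x_1'+(1-\lambda)x_2'$ is feasible for $\lambda z_1+(1-\lambda)z_2$. Joint concavity of $G$ then gives
\[F(\lambda z_1+(1-\lambda)z_2)\ge \lambda F(z_1)+(1-\lambda)F(z_2).\]
The same representation also gives monotonicity of $F$ in its first argument, which the paper uses alongside concavity.
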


Additionally, we prove the following.
\begin{lemma} \label{lem:boundfidelity}
    For all $x_0, y_0 \in [0,1]$, $\Delta \in [0, 1 - x_0]$ and for all $K > 0$, it holds that
    \[F(x_0 + \Delta, y_0) - K \Delta^2 \leq F(x_0, y_0) + O(K^{-1/3}). \]
\end{lemma}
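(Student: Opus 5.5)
The plan is to prove a uniform continuity estimate of the form
\[F(x_0+\Delta,y_0) - F(x_0,y_0) \leq C\sqrt{\Delta}\]
for an absolute constant $C$, valid for all $x_0,y_0\in[0,1]$ and $\Delta\in[0,1-x_0]$. Once this holds, the lemma follows by optimizing over $\Delta$:
\[F(x_0+\Delta,y_0) - K\Delta^2 - F(x_0,y_0) \leq C\sqrt{\Delta} - K\Delta^2 \leq \sup_{t\geq 0}\left(C\sqrt t - Kt^2\right).\]
The supremum is attained at $t \asymp K^{-2/3}$, where its value is $\Theta(K^{-1/3})$. This is where the exponent $-1/3$ comes from.

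To establish the continuity estimate I will split into three cases, according to where $y_0$ lies relative to $x_0$ and $x_0+\Delta$. In each case I use the identity, valid for $x\leq y$ in $[0,1]$,
\[1-F(x,y) = \left(\sqrt{y(1-x)}-\sqrt{x(1-y)}\right)^2,\]
which is a direct expansion using $xy+(1-x)(1-y)+x(1-y)+y(1-x)=1$.

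\emph{Case $y_0 < x_0$.} Both values of $F$ equal $1$ by truncation, so the difference is $0$.

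\emph{Case $x_0 \leq y_0 < x_0+\Delta$.} Here $F(x_0+\Delta,y_0)=1$, so I need $1-F(x_0,y_0)\leq C\sqrt\Delta$. By $(\sqrt a-\sqrt b)^2\leq|a-b|$,
\[1-F(x_0,y_0) \leq \left|y_0(1-x_0)-x_0(1-y_0)\right| = y_0-x_0 < \Delta \leq \sqrt\Delta.\]
This handles the discontinuity of the truncated function, which I regard as the only delicate point.

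\emph{Case $x_0+\Delta\leq y_0$.} Both arguments are in the untruncated regime, so $F$ is the genuine fidelity of binary distributions. I will use that the Bures angle $A(p,q)=\arccos\sqrt{\mathsf F(p,q)}$ is a metric. This gives
\[A(x_0,y_0)\leq A(x_0+\Delta,y_0)+A(x_0,x_0+\Delta).\]
By the same identity as above, $1-\mathsf F(x_0,x_0+\Delta)\leq\Delta$, so $A(x_0,x_0+\Delta)\leq c\sqrt\Delta$ (using $\arccos\sqrt{1-u}=O(\sqrt u)$). The map $\theta\mapsto\cos^2\theta$ is $1$-Lipschitz, since its derivative is $-\sin 2\theta$. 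Hence
\[F(x_0+\Delta,y_0) - F(x_0,y_0) \leq A(x_0,y_0)-A(x_0+\Delta,y_0) \leq c\sqrt\Delta.\]

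Combining the three cases gives the claimed $C\sqrt\Delta$ bound, and the optimization step above finishes the proof.
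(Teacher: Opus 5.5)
Your proof is correct. It shares the final step with the paper (optimizing $C\sqrt{\Delta}-K\Delta^2$ at $\Delta\asymp K^{-2/3}$), but reaches the continuity estimate by a different route.

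The paper does not split into cases. It expands the untruncated formula and writes the increment as $(2y_0-1)\Delta+2\sqrt{y_0(1-y_0)}\bigl(h(x_0+\Delta)-h(x_0)\bigr)$ with $h(x)=\sqrt{x(1-x)}$. It then bounds this by $\Delta+\sqrt{\Delta(1-\Delta)}\le 2\sqrt{\Delta}$, using $2y_0-1\le 1$, $2\sqrt{y_0(1-y_0)}\le 1$, and concavity of $h$ (its increments are largest at $x_0=0$). That is a two-line calculus estimate with an explicit constant. Your untruncated case instead uses the metric property of the Bures angle. For binary distributions this angle is simply $A(x,y)=|\arcsin\sqrt{x}-\arcsin\sqrt{y}|$, so your route is no less elementary.

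What your case split buys is a correct treatment of the truncation. The paper justifies substituting the untruncated formula as ``an upper bound because $F\le 1$''. But when $x_0\le y_0<x_0+\Delta$, that substitution lowers $F(x_0+\Delta,y_0)$ below its true value $1$, so the claimed inequality goes the wrong way. Your middle case, $1-F(x_0,y_0)\le y_0-x_0<\Delta$, is exactly what is needed there.

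One small precision point: Lipschitzness of $\cos^2$ only gives $|F(x_0+\Delta,y_0)-F(x_0,y_0)|\le|A(x_0,y_0)-A(x_0+\Delta,y_0)|$. Your signed inequality still holds, because the angle decreases as $x_0+\Delta$ moves toward $y_0$. Alternatively, you can bound the absolute difference directly by $A(x_0,x_0+\Delta)$, using the triangle inequality in both directions.
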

\begin{proof}
We have:
\begin{align*}
    F(x_0 + \Delta, y_0) - F(x_0, y_0) &\leq (2y-1) \Delta + 2 \sqrt{y(1-y)}\left(\sqrt{(x_0+\Delta)(1-x_0-\Delta)} - \sqrt{x_0(1-x_0)}\right) \\
                                       &\leq \Delta + \sqrt{(x_0+\Delta)(1-x_0-\Delta)} - \sqrt{x_0(1-x_0)}\\
                                       &\leq \Delta + \sqrt{\Delta(1-\Delta)} \\
                                       &\leq 2\sqrt{\Delta}
\end{align*}
where the first inequality is obtained by substituting the expressions for $F$ in the case of $x\leq y$ (which is certainly an upper bound because $F(x,y) \leq 1$ everywhere), and the next two are obtained by optimizing over $y_0$ and $x_0$, respectively, while holding $\Delta$ fixed. This shows that 
\[F(x_0 + \Delta, y_0) - K\Delta^2 \leq F(x_0, y_0) + 2\sqrt{\Delta} - K\Delta^2.\]
For a fixed value of $K$, the maximum value of the right-hand side over $ \Delta \in [0,1]$ is $O(K^{-1/3})$, yielding the result.
\end{proof}

\begin{definition}[Min-entropy and Rényi $2$-entropy]

For a quantum state $\rho^A$ on register $A$, its min-entropy and Rényi $2$-entropy are given by
\[ H_{\min}(A)_\rho = -\log \|\rho^A\|_\infty, \quad \quad H_2(A)_\rho = -\log\|\rho^A\|_2.\]
If $\rho= \sum_ip_i \state{\psi_i}$, $H_{\min}(A)_\rho = -\log(\max_i p_i)$, and $H_2(A)_\rho = -\log\left(\sum_ip_i^2\right)$, which are the classical definitions of min-entropy and Rényi $2$-entropy of a probability distribution. For a classical probability distribution $p=\{p_i\}_i$, we use $H_{\min}(p)$ to denote its min-entropy.
\end{definition}

\begin{definition}[Conditional min-entropy and Rényi $2$-entropy]
For a state $\rho^{AB}$ on registers $A$ and $B$, the min-entropy of $A$ conditioned on $B$ with respect to $\rho$ is given by
\[ H_{\min}(A|B)_\rho = -\log\left(\inf_{\sigma^B, \lambda}\{\lambda: \rho^{AB} \preceq \lambda I^A\otimes \sigma^B\}\right).\]
The conditional Rényi $2$-entropy is given by
\[ H_2(A|B)_\rho = -\log\left(\inf_{\sigma^B}\Tr\left(\left((\sigma_B)^{-1/4}\rho^{AB}(\sigma^B)^{-1/4}\right)^2\right)\right).\]
It can be seen that these reduce to the definitions of $H_{\min}(A)_\rho$ and $H_2(A)_\rho$ respectively when the register $B$ is empty.
\end{definition}

The following fact about the two conditional entropies is not difficult to see.
\begin{fact}\label{fact:entropy-ineq}
For all states $\rho^{AB}$, $H_2(A|B)_\rho \geq H_{\min}(A|B)_\rho$.
\end{fact}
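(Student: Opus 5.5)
The plan is to take a feasible pair $(\sigma^B,\lambda)$ for the min-entropy and plug the same $\sigma^B$ into the variational expression for $H_2(A|B)_\rho$, showing that the resulting trace is at most $\lambda$. Fix any $\sigma^B$ and $\lambda$ with $\rho^{AB} \preceq \lambda\, I^A\otimes\sigma^B$. Since $H_2(A|B)_\rho$ involves an infimum over $\sigma^B$, it suffices to prove
\[
\Tr\left(\left((\sigma^B)^{-1/4}\rho^{AB}(\sigma^B)^{-1/4}\right)^2\right)\leq \lambda .
\]
Taking the infimum over feasible $(\sigma^B,\lambda)$ and applying $-\log$ (which is decreasing) then gives $H_2(A|B)_\rho\geq H_{\min}(A|B)_\rho$.

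\textbf{Key steps.} Write $\sigma$ for $I^A\otimes\sigma^B$ and set $X=\sigma^{-1/4}\rho\,\sigma^{-1/4}\succeq 0$.
\begin{enumerate}
\item Conjugation by the positive operator $\sigma^{-1/4}$ preserves the Loewner order. Applying it to $\rho\preceq\lambda\sigma$ gives
\[
X\preceq \lambda\,\sigma^{1/2}.
\]
\item Conjugating this inequality by $X^{1/2}$ and taking traces yields
\[
\Tr(X^2)=\Tr\left(X^{1/2}XX^{1/2}\right)\leq\lambda\Tr\left(X^{1/2}\sigma^{1/2}X^{1/2}\right)=\lambda\Tr\left(X\sigma^{1/2}\right).
\]
\item By cyclicity of the trace,
\[
\Tr\left(X\sigma^{1/2}\right)=\Tr\left(\sigma^{-1/4}\rho\,\sigma^{-1/4}\sigma^{1/2}\right)=\Tr(\rho)=1 ,
\]
which is the required bound.
\end{enumerate}

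\textbf{Main obstacle: support issues.} These arise when $\sigma^B$ is not full rank, so the negative powers must be read as generalized inverses on the support of $\sigma^B$. The constraint $\rho\preceq\lambda\, I\otimes\sigma^B$ forces the support of $\rho^{AB}$ to lie inside $\mathbb{C}^{d_A}\otimes\mathrm{supp}(\sigma^B)$. Let $\Pi$ be the projector onto this subspace. Then $\sigma^{-1/4}\sigma^{1/2}\sigma^{-1/4}=\Pi$, and $\Tr(\Pi\rho)=1$. So every step above goes through after restricting everything to this subspace.

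Finally, the infimum in the definition of $H_{\min}(A|B)_\rho$ may not be attained. This is handled by applying the bound to $\lambda$ arbitrarily close to the infimum and taking the limit.
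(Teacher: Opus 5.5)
Your proof is correct. The paper gives no argument for this fact; it only says it is ``not difficult to see.'' The standard justification would be to write $H_2(A|B)_\rho=-\inf_{\sigma^B}\tilde D_2(\rho^{AB}\|I^A\otimes\sigma^B)$ and $H_{\min}(A|B)_\rho=-\inf_{\sigma^B}D_{\max}(\rho^{AB}\|I^A\otimes\sigma^B)$. One then invokes monotonicity of the sandwiched R\'enyi divergence $\tilde D_\alpha$ in $\alpha$, with $\tilde D_\infty=D_{\max}$, for each fixed $\sigma^B$.

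Your route proves exactly the needed special case from scratch. After conjugating by $\sigma^{-1/4}$, the bound $\Tr(X^2)\le\lambda\Tr(X\sigma^{1/2})=\lambda$ needs nothing beyond positivity and cyclicity of the trace, because at $\alpha=2$ the relevant power $x^{\alpha-1}$ is linear. The same template reaches $\alpha\in(1,2]$ via operator monotonicity of $x^{\alpha-1}$. Citing monotonicity buys generality; your argument buys self-containedness and makes the support issue explicit.

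You handle the support issue correctly. Feasibility forces $\mathrm{supp}(\rho^{AB})\subseteq\mathbb{C}^{d_A}\otimes\mathrm{supp}(\sigma^B)$. Since you only exhibit one admissible $\sigma^B$ in the infimum defining $H_2$, the convention used for other $\sigma^B$ is irrelevant. Your closing remark on non-attainment is unnecessary: the bound holds for every feasible pair and so passes directly to the infimum.

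Note also that your argument correctly works from the conditional formula, which for trivial $B$ gives $-\log\Tr\bigl((\rho^A)^2\bigr)$. The paper's unconditional display $H_2(A)_\rho=-\log\|\rho^A\|_2$ is missing a square. Taken literally, it would make the inequality false: for example, $\rho^A=I/d$ gives $\tfrac12\log d<\log d$.
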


\begin{definition}[Smoothed conditional min-entropy]
For $\rho^{AB}$ on $AB$, the $\eps$-smoothed min-entropy of $A$ conditioned on $B$ is defined as
\[ H^\eps_{\min}(A|B)_\rho = \sup_{\rho'^{AB}: \|\rho^{AB} - \rho'^{AB}\|_\tr \leq \eps}H_{\min}(A|B)_{\rho'}.\]
Without conditioning, the smoothed min-entropy of $A$ is simply
\[ H^\eps_{\min}(A)_\rho = \sup_{\rho': \|\rho'-\rho\|_\tr \leq \eps}H_{\min}(A)_{\rho'},\]
which for classical distributions is
\[ H^\eps_{\min}(p) = \sup_{q: \|p-q\|_1 \leq 2\eps}H_{\min}(q).\]
Similarly, the $\eps$-smoothed Rényi-$2$ entropy of $A$ conditioned on $B$ is
\[ H^\eps_{2}(A|B)_\rho = \sup_{\rho'^{AB}: \|\rho^{AB} - \rho'^{AB}\|_\tr \leq \eps}H_{2}(A|B)_{\rho'}.\]
\end{definition}

We have the following equivalent characterization of classical smoothed min-entropy.
\begin{lemma} \label{lem:minentropydef}
Given a probability distribution $p=\{p_i\}_i$, for all $\delta \in [0,1]$, we have that $H^{\delta}_{\min}(p) = \log(1/S^*)$, where
\[S^* = \inf\left\{S : \sum_{i: \; p_i > S} (p_i - S) \leq \delta\right\}\]
\end{lemma}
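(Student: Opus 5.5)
The plan is to prove the two inequalities $H^\delta_{\min}(p) \geq \log(1/S^*)$ and $H^\delta_{\min}(p) \leq \log(1/S^*)$ separately. Both rest on the function $g(S) = \sum_{i} (p_i - S)_+$, which equals $\sum_{i:\,p_i > S}(p_i - S)$. First I will record that $g$ is continuous and non-increasing in $S$, that $g(0)=1$, and that $g(S)=0$ for $S \geq \max_i p_i$. Consequently the infimum defining $S^*$ is attained, so $g(S^*) \leq \delta$. For $\delta < 1$ we also get $S^* > 0$. The case $\delta = 1$ is trivial: then $S^*=0$, both sides are $+\infty$, and this is interpreted by the usual convention.

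\emph{Converse direction ($\leq$).} Take any distribution $q$ with $\|p-q\|_1 \leq 2\delta$, and let $S = \max_i q_i = 2^{-H_{\min}(q)}$. Every $i$ with $p_i > S$ satisfies $p_i - S \leq p_i - q_i$. Summing over these $i$ gives
\[ g(S) = \sum_{i:\,p_i>S}(p_i - S) \leq \sum_{i:\,p_i > S}(p_i - q_i) \leq \sum_i (p_i - q_i)_+ = \tfrac12\|p-q\|_1 \leq \delta , \]
where the equality uses that $p$ and $q$ are both normalized. Hence $S$ is feasible in the definition of $S^*$, so $S \geq S^*$ and $H_{\min}(q) \leq \log(1/S^*)$. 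Taking the supremum over $q$ yields the upper bound.

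\emph{Achievability direction ($\geq$).} I will construct $q$ in two steps. First truncate $p$ at level $S^*$, setting $q_i' = \min(p_i, S^*)$; this removes total mass $g(S^*) \leq \delta$. Then redistribute the removed mass onto entries currently below $S^*$, filling each entry up to at most $S^*$. The resulting $q$ satisfies $\max_i q_i \leq S^*$. Moreover $q_i \leq p_i$ exactly on the truncated coordinates, so $\frac12\|p-q\|_1 = \sum_i (p_i-q_i)_+ = g(S^*) \leq \delta$. Therefore $H^\delta_{\min}(p) \geq H_{\min}(q) \geq \log(1/S^*)$.

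The main obstacle is the redistribution step. There must be enough room below level $S^*$, which requires $S^* \cdot (\text{alphabet size}) \geq 1$. If the support of $p$ is too small, I would handle this by allowing $q$ to live on an enlarged alphabet, i.e. padding with zero-probability outcomes, or equivalently, in the quantum setting, by embedding into a larger space. Padding changes neither $p$ nor $g$, and it is the standard convention for smoothing. An alternative would be to permit subnormalized $q$, in which case the truncated $q'$ itself already works. I would state explicitly which convention is used, and check that the converse argument is unaffected: for subnormalized $q$ one uses $\sum_{i:\,p_i>S}(p_i-q_i) \leq \frac12\|p-q\|_1 + \frac12(1-\sum_i q_i)$, so the converse goes through when the smoothing distance is measured in the generalized sense.
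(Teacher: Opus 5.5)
Your proof is correct and follows the paper's route. The converse direction is identical to the paper's: take $S=\max_i q_i$ and check that it is feasible for $S^*$. Your truncate-and-redistribute construction is what the paper's achievability step intends when it loosely sets $q_i=S^*$ and computes $\sum_{i:\,p_i>q_i}(p_i-q_i)=\sum_{i:\,p_i>S^*}(p_i-S^*)$. Your caveat about room below $S^*$ is genuinely needed, not pedantic: the paper glosses over it, and for $p$ uniform on $n$ points with $\delta\in(0,1)$ one gets $\log(1/S^*)=\log\frac{n}{1-\delta}>\log n$, which no normalized $q$ on those $n$ points can reach, so the padding (or subnormalized-smoothing) convention you propose is what makes the statement true.
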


\begin{proof}
We first show that given the optimal value $S^*$, there is a distribution $q$ such that $\|q-p\|_1 \leq 2\delta$, and $H_{\min}(q) = \log(1/S^*)$. In fact $q$ will just be the distribution that has $q_i=S^*$ for all $i$. Since $\sum_iq_i=\sum_ip_i=1$, we have,
\[ \sum_i|q_i-p_i| = 2\sum_{i: p_i > q_i}(p_i-q_i) = 2\sum_{i: p_i > S^*}(p_i-S^*) = 2\delta.\]

To show the other direction, suppose we have $q$ such that $\|q-p\|_1 \leq 2\delta$ and $H_{\min}(q)=\log(1/S')$. Since $q_i \leq S'$ for all $i$, we have,
\[ \sum_{i: p_i > S'}(p_i-S') \leq \sum_{i: p_i > S'}(p_i-q_i) \leq \sum_{i:p_i>q_i}(p_i-q_i) \leq \delta.\]
This completes the proof.
\end{proof}

\subsection{Concentration inequalities}
We will need to use a number of classical and quantum concentration inequalities, which we list here.

\begin{fact}[Weak law of large numbers] \label{fact:weaklaw}
Let $X$ be a real-valued random variable with $\mathbb{E}[|X|] < \infty$. Let $X_1, \ldots, X_n$ be i.i.d. realizations of $X$. Then, for all $\eps > 0$, it holds that
\[\lim_{n \to \infty} \Pr\left[\left|\frac{1}{n} \sum_{i=1}^nX_i - \mathbb{E}[X]\right| > \varepsilon\right] = 0.\]
\end{fact}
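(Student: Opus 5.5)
We prove the statement by truncation followed by Chebyshev's inequality. Finiteness of $\mathbb{E}[|X|]$ does not give a finite variance, so Chebyshev cannot be applied to the $X_i$ directly. Instead we split each variable into a bounded part, which has controllable variance, and a tail part, which is small in $L^1$.

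Fix $\eps > 0$ and write $\mu = \mathbb{E}[X]$. For each $n$, choose a truncation level $b_n = \eta n$, with $\eta > 0$ a small parameter fixed later. Set
\[Y_i = X_i\mathbf{1}[|X_i|\le b_n], \qquad Z_i = X_i - Y_i .\]
The deviation then splits into three terms:
\[\frac1n\sum_i X_i - \mu = \Big(\frac1n\sum_i Y_i - \mathbb{E}[Y_1]\Big) + \frac1n\sum_i Z_i + \big(\mathbb{E}[Y_1]-\mu\big).\]
The third term is deterministic. It equals $-\mathbb{E}[X\mathbf{1}[|X|>b_n]]$, which tends to $0$ by dominated convergence, since $b_n\to\infty$ and $X$ is integrable. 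For the second term, Markov's inequality gives
\[\Pr\Big[\big|\tfrac1n\textstyle\sum_i Z_i\big| > \eps/3\Big] \le \frac{3}{\eps}\,\mathbb{E}\big[|X|\mathbf{1}[|X|>b_n]\big],\]
and this also tends to $0$ by dominated convergence.

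The first term is the main step. The $Y_i$ are i.i.d., so Chebyshev's inequality gives
\[\Pr\Big[\big|\tfrac1n\textstyle\sum_i Y_i - \mathbb{E}[Y_1]\big|>\eps/3\Big] \le \frac{9\,\mathrm{Var}(Y_1)}{n\eps^2} \le \frac{9\,\mathbb{E}\big[X^2\mathbf{1}[|X|\le \eta n]\big]}{n\eps^2}.\]
We bound the numerator by splitting at a fixed level $M$:
\[\mathbb{E}\big[X^2\mathbf{1}[|X|\le \eta n]\big] \le M\,\mathbb{E}[|X|] + \eta n\,\mathbb{E}\big[|X|\mathbf{1}[|X|>M]\big].\]
Dividing by $n$, the first piece vanishes as $n\to\infty$. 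The second piece is at most $\eta\,\mathbb{E}[|X|]$, which is made arbitrarily small by the choice of $\eta$.

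To finish, note that the total failure probability is bounded by the sum of the three contributions, using a union bound and the fact that the deterministic term is eventually below $\eps/3$. Hence $\limsup_n \Pr\big[\,|\tfrac1n\sum_i X_i-\mu|>\eps\,\big] \le 9\eta\,\mathbb{E}[|X|]/\eps^2$. Since $\eta>0$ was arbitrary, the limit is $0$.

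The main obstacle is balancing the truncation level. If $b_n$ grows too slowly, the tail term does not vanish. If it grows too fast, the variance bound fails. Taking $b_n$ linear in $n$ with a free small constant $\eta$ resolves this.

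Alternatively, if a quicker argument is acceptable, we can use characteristic functions. Integrability gives $\varphi_X(t) = 1 + i\mu t + o(t)$, so $\varphi_X(t/n)^n \to e^{i\mu t}$. Lévy's continuity theorem then gives convergence in distribution to the constant $\mu$, and convergence in distribution to a constant is equivalent to convergence in probability.
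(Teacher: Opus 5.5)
Your truncation-plus-Chebyshev argument is correct, and so is the characteristic-function alternative. The paper gives no proof here: the weak law is quoted as a standard fact. It is used only once, in the proof of Lemma~\ref{prop:majorizes}. There it is applied entrywise, to real and imaginary parts separately, to the sample mean of the matrices $\frac{p(c|P_i)}{p(c)}\chi^{B_1B_2}_{P_i,c}$, whose entries are bounded in absolute value by $1/p(c)$. For that use, plain Chebyshev on bounded variables already suffices. What your argument buys is the full $L^1$ version as stated.

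One small correction to your closing commentary on balancing $b_n$. You control $\frac1n\sum_i Z_i$ in $L^1$ via Markov, so the tail term vanishes for any $b_n\to\infty$; there is no lower growth constraint. For example, $b_n=\sqrt n$ gives $\frac1n\mathbb{E}[X^2\mathbf{1}[|X|\le\sqrt n]]\le \mathbb{E}[|X|]/\sqrt n\to 0$ directly. That removes the split at $M$ and the final $\eta\to 0$ step. Even with $b_n=\eta n$, sending $M\to\infty$ already kills the second piece for fixed $\eta$. A constraint like $b_n\gtrsim n$ only arises in the textbook variant that bounds the tail event by $n\Pr[|X|>b_n]$. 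None of this affects the validity of your proof.
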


We have the following standard facts about the Gaussian distribution:
\begin{fact} \label{lem:gaussianbound}
    For all $c > 0$ and $a > 0$,
    \[\int_a^\infty \exp(-ct^2)dt \leq \frac{1}{2ca} \exp(-ca^2).\]
\end{fact}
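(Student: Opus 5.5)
The plan is to prove the stated Gaussian tail bound (Fact~\ref{lem:gaussianbound}) by the standard trick of inserting the factor $t/a \geq 1$ into the integrand. For all $t \geq a > 0$ we have $t/a \geq 1$. Since the integrand $\exp(-ct^2)$ is nonnegative, we get
\[\int_a^\infty \exp(-ct^2)\,dt \leq \int_a^\infty \frac{t}{a}\exp(-ct^2)\,dt.\]
The right-hand side has an elementary antiderivative. Indeed, $\frac{d}{dt}\left(-\frac{1}{2c}\exp(-ct^2)\right) = t\exp(-ct^2)$, which gives
\[\int_a^\infty \frac{t}{a}\exp(-ct^2)\,dt = \frac{1}{a}\left[-\frac{1}{2c}\exp(-ct^2)\right]_a^\infty = \frac{1}{2ca}\exp(-ca^2).\]
This uses $c>0$, so that the boundary term at infinity vanishes.

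Chaining the two displays yields the claimed inequality. The only points needing care are the positivity assumptions. We need $a>0$ to divide by $a$ and to have $t/a \geq 1$ on the whole domain. We need $c>0$ for convergence of the integral and for the vanishing of the boundary term. There is no real obstacle here: the single idea is the multiplication by $t/a$, which makes the integral exactly computable.
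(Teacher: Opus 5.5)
Your proof is correct and complete. The paper gives no proof of its own and simply cites this as a standard Gaussian tail bound; inserting the factor $t/a \geq 1$ and integrating $t\exp(-ct^2)$ exactly is the standard argument behind it.
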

\begin{fact} \label{fact:gaussianintegral}
    For all $c > 0$,
    \[\int_0^\infty \exp(-ct^2)dt = \sqrt{\frac{\pi}{4c}}.\]
\end{fact}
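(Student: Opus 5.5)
We need to prove Fact (Gaussian integral) — the last statement: for all $c>0$, $\int_0^\infty \exp(-ct^2)\,dt = \sqrt{\pi/(4c)}$.

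The plan is to reduce to the standard Gaussian integral by a change of variables, and then evaluate that integral by the usual squaring trick in polar coordinates.

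First, substitute $s = \sqrt{c}\,t$, so that $dt = ds/\sqrt{c}$ and the limits $0,\infty$ are preserved since $c>0$. This gives
\[\int_0^\infty \exp(-ct^2)\,dt = \frac{1}{\sqrt{c}}\int_0^\infty e^{-s^2}\,ds,\]
and it therefore suffices to show $I := \int_0^\infty e^{-s^2}\,ds = \frac{\sqrt{\pi}}{2}$. Indeed, $\frac{1}{\sqrt c}\cdot\frac{\sqrt\pi}{2} = \sqrt{\pi/(4c)}$.

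To evaluate $I$, note first that the integrand is positive and bounded by $1$ on $[0,1]$ and by $e^{-s}$ on $[1,\infty)$, so $I$ is finite. Then write $I^2$ as a double integral over the first quadrant:
\[I^2 = \int_0^\infty\!\!\int_0^\infty e^{-(x^2+y^2)}\,dx\,dy.\]
Tonelli's theorem justifies this, since the integrand is nonnegative. Passing to polar coordinates $x = r\cos\theta$, $y=r\sin\theta$ with $r\in[0,\infty)$, $\theta\in[0,\pi/2]$ and Jacobian $r$ gives
\[I^2 = \int_0^{\pi/2}\!\!\int_0^\infty e^{-r^2} r\,dr\,d\theta = \frac{\pi}{2}\cdot\frac{1}{2} = \frac{\pi}{4}.\]
Since $I>0$, it follows that $I=\sqrt{\pi}/2$.

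The only step needing care is justifying the Fubini/Tonelli interchange and the polar change of variables for an improper integral. Nonnegativity of the integrand handles both: either apply Tonelli directly on $[0,\infty)^2$, or integrate over quarter-disks of radius $R$ and let $R\to\infty$ using monotone convergence. Everything else is routine computation.
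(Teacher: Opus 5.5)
Your proof is correct: the substitution $s=\sqrt{c}\,t$ reduces the claim to $\int_0^\infty e^{-s^2}\,ds=\sqrt{\pi}/2$, and the polar-coordinates squaring argument (justified by Tonelli, since the integrand is nonnegative) is the standard way to evaluate it. The paper gives no proof here and simply cites this as a standard fact about the Gaussian, so there is nothing to compare beyond noting that you have supplied the textbook argument.
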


We also have the following property of sub-Gaussian random variables:
\begin{fact}[\cite{vershynin2018hdp}, Proposition 2.6.1] \label{prop:subgaussian}
Let $X$ be a random variable. The following properties are equivalent, with the parameters $K_i>0$ differing by at most an absolute constant factor.

\begin{enumerate}
\item[(i)] There exists $K_1>0$ such that
\[
\Pr\bigl[|X|\ge t\bigr] \le 2\exp\!\left(-\frac{t^2}{K_1^2}\right)
\quad \text{for all } t\ge 0.
\]

\item[(ii)]There exists $K_2>0$ such that
\[
\bigl(\mathbb{E}[|X|^r]\bigr)^{1/r} \le K_2 \sqrt{r}
\quad \text{for all } r\ge 1.
\]
\end{enumerate}
\end{fact}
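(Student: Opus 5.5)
This statement is quoted from \cite{vershynin2018hdp}, so we could simply cite it. For completeness, here is how I would prove the two implications directly. The approach is the standard moment/tail duality: integrate the tail bound against $t^{r-1}$ to control moments, and apply Markov's inequality to a well-chosen moment to control the tail.

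\emph{(i) $\Rightarrow$ (ii).} Use the layer-cake formula
\[\mathbb{E}[|X|^r] = \int_0^\infty r t^{r-1}\Pr[|X|\geq t]\,dt \leq 2r\int_0^\infty t^{r-1} e^{-t^2/K_1^2}\,dt.\]
The substitution $u = t^2/K_1^2$ turns the right-hand side into $r K_1^r\,\Gamma(r/2)$. It then suffices to bound $(r\Gamma(r/2))^{1/r}$ by $C\sqrt{r}$ for $r\geq 1$. I would do this with the crude estimate $\Gamma(x)\leq \max(1,x)^x$ together with $r^{1/r}\leq e^{1/e}$. This yields
\[(\mathbb{E}[|X|^r])^{1/r}\leq C K_1\sqrt{r},\]
so (ii) holds with $K_2 = CK_1$.

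\emph{(ii) $\Rightarrow$ (i).} For $t>0$ and any $r\geq 1$, Markov's inequality applied to $|X|^r$ gives
\[\Pr[|X|\geq t]\leq \left(\frac{K_2\sqrt{r}}{t}\right)^r.\]
Choose $r = t^2/(e^2K_2^2)$, so that $K_2\sqrt{r}/t = 1/e$ and the bound becomes $\exp(-t^2/(e^2K_2^2))$. This choice is admissible only when $r\geq 1$, i.e.\ $t\geq eK_2$. For $t<eK_2$, use the trivial bound $\Pr[\cdot]\leq 1$. To make that bound fit the required form, take $K_1 = eK_2/\sqrt{\ln 2}$. Then $2\exp(-t^2/K_1^2)\geq 2e^{-\ln 2} = 1$ whenever $t\leq eK_2$. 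For larger $t$, the bound $2\exp(-t^2/K_1^2)$ only weakens $\exp(-t^2/(e^2K_2^2))$, since $K_1 > eK_2$ and there is a factor $2$. So (i) holds with $K_1$ an absolute constant multiple of $K_2$.

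The only mildly delicate points are the Gamma-function estimate in the first direction and the small-$t$ regime (where $r<1$) in the second. The choice of $K_1$ above handles the latter.
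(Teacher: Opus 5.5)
The paper does not prove this Fact; it only cites Vershynin. Only the direction (i)$\Rightarrow$(ii) is used, in the proof of Theorem~\ref{thm:eigvalbound}, where it turns the tail bound of Lemma~\ref{lem:concentration} into the moment bound $O(\sqrt{r/d})$.

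Your argument is correct apart from one small, fixable slip in (i)$\Rightarrow$(ii). The estimate $\Gamma(x)\le \max(1,x)^x$ is false on $(0,1)$, where $\Gamma(x)>1$; for example $\Gamma(1/2)=\sqrt{\pi}$. Since $r\ge 1$ only guarantees $x=r/2\ge 1/2$, the range $r\in[1,2)$ is affected.

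The repair is immediate:
\begin{itemize}
\item $\Gamma$ is convex with $\Gamma(1/2)=\sqrt{\pi}$ and $\Gamma(1)=\Gamma(2)=1$. Hence $\Gamma\le\sqrt{\pi}$ on $[1/2,1]$ and $\Gamma\le 1$ on $[1,2]$.
\item For $x\ge 2$, the relation $\Gamma(x)=(x-1)\Gamma(x-1)$ and induction give $\Gamma(x)\le (x-1)^x\le x^x$.
\item So $\Gamma(x)\le\sqrt{\pi}\max(1,x)^x$ for all $x\ge 1/2$. This costs only an extra factor $\pi^{1/(2r)}\le\sqrt{\pi}$ in the ratio $K_2/K_1$.
\end{itemize}
Vershynin's proof uses $\Gamma(x)\le 3x^x$ for $x\ge 1/2$ for exactly this reason.

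Your (ii)$\Rightarrow$(i) argument is complete, including the small-$t$ regime handled by the choice $K_1=eK_2/\sqrt{\ln 2}$.

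Compared with Vershynin's proof, your (i)$\Rightarrow$(ii) is the same layer-cake computation, but your (ii)$\Rightarrow$(i) takes a different route. Vershynin proves the proposition as a cycle through exponential moments of $X^2$:
\begin{itemize}
\item moments imply $\mathbb{E}\exp(\lambda^2X^2)\le\exp(K_3^2\lambda^2)$ for small $|\lambda|$, via the Taylor series and Stirling;
\item this implies $\mathbb{E}\exp(X^2/K_4^2)\le 2$;
\item Markov's inequality applied to $\exp(X^2/K_4^2)$ then gives the tail bound.
\end{itemize}
You instead apply Markov directly to $|X|^r$ with the optimized choice $r=t^2/(e^2K_2^2)$. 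Your route is shorter, self-contained, and gives an explicit constant. Vershynin's longer cycle also establishes equivalence with the exponential-moment characterizations that define the sub-Gaussian $\psi_2$-norm. That extra equivalence is not needed for the two-property statement quoted here, nor for the paper's single use of it.
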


\begin{lemma} \label{lem:expectationbound}
Let $X$ be a probability space, and let $p$ and $q$ be probability distributions on the space with $q(x) \leq K p(x)$ for all $x$, for some $K \geq 2$. Letting $r = \log K$, for all bounded measurable functions $f: X \mapsto \mathbb{R}_{\ge 0}$, we have
\[\mathbb{E}_{x\sim q}\left[f(x)\right] \leq 2  \left(\mathbb{E}_{x\sim p}\left[(f(x))^r\right]\right)^{1/r}\]
\end{lemma}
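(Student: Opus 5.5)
The natural route is Hölder's inequality applied to the likelihood ratio $g(x) = q(x)/p(x)$, which satisfies $0 \le g \le K$ pointwise. Write
\[\mathbb{E}_{x\sim q}[f(x)] = \mathbb{E}_{x\sim p}[g(x) f(x)].\]
We may assume $r = \log K \ge 1$, which holds since $K \ge 2$. Let $r' = r/(r-1)$ be the conjugate exponent; when $r=1$ we take $r'=\infty$, and that case is handled directly below. Hölder's inequality then gives
\[\mathbb{E}_p[g f] \le \left(\mathbb{E}_p[g^{r'}]\right)^{1/r'} \left(\mathbb{E}_p[f^r]\right)^{1/r}.\]
Since $f$ is bounded and nonnegative, all of these quantities are finite. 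In the measure-theoretic setting, $g$ is the Radon--Nikodym derivative, which exists because $q \le Kp$ implies $q \ll p$.

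It remains to bound the first factor by $2$. Because $0 \le g \le K$ and $r' \ge 1$, we have $g^{r'} = g \cdot g^{r'-1} \le g K^{r'-1}$. Using $\mathbb{E}_p[g] = 1$, this yields
\[\left(\mathbb{E}_p[g^{r'}]\right)^{1/r'} \le K^{(r'-1)/r'} = K^{1/r}.\]
With $r = \log K$ taken in base 2, we get $K^{1/r} = 2^{\log K/\log K} = 2$. This is exactly the factor $2$ in the statement.

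For the edge case $r=1$, i.e.\ $K=2$, the bound can be read off directly: $\mathbb{E}_q[f] \le \|g\|_\infty \mathbb{E}_p[f] \le 2\,\mathbb{E}_p[f]$.

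No step here is a real obstacle. The only points needing care are the choice of Hölder exponents, which must be matched so that $K^{1/r}$ collapses to $2$, and the measure-theoretic justification that $g$ exists and is bounded by $K$ $p$-almost everywhere.
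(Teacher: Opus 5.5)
Your proposal is correct and matches the paper's proof essentially line for line: Hölder's inequality on $\mathbb{E}_p[wf]$ with conjugate exponents $r,r'$, then the bound $w^{r'} \le K^{r'-1}w$ together with $\mathbb{E}_p[w]=1$ to get the factor $K^{1/r}$, and finally $K^{1/\log K}=2$. Your separate handling of the $r=1$ case (where $r'=\infty$) and the Radon--Nikodym remark are small additions of care that the paper leaves implicit.
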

\begin{proof}
Let $w(x) = \frac{q(x)}{p(x)}$, so that $w(x) \leq K$ for all $x$ by assumption. Given $r \geq 1$,  take $r' \geq 1$ to be its H\"older conjugate, i.e. such that $1/r + 1/r' = 1$. We have:
\begin{align*}
\mathbb{E}_q[f(x)] &= \mathbb{E}_p[w(x)f(x)]\\
                &\leq \left(\mathbb{E}_p[(w(x))^{r'}]\right)^{\frac{1}{r'}} \left(\mathbb{E}_p[(f(x))^r]\right)^{\frac{1}{r}} \\
                &\leq \left(\mathbb{E}_p[K^{r'-1} w(x)]\right)^{\frac{1}{r'}} \left(\mathbb{E}_p[(f(x))^r]\right)^{\frac{1}{r}}\\
                &= K^{\frac{1}{r}} \left(\mathbb{E}_p[(f(x))^r]\right)^{\frac{1}{r}}\\
\end{align*}
where the expectation form of H\"older's inequality was applied to obtain the first inequality. Setting $r = \log K$, we see that the statement of the lemma is obtained.
\end{proof}

\begin{fact}[Operator Chernoff bound, \cite{ahlswede2001}] \label{fact:matrixchernoff}
Let $X_1,\dots,X_m$ be i.i.d. $d \times d$ PSD matrices such that $\|X_i\| \leq 1$ almost surely. 
Let
\[
A := \mathbb E[X_j],
\qquad \alpha = \lambda_{\min}(A) 
\]
Then, for all $0 < \eps < 1/2$, 
\[
\Pr\!\left[
(1-\eps)A \preceq \frac{1}{m}\sum_{j=1}^m X_j \preceq (1+\eps)A
\right]
\;\ge\;
1 - 2d \exp\!\left(-\frac{m\alpha\eps^2}{2\ln 2}\right).
\]
\end{fact}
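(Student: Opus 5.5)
The plan is to follow the Ahlswede--Winter matrix Laplace transform method, after first whitening the matrices so that the target is comparison with a multiple of the identity. If $\alpha=0$ the bound is vacuous, so assume $A\succ 0$. Set $Y_j := \alpha A^{-1/2}X_jA^{-1/2}$. Then:
\begin{itemize}
\item $\mathbb{E}[Y_j]=\alpha I$;
\item $0\preceq Y_j$;
\item $\|Y_j\| \le \alpha\|A^{-1}\|\,\|X_j\| \le 1$, since $\|A^{-1}\| = 1/\alpha$;
\item conjugation by $A^{1/2}$ preserves the L\"owner order, so the event $(1-\eps)A\preceq \frac1m\sum_j X_j\preceq(1+\eps)A$ is exactly the event $(1-\eps)\alpha I\preceq \frac1m\sum_j Y_j\preceq (1+\eps)\alpha I$.
\end{itemize}
It therefore suffices to bound each of the two one-sided failure probabilities for $S_m=\sum_j Y_j$ by $d\exp\!\left(-\frac{m\alpha\eps^2}{2\ln 2}\right)$ and take a union bound, which produces the factor $2d$.

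For the upper tail, I would fix $t>0$. The event that $S_m\not\preceq m(1+\eps)\alpha I$ means $\lambda_{\max}(S_m)>m(1+\eps)\alpha$, which implies $\Tr e^{tS_m}\ge e^{tm(1+\eps)\alpha}$. Markov's inequality then gives
\[
\Pr\big[S_m\not\preceq m(1+\eps)\alpha I\big]\le e^{-tm(1+\eps)\alpha}\,\mathbb{E}\Tr e^{tS_m}.
\]
The key step is to peel off one summand at a time using Golden--Thompson, $\Tr e^{B+C}\le \Tr(e^Be^C)$, together with independence:
\[
\mathbb{E}\Tr e^{tS_m}\le \mathbb{E}\Tr\!\big(e^{tS_{m-1}}\,\mathbb{E}[e^{tY_m}]\big)\le \big\|\mathbb{E}e^{tY_m}\big\|\;\mathbb{E}\Tr e^{tS_{m-1}}.
\]
Iterating this down to $\Tr I=d$ gives $\mathbb{E}\Tr e^{tS_m}\le d\,\|\mathbb{E}e^{tY}\|^m$. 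Next, since $0\preceq Y\preceq I$, operator convexity of $x\mapsto e^{tx}$ on $[0,1]$ in the form $e^{tx}\le 1+(e^t-1)x$ yields $e^{tY}\preceq I+(e^t-1)Y$. Hence $\|\mathbb{E}e^{tY}\|\le 1+(e^t-1)\alpha$.

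Optimizing over $t$ is then the scalar Chernoff computation, and it gives the bound $d\exp\!\big(-m\,D((1+\eps)\alpha\,\|\,\alpha)\big)$, where $D$ is the binary relative entropy in nats. The lower tail is handled identically: apply the same argument to $-S_m$ with $t<0$, or equivalently to $I-Y_j$, which gives $d\exp\!\big(-m\,D((1-\eps)\alpha\,\|\,\alpha)\big)$. Finally, I would use the elementary estimate $D((1\pm\eps)\alpha\,\|\,\alpha)\ge \frac{\eps^2\alpha}{2\ln 2}$ for $0<\eps<1/2$. This follows, for example, from $D(x\|y)\ge (x-y)^2/(2\max(x,y))$ together with $1+\eps\le 3/2$, and it is loose enough to absorb the $\ln 2$ in the stated constant.

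I expect the main obstacle to be the noncommutative peeling step. Golden--Thompson only holds for two matrices, so the induction has to be organized carefully: condition on $Y_1,\dots,Y_{m-1}$, pull the expectation over $Y_m$ inside the trace by linearity, and then bound $\Tr(PQ)\le \|Q\|\Tr P$ for $P\succeq 0$. The remaining work is routine bookkeeping of the relative-entropy constants.
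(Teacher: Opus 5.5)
\textbf{The outline is correct up to the last step, but the final constant does not follow, and the constant as stated cannot be proved.}

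The paper gives no proof of this Fact; it is cited from Ahlswede--Winter. Your outline is essentially that argument: whitening via $Y_j=\alpha A^{-1/2}X_jA^{-1/2}$, the matrix Laplace transform with Golden--Thompson peeling, $e^{tY}\preceq I+(e^t-1)Y$, scalar Chernoff optimization, and a union bound over the two tails. Everything through
\[
\Pr[\text{failure}]\le d\,e^{-m D((1+\varepsilon)\alpha\|\alpha)}+d\,e^{-m D((1-\varepsilon)\alpha\|\alpha)},\qquad D\ \text{in nats},
\]
is correct. One wording fix: $e^{tY}\preceq I+(e^t-1)Y$ follows from \emph{scalar} convexity through the functional calculus, since both sides are functions of the single operator $Y$. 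The map $x\mapsto e^{tx}$ is not operator convex, and you do not need it to be.

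The gap is the final estimate $D((1\pm\varepsilon)\alpha\|\alpha)\ge \varepsilon^2\alpha/(2\ln 2)$, which is false in nats.
\begin{enumerate}
\item Because $\ln 2<1$, the constant $1/(2\ln 2)\approx 0.72$ is \emph{larger} than $1/2$. The $\ln 2$ makes the target harder, not easier to absorb.
\item Your own route via $D(x\|y)\ge (x-y)^2/(2\max(x,y))$ gives only $\varepsilon^2\alpha/(2(1+\varepsilon))\ge\varepsilon^2\alpha/3$ for the upper tail, and $\varepsilon^2\alpha/2$ for the lower tail.
\item No sharper estimate can close this. As $\alpha\to 0$, $D((1+\varepsilon)\alpha\|\alpha)/\alpha\to(1+\varepsilon)\ln(1+\varepsilon)-\varepsilon\le\varepsilon^2/2$.
\end{enumerate}

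In fact, the Fact as written, with natural $\exp$ as the paper uses elsewhere, cannot be proved. Take $d=1$ and $X_j$ i.i.d.\ Bernoulli$(\alpha)$. By Cram\'er's theorem, $\Pr\bigl[\frac1m\sum_j X_j>(1+\varepsilon)\alpha\bigr]=e^{-m(D((1+\varepsilon)\alpha\|\alpha)+o(1))}$. For example, with $\alpha=0.01$ and $\varepsilon=0.1$ one has $D\approx 4.9\cdot 10^{-5}<7.2\cdot 10^{-5}\approx\varepsilon^2\alpha/(2\ln 2)$. So for large $m$ this probability exceeds $2e^{-m\alpha\varepsilon^2/(2\ln 2)}$.

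What your argument does establish is the bound with $2d\exp(-m\alpha\varepsilon^2/3)$. That is all the paper actually needs, since its only use of the Fact, in the Kraus operator protocol, hides the constant inside $K=\Theta(Nk/d)$.
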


We will need some results on the concentration of measure on the unitary group, which holds for Lipschitz functions. We define what Lipschitz functions are first, and state a useful property of such functions.
\begin{definition}
Given two metric spaces $(X, d_X)$ and $(Y, d_Y)$ (where $d_X$ is the metric on $X$ and $d_Y$ on $Y$), a function $f:X\to Y$ is called $\kappa$-Lipschitz with respect to these norms if
\[ d_Y(f(x_1), f(x_2)) \leq \kappa\cdot d_X(x_1, x_2).\]
\end{definition}
For our applications, we will be dealing with functions from operators to $\mathbb{R}$; we will use the Schatten $2$-norm on operators (i.e. the Frobenius norm) and the absolute value on the reals.
\begin{fact} \label{fac:extlipschitz} Let $X$ be a metric space and let $\{f_i\}_{i \in I}$ be a collection of functions from $X$ into $\mathbb{R}$ which are all $\kappa$-Lipschitz. We have that the functions
\[f(x) = \sup_i(f_i(x))\]
\[g(x) = \inf_i(f_i(x))\]
are also $\kappa$-Lipschitz.
\end{fact}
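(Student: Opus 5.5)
Let $f_i$ be $\kappa$-Lipschitz for every $i\in I$. The plan is to prove the claim for $f=\sup_i f_i$ by a direct pointwise comparison, and then get the infimum case by negation.

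For the supremum, fix $x_1,x_2\in X$. For every index $i$, the Lipschitz property gives
\[ f_i(x_1) \le f_i(x_2) + \kappa\, d_X(x_1,x_2) \le f(x_2) + \kappa\, d_X(x_1,x_2). \]
Taking the supremum over $i$ on the left yields $f(x_1)\le f(x_2)+\kappa\, d_X(x_1,x_2)$. Exchanging the roles of $x_1$ and $x_2$ gives the reverse inequality, so $|f(x_1)-f(x_2)|\le \kappa\, d_X(x_1,x_2)$.

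For the infimum, note that $g=\inf_i f_i=-\sup_i(-f_i)$. Each $-f_i$ is $\kappa$-Lipschitz, since $|(-f_i)(x_1)-(-f_i)(x_2)|=|f_i(x_1)-f_i(x_2)|$. The supremum case then shows that $\sup_i(-f_i)$ is $\kappa$-Lipschitz, and negation preserves this.

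The only real obstacle is finiteness. The absolute-value comparison above is meaningful only if $f$ and $g$ are real-valued. This is implicit in the statement that $f,g$ map into $\mathbb{R}$, i.e.\ the supremum is finite at some point (equivalently, at every point). Indeed, if $f(x_0)<\infty$, then the first displayed inequality with $x_2=x_0$ shows $f(x_1)\le f(x_0)+\kappa\, d_X(x_1,x_0)<\infty$ for all $x_1$. So finiteness at one point propagates everywhere, and the argument goes through.

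In the paper's applications this is automatic: the functions are continuous, defined on a compact set such as $U(d)$ or $G(d,k)$, and uniformly bounded.
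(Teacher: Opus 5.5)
Your argument is correct and is the standard proof: bound each $f_i(x_1)$ by $f(x_2)+\kappa\, d_X(x_1,x_2)$, take the supremum, symmetrize, and get the infimum case via $g=-\sup_i(-f_i)$. The paper states this as a Fact without proof, so there is no argument of its own to compare against. Your finiteness caveat is a genuine correction: as literally stated the Fact needs $f,g$ to be real-valued (with $f_i\equiv i$ the supremum is $+\infty$) and $I\neq\emptyset$. It is harmless in the paper's use, where $f_A$ is an infimum of trace distances over a set $A$ of positive measure and so takes values in $[0,1]$.
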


We can now give the following concentration inequality on the unitary group, which uses Lipschitz constants.
\begin{fact}[\cite{Meckes_2019}, Theorem 5.17] \label{thm:concentration_}
Let $U(d)$ denote the group of $d\times d$ unitary matrices.  
Let $f : U(d)\to\mathbb{R}$ be a $\kappa$-Lipschitz function with respect to the Schatten $2$-norm on $U(d)$.
There exists a universal constant $c>0$ such that, for all $t>0$,
\[
\Pr\bigl[|f(U)-\mathbb{E}[f(U)]| \ge t\bigr]
\;\le\;
\exp\,\bigl(-c\,d\,t^2/\kappa^2\bigr).
\]
\end{fact}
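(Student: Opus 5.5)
Since this is an imported result (\cite{Meckes_2019}, Theorem 5.17), I would reprove it by the standard route: a logarithmic Sobolev inequality (LSI) on the unitary group, followed by the Herbst argument.

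\textbf{Step 1: reduce to geodesic Lipschitz functions on $SU(d)$.} Equip $U(d)$ and $SU(d)$ with the Riemannian metric induced by the real inner product $\langle X,Y\rangle=\mathrm{Re}\,\Tr(X^\dagger Y)$. Geodesic distance dominates the Frobenius distance $\|U-V\|_2$. So a $\kappa$-Lipschitz $f$ in the sense of the statement is also $\kappa$-Lipschitz for the geodesic distance, and it suffices to prove concentration in the geodesic metric. For $SU(d)$ with this bi-invariant metric, the Ricci curvature is $\mathrm{Ric}=\frac{d}{2}g$. The Bakry--\'Emery criterion then gives an LSI for normalized Haar measure on $SU(d)$ with constant $O(1/d)$:
\[\mathrm{Ent}(g^2)\le \tfrac{C}{d}\,\mathbb{E}|\nabla g|^2 .\]

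\textbf{Step 2: pass from $SU(d)$ to $U(d)$.} $U(d)$ itself has a flat central direction, so Bakry--\'Emery does not apply directly. Instead, write Haar measure on $U(d)$ as the pushforward of $\mathrm{Unif}[0,2\pi/d)\times \mathrm{Haar}(SU(d))$ under the map
\[(\theta,V)\mapsto e^{i\theta}V.\]
This works because the map is onto and, by invariance, pushes the product measure to Haar measure; the $\mathbb{Z}_d$ overlap of the cover is removed by restricting $\theta$ to $[0,2\pi/d)$. For the composed function $h(\theta,V)=f(e^{i\theta}V)$ we have
\[|h(\theta,V)-h(\theta',V')|\le\kappa\bigl(\sqrt d\,|\theta-\theta'|+\|V-V'\|_2\bigr).\]
The uniform measure on an interval of length $L=2\pi/d$ satisfies an LSI with constant $O(L^2)=O(1/d^2)$. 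Put the metric $\sqrt d\,|\theta-\theta'|$ on the $\theta$-factor. In that metric $h$ is $\kappa$-Lipschitz in $\theta$, and the LSI constant becomes $O(d\cdot 1/d^2)=O(1/d)$. Tensorization of LSIs then gives an LSI with constant $O(1/d)$ on the product, and $h$ is $O(\kappa)$-Lipschitz there.

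\textbf{Step 3: Herbst argument.} Apply the LSI to $g=e^{\lambda h/2}$. This gives the differential inequality
\[\frac{d}{d\lambda}\Bigl(\tfrac1\lambda\log \mathbb{E}e^{\lambda(h-\mathbb{E}h)}\Bigr)\le \tfrac{C\kappa^2}{d},\]
which integrates to the bound
\[\mathbb{E}e^{\lambda(h-\mathbb{E}h)}\le e^{C\kappa^2\lambda^2/d}.\]
Markov's inequality, optimized over $\lambda$, yields one-sided tails $e^{-dt^2/(4C\kappa^2)}$. Since $\mathbb{E}h=\mathbb{E}f$ and $h$ has the same distribution as $f(U)$, this transfers directly to $f$.

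\textbf{Main obstacle: the two-sided bound without a prefactor.} Combining the two one-sided tails gives $2\exp(-c'dt^2/\kappa^2)$. The factor $2$ can be absorbed by shrinking $c$ only when $c'dt^2/\kappa^2\ge\ln 4$. For small $t$, I would instead use Chebyshev's inequality with the variance bound $\mathrm{Var}(f)\le C\kappa^2/d$ (Poincar\'e inequality, a consequence of the LSI). This gives $\Pr[|f-\mathbb{E}f|\ge t]\le C\kappa^2/(dt^2)$. That is below $e^{-c\,dt^2/\kappa^2}$ when $dt^2/\kappa^2$ is a sufficiently large constant, but it is not below it when $dt^2/\kappa^2$ is very small. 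In the intermediate regime I would choose $c$ small enough to cover the remaining range of $x=dt^2/\kappa^2$ where the bound is nontrivial; this part may need to be checked by hand. If it cannot be closed, the honest conclusion is that the statement holds with a prefactor $2$, as in Meckes' formulation. That version is all the later applications in the paper use: the union bounds there only care about the exponent.
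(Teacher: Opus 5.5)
Your Steps 1--3 are essentially the proof in the cited source: Bakry--\'Emery on $SU(d)$ with $\mathrm{Ric}=\frac d2 g$, the coupling $(\theta,V)\mapsto e^{i\theta}V$ from $[0,2\pi/d)\times SU(d)$, tensorization, and then Herbst. The paper itself gives nothing beyond the citation.

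These steps correctly show that each one-sided tail, $\Pr[f-\mathbb{E}f\ge t]$ and $\Pr[f-\mathbb{E}f\le -t]$, is at most $e^{-c\,dt^2/\kappa^2}$. That is the one-sided form in which Meckes states Theorem 5.17. They do not give the two-sided, prefactor-free inequality as written, and your patch for small $t$ cannot work.

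Put $x=dt^2/\kappa^2$. The target $e^{-cx}$ is strictly below $1$ for every $x>0$. So as $x\to0$ you must show $\Pr[|f-\mathbb{E}f|<t]\ge 1-e^{-cx}\approx cx$, which is a \emph{lower} bound on the mass near the mean. Chebyshev only bounds tail mass from above, so it is vacuous for $x\lesssim C$. Shrinking $c$ never makes the target trivial.

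In fact no argument using only the moment or mgf information that the LSI produces can give this. Take a symmetric two-point variable $\pm s\kappa/\sqrt d$ with $cs^2\le\ln 2$. It has mean zero and satisfies both one-sided bounds for all $t>0$, yet $\Pr[|X|\ge t]=1$ for $0<t\le s\kappa/\sqrt d$.

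The missing input is geometric: such a variable cannot be a Lipschitz function on the connected group $U(d)$. To use this you would need an isoperimetric inequality on $U(d)$, e.g.\ via the co-area formula together with $\mathbb{E}[f-\mathbb{E}f]=0$. It would have to show that the $(t/\kappa)$-neighbourhood of the level set $\{f=\mathbb{E}f\}$ has measure $\gtrsim dt^2/\kappa^2$ for small $x$. Short of that, the honest statement is the one-sided bound, or the two-sided bound with a factor $2$.

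Your remark that the weaker version suffices downstream is right, but not for the reason you give. In the proof of Theorem~\ref{thm:N-avtoworst}, the per-sample bound from Lemma~\ref{lem:isoperimetry} is raised to the $N$-th power. A prefactor $2$ would therefore become $2^N$, and the argument would break down whenever $dk\delta^4$ is below a constant. So there it is not only the exponent that matters.

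What rescues it is that Lemma~\ref{lem:isoperimetry} only uses a lower tail: on $\{g_t=0\}$ one has $g_t-\mathbb{E}g_t\le-\mu(A)$. For that tail your Herbst bound has no prefactor.

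The other uses tolerate a constant prefactor:
\begin{itemize}
\item Lemma~\ref{lem:concentration} only feeds the sub-Gaussian equivalence of Fact~\ref{prop:subgaussian}.
\item The union bound in Lemma~\ref{lem:expectation} absorbs a factor $2$ into the choice of $t_0$.
\item The equality lemma bounds an upper-tail event.
\end{itemize}
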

From this result we can derive the corresponding concentration of measure result on $G(d,k)$.

\begin{lemma} \label{prop:concentration}
    Let the projector $P$ be drawn uniformly from $G(d,k)$. Let $f: G(d,k) \to \mathbb{R}$ be a $\kappa$-Lipschitz function with respect to the Schatten $2$-norm. There exists a universal constant $c' > 0$ such that, for all $t>0$,
\[
\Pr\bigl[|f(P)-\mathbb{E}[f(P)]| \ge t\bigr]
\;\le\;
\exp\,\bigl(-c\,d\,t^2/\kappa^2\bigr).
\]
\end{lemma}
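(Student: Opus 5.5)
The plan is to pull $f$ back to the unitary group and then apply Fact~\ref{thm:concentration_}. Fix the reference projector $P_0 = \sum_{i=1}^k \state{i}$ and define $g: U(d)\to\mathbb{R}$ by
\[
g(U) = f\bigl(U P_0 U^\dagger\bigr).
\]
By the definition of the Haar measure on $G(d,k)$ given at the start of Section~\ref{sec:prelim}, the push-forward of the Haar measure on $U(d)$ under $U\mapsto UP_0U^\dagger = U_kU_k^\dagger$ is exactly the uniform distribution on $G(d,k)$. Hence $g(U)$ and $f(P)$ have the same distribution, and $\mathbb{E}_U[g(U)] = \mathbb{E}_P[f(P)]$. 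The tail event for $f$ is therefore identical in probability to the tail event for $g$.

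It remains to bound the Lipschitz constant of $g$ with respect to the Schatten $2$-norm. For unitaries $U,V$, write
\[
UP_0U^\dagger - VP_0V^\dagger = (U-V)P_0U^\dagger + VP_0(U-V)^\dagger .
\]
Using $\|AB\|_2 \le \|A\|_2\|B\|_\infty$ together with $\|P_0U^\dagger\|_\infty = \|VP_0\|_\infty = 1$, this gives
\[
\|UP_0U^\dagger - VP_0V^\dagger\|_2 \le 2\|U-V\|_2 .
\]
Consequently $|g(U)-g(V)| \le \kappa \|UP_0U^\dagger - VP_0V^\dagger\|_2 \le 2\kappa\|U-V\|_2$, so $g$ is $2\kappa$-Lipschitz. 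Applying Fact~\ref{thm:concentration_} to $g$ yields
\[
\Pr\bigl[|f(P)-\mathbb{E} f(P)|\ge t\bigr] \le \exp\bigl(-c\,d\,t^2/(4\kappa^2)\bigr).
\]
This is the claimed bound with $c' = c/4$; the statement's "$c$" in the exponent should be read as $c'$.

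The argument has no real obstacle. The only point needing care is that the Lipschitz condition on $f$ is stated with respect to the Schatten $2$-norm on the projectors themselves, as in the hypothesis, rather than the trace-distance metric the paper placed on $G(d,k)$. If one instead wanted the trace metric $\|P/k-Q/k\|_{\tr}$, one would convert norms via $\|X\|_1\le\sqrt{2k}\,\|X\|_2$ for the rank-$\le 2k$ difference, which changes the constant by a $k$-dependent factor. Since the lemma is stated for the Schatten $2$-norm, that conversion is not needed here.
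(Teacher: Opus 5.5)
Your proposal is correct and follows the paper's proof: you pull $f$ back to $U(d)$ via $U\mapsto U_kU_k^\dagger = UP_0U^\dagger$, show the pullback is $2\kappa$-Lipschitz in the Schatten $2$-norm using the same add-and-subtract triangle-inequality bound, and then invoke Fact~\ref{thm:concentration_} to get the constant $c/4$. One small correction: for the term $VP_0(U-V)^\dagger$ the inequality you need is $\|AB\|_2\le\|A\|_\infty\|B\|_2$ rather than the ordering you quoted, but both orderings hold, so this changes nothing.
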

\begin{proof}
A uniformly random $P$ is obtained by sampling a uniformly random $U \in U(d)$ and setting $P = U_k U_k^{\dagger}$, where $U_k$ refers to the first $k$ columns of $U$. This parameterization will let us apply Fact \ref{thm:concentration_}. Given $U, V \in U(d)$ with $U \neq V$, we have
\begin{align*}
|f(U_k U_k^\dagger) - f(V_k V_k^\dagger)| &\leq \kappa \|U_k U_k^\dagger - V_k V_k^\dagger\|_2\\
                                          &= \kappa\|U_k U_k^\dagger - U_k V_k^\dagger + U_k V_k^\dagger - V_k V_k^\dagger\|_2 \\
                                          &\leq \kappa\|U_k\| \|U_k^\dagger - V_k^\dagger\|_2 + \kappa \|V_k^\dagger\| \|U_k - V_k\|_2 \\
                                          &= 2\kappa \|U_k - V_k\|_2
\end{align*}
This shows that the map $g: U(d) \to \mathbb{R}$ defined by $g(U) = f(U_k U_k^\dagger)$ is $2\kappa$-Lipschitz. The result then follows from Fact \ref{thm:concentration_}.
\end{proof}

\subsection{Results about epsilon nets}
With respect to the distance metric on $G(d,k)$ we have defined previously, the following bounds are known on the covering number $N(G(d,k), \varepsilon)$, i.e. the size of the smallest $\varepsilon$-net.

\begin{fact}[\cite{Szalek}, see also \cite{Pajor1998}, Proposition 8]\label{lem:net1}
There exist universal constants $C > c > 0$ such that, for all $k \leq d/2$ and all $\varepsilon > 0$, setting $m=2k(d-k)$, 
\[\left(\frac{c}{\varepsilon}\right)^m \leq N(G(d,k), \varepsilon) \leq \left(\frac{C}{\varepsilon}\right)^m\]
\end{fact}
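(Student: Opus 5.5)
This Fact is quoted from \cite{Szalek}, so we do not need to reprove it here. If I had to prove it, I would use the standard volumetric argument on the homogeneous space $G(d,k)$, with its unitarily invariant Haar probability measure $\mu$. Let $B(P,\eps)$ denote the ball of radius $\eps$ in the trace-distance metric $\|\frac{P}{k}-\frac{Q}{k}\|_\tr$.

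\textbf{Step 1: reduction to two ball-volume estimates.} A maximal $\eps$-separated set is an $\eps$-net. The balls of radius $\eps/2$ around its points are disjoint, and by unitary invariance they all have the same measure. Hence
\[ N(G(d,k),\eps)\le 1/\mu(B(P_0,\eps/2)). \]
Conversely, any $\eps$-net covers $G(d,k)$, so $N(G(d,k),\eps)\ge 1/\mu(B(P_0,\eps))$. It therefore suffices to show
\[ (c_1\eps)^m\le \mu(B(P_0,\eps))\le (C_1\eps)^m \]
for $\eps$ below a constant, with $m=2k(d-k)$ and universal $c_1,C_1$. For larger $\eps$, the upper bound on $N$ is trivial after adjusting $C$. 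The lower bound on $N$ is also trivial there, since $N\ge 1$ once $c$ is chosen smaller than the diameter.

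\textbf{Step 2: local chart and comparison with a matrix ball.} Parametrize a neighbourhood of $P_0$ (the projector onto the first $k$ coordinates) by $X\in\mathbb{C}^{(d-k)\times k}\cong\mathbb{R}^m$, sending $X$ to the projector onto the column space of $\binom{I}{X}$. Equivalently, use principal angles $\theta_1,\dots,\theta_k$, which are the arctangents of the singular values of $X$. The eigenvalues of $P-Q$ are $\pm\sin\theta_i$, so
\[ \Big\|\tfrac{P}{k}-\tfrac{Q}{k}\Big\|_\tr=\frac1k\sum_i\sin\theta_i, \]
which is comparable, up to constant factors, to the normalized trace norm $\frac1k\|X\|_1$ for small angles. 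The ball $B(P_0,\eps)$ is thus sandwiched between images of the balls $\{\|X\|_1\le c'k\eps\}$ and $\{\|X\|_1\le C'k\eps\}$. In this chart the pushforward of $\mu$ has an explicit density: a normalization constant times $\det(I+X^\dagger X)^{-d}$, which is bounded above and below on these small balls.

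\textbf{Step 3 (main obstacle): dimension-free constants.} Since $m$ grows with $d$ and $k$, each of the following must be controlled only up to factors of the form $C^m$, i.e.\ the $m$-th roots must be $\Theta(1)$:
\begin{itemize}
\item the density ratio over the balls;
\item the volume of the normalized Schatten-$1$ ball in $\mathbb{C}^{(d-k)\times k}$;
\item the total normalizing constant of $\mu$ in the chart.
\end{itemize}
The density bound is harmless, because $\det(I+X^\dagger X)^{-d}$ is controlled by $e^{-d\|X\|_2^2}$ and is therefore comparable to a Gaussian. For the Schatten-ball volume, I would use the known estimate that
\[ \mathrm{vol}\{\|X\|_1\le 1\}^{1/m}\asymp \mathrm{vol}\{\|X\|_2\le 1\}^{1/m}/\sqrt{k}, \]
a Saint-Raymond-type result. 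This makes the volume of the normalized trace-norm ball $\{\|X\|_1\le k\eps\}$ roughly $\eps^m$ times the normalizing scale, which is exactly what is needed once combined with the Gaussian normalization of $\mu$. Justifying this uniformly over all $k\le d/2$ is the step I expect to be hardest. It is exactly the content of Szarek's theorem, so in the paper we simply invoke \cite{Szalek}.
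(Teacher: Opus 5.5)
As a treatment of this Fact, your proposal matches the paper. The paper gives no proof; it imports the result from \cite{Szalek} and \cite{Pajor1998}, and you rest on the same citation. If your volumetric sketch were meant as a proof, it would fail at two concrete points.

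\textbf{Large $\eps$.} The statement cannot hold for all $\eps>0$. We always have $N(G(d,k),\eps)\ge 1$, while $(C/\eps)^m<1$ for every $\eps>C$. So no adjustment of $C$ makes the large-$\eps$ case trivial. The Fact must be read with $\eps$ restricted, e.g.\ to $0<\eps\le 1$, the diameter in this metric (attained because $k\le d/2$ allows $P\perp Q$). This is harmless for the paper, which only uses $\eps=\delta/4$.

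\textbf{The outer inclusion in Step 2.} The inclusion $B(P_0,\eps)\subseteq\phi(\{\|X\|_1\le C'k\eps\})$ is false once $k\eps\gtrsim 1$. The constraint $\frac1k\sum_i\sin\theta_i\le\eps$ only gives $\sin\theta_i\le k\eps$. So for $k\eps\ge 1$ the ball contains subspaces with one principal angle arbitrarily close to (or equal to) $\pi/2$. There $\tan\theta_1$, and hence $\|X\|_1$, is unbounded (or the point lies outside the chart). Your small-angle comparison is valid only for $\eps\lesssim 1/k$, which misses the main regime when $k$ is large. Hence the upper bound on $\mu(B(P_0,\eps))$, and with it the lower bound on $N$, is not established. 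You would need coordinates adapted to the metric, for example:
\begin{itemize}
\item a chart $Y$ with singular values $\sin\theta_i$, in which the ball is exactly $\{\|Y\|_1\le k\eps,\ \|Y\|_\infty\le 1\}$, together with the density in that chart; or
\item a separate estimate for the subspaces having some angle above $\pi/4$.
\end{itemize}

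\textbf{The density bound.} Bounding the density via $e^{-d\|X\|_2^2}$ does not give a pointwise $e^{-O(m)}$ lower bound on $\{\|X\|_1\le k\eps\}$. A rank-one $X$ with $s_1=k\eps$ gives only $e^{-dk^2\eps^2}$, which is far below $e^{-O(kd)}$ when $k\eps^2\gg 1$. The right inequality is $\ln(1+s^2)\le 2s$. Using $m=2k(d-k)\ge dk$, it gives, for $\eps\le 1$,
\[ \det(I+X^\dagger X)^{-d}\ge e^{-2d\|X\|_1}\ge e^{-2dk\eps}\ge e^{-2m}. \]

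\textbf{What does go through.} With this density bound, the direction the paper actually uses (the upper bound on $N$ in Theorem \ref{thm:N-avtoworst}) follows along your lines, with no Saint-Raymond estimate:
\begin{itemize}
\item $\sin\theta\le\tan\theta$ gives the inner inclusion $\phi(\{\|X\|_1\le k\eps\})\subseteq B(P_0,\eps)$;
\item that set contains the Frobenius ball of radius $\sqrt{k}\,\eps$, whose volume is at least $(c\eps^2/d)^{m/2}$;
\item the normalizing constant satisfies $Z\le (C/d)^{m/2}$, by the known closed form for the volume of the complex Grassmannian.
\end{itemize}
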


We also have the following result about $\eps$-nets on the $d$-dimensional complex sphere (set of $d$-dimensional pure quantum states), where the metric is the $2$-norm for vectors.
\begin{fact}[\cite{vershynin2018hdp}, Corollary 4.2.11]\label{fc:sphere-net}
There exists an $\eps$-net on the set of $d$-dimensional pure quantum states of size at most $\left(1+\frac{2}{\eps}\right)^{2d}$.
\end{fact}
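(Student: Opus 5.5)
The plan is the standard volumetric (packing) argument, carried out after identifying $\mathbb{C}^d$ with $\mathbb{R}^{2d}$. The map $x+iy \mapsto (x,y)$, with $x,y\in\mathbb{R}^d$, is an isometry from $(\mathbb{C}^d,\|\cdot\|_2)$ to $(\mathbb{R}^{2d},\|\cdot\|_2)$, since $\|x+iy\|_2^2=\|x\|_2^2+\|y\|_2^2$. Under this map the set of $d$-dimensional pure states (unit vectors) becomes the unit sphere $S^{2d-1}\subset\mathbb{R}^{2d}$. So it suffices to exhibit an $\eps$-net of $S^{2d-1}$ of size at most $(1+2/\eps)^{2d}$. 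We do not quotient by global phase; a net on unit vectors is in particular a net on pure states.

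First, I would take a maximal $\eps$-separated subset $\mathcal{N}\subseteq S^{2d-1}$, meaning $\|u-v\|_2>\eps$ for all distinct $u,v\in\mathcal{N}$, with no point of the sphere addable. Such a set exists by Zorn's lemma, or by the finiteness bound proved next together with a greedy construction. By maximality, every $w\in S^{2d-1}$ lies within distance $\eps$ of some point of $\mathcal{N}$, so $\mathcal{N}$ is an $\eps$-net.

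Second, I would bound $|\mathcal{N}|$ by comparing volumes. The open balls $B(u,\eps/2)$, $u\in\mathcal{N}$, are pairwise disjoint by separation. Each is contained in $B(0,1+\eps/2)$, since $\|u\|=1$. Lebesgue volume in $\mathbb{R}^{2d}$ scales as $r^{2d}$, so
\[
|\mathcal{N}|\left(\frac{\eps}{2}\right)^{2d}\le\left(1+\frac{\eps}{2}\right)^{2d},
\qquad\text{hence}\qquad
|\mathcal{N}|\le\left(1+\frac{2}{\eps}\right)^{2d}.
\]
In particular this shows that any $\eps$-separated set is finite, which justifies the greedy construction of a maximal one.

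There is no real obstacle here. The only point needing care is that the relevant real dimension is $2d$, not $d$: this is exactly where the exponent $2d$ comes from. One should also check that the metric used on pure states is the vector $2$-norm, which the isometry preserves.
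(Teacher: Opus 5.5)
Your proof is correct. The paper gives no proof of its own and only cites Vershynin, whose argument is the same one you give: a maximal $\eps$-separated set in $S^{2d-1}\subset\mathbb{R}^{2d}$ (via the isometry $\mathbb{C}^d\cong\mathbb{R}^{2d}$), with its size bounded by comparing the volumes of disjoint balls of radius $\eps/2$ against the ball of radius $1+\eps/2$.
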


Finally, we will be needing this result for estimating the spectral norm of a matrix using an $\varepsilon$-net on the sphere.
\begin{fact}[\cite{vershynin2018hdp}, Lemma 4.4.1] \label{lem:net2}
Let $A$ be an $m\times d$ matrix and let $\varepsilon\in[0,1)$. Then, for any
$\varepsilon$-net $\mathcal N$ of the sphere in $\mathbb{C}^d$, we have
\[
\sup_{x\in\mathcal N} \|Ax\|_2
\;\le\;
\|A\|_\infty
\;\le\;
\frac{1}{1-\varepsilon}\,
\sup_{x\in\mathcal N} \|Ax\|_2,
\]
where $\|Ax\|_2$ is the $2$-norm of the vector $Ax$.
\end{fact}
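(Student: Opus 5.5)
This is a standard two-sided estimate, and I would prove the two inequalities separately. Both rest on one fact: for a unit vector $x$, the norm $\|Ax\|_2$ never exceeds the spectral norm $\|A\|_\infty$, which is the supremum of $\|Ax\|_2$ over the unit sphere of $\mathbb{C}^d$. Throughout, the net $\mathcal N$ is taken to consist of points of the sphere itself, as in Fact \ref{fc:sphere-net}.

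\emph{Lower bound.} Each $x\in\mathcal N$ is a unit vector, so $\|Ax\|_2 \le \|A\|_\infty$ by the variational characterization of the spectral norm. Taking the supremum over $x \in \mathcal N$ gives
\[\sup_{x\in\mathcal N}\|Ax\|_2 \le \|A\|_\infty.\]

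\emph{Upper bound.} The sphere in $\mathbb{C}^d$ is compact and $x\mapsto \|Ax\|_2$ is continuous. So there is a unit vector $x^*$ attaining $\|Ax^*\|_2 = \|A\|_\infty$. Since $\mathcal N$ is an $\eps$-net, I pick $y\in\mathcal N$ with $\|x^*-y\|_2\le\eps$. The triangle inequality and the operator-norm bound $\|Av\|_2\le \|A\|_\infty\|v\|_2$, applied to $v = x^*-y$, give
\[
\|A\|_\infty = \|Ax^*\|_2 \le \|Ay\|_2 + \|A(x^*-y)\|_2 \le \sup_{z\in\mathcal N}\|Az\|_2 + \eps\|A\|_\infty .
\]
Because $\eps<1$, the factor $1-\eps$ is positive. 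Moving $\eps\|A\|_\infty$ to the left and dividing by $1-\eps$ then yields
\[\|A\|_\infty \le \frac{1}{1-\eps}\sup_{z\in\mathcal N}\|Az\|_2.\]

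There is no real obstacle here. The only points needing care are that the maximizer $x^*$ exists, which compactness handles (or one can use an approximate maximizer and a limiting argument), and that $\eps<1$, so the rearrangement is legitimate.
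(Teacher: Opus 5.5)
Your proof is correct, and it is the standard argument of the cited source (Vershynin, Lemma 4.4.1). The lower bound holds because $\mathcal N$ lies on the sphere, and the upper bound comes from approximating a maximizer $x^*$ by a net point and rearranging $\|A\|_\infty \le \sup_{z\in\mathcal N}\|Az\|_2+\eps\|A\|_\infty$; the paper gives no proof of its own and only cites this lemma, so there is nothing further to compare.
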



\subsection{Entanglement and LOCC}
\begin{fact}[Schmidt decomposition]
A pure entangled state $\ket{\psi}^{AB}$ on registers $A$ and $B$ can be written as
\[ \ket{\psi}^{AB} = \sum_i\lambda_i\ket{a_i}^A\ket{b_i}^B,\]
where $\ket{a_i}^A$ and $\ket{b_i}^A$ are orthonormal states and $\lambda_i$ are non-negative real numbers satisfying $\sum_i\lambda_i^2=1$, which are unique up to reordering. The above expression is called the Schmidt decomposition of the state $\ket{\psi}$. $\{\lambda_i^2\}_i$ is a probability distribution called its Schmidt spectrum, and this is equal to the eigenvalues of the marginal states $\psi^A$ and $\psi^B$. In what follows, the Schmidt spectrum will always be assumed to be sorted in descending order.
\end{fact}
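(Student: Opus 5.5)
The plan is to reduce the statement to the singular value decomposition of a matrix. Fix orthonormal bases $\{\ket{j}^A\}_{j=1}^{d_A}$ and $\{\ket{l}^B\}_{l=1}^{d_B}$ and expand
\[\ket{\psi}^{AB} = \sum_{j,l} M_{jl}\ket{j}^A\ket{l}^B,\]
where $M$ is a $d_A\times d_B$ complex coefficient matrix. Normalization of $\ket{\psi}$ is exactly the statement $\|M\|_2^2 = \Tr(M^\dagger M)=1$.

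\textbf{Existence.}
\begin{enumerate}
\item Take a singular value decomposition $M = U D V^\dagger$. Here $U$ is $d_A\times d_A$ unitary, $V$ is $d_B\times d_B$ unitary, and $D$ is rectangular-diagonal with non-negative entries $\lambda_1\geq\lambda_2\geq\dots\geq 0$.
\item Define $\ket{a_i}^A=\sum_j U_{ji}\ket{j}^A$ and $\ket{b_i}^B=\sum_l \overline{V_{li}}\ket{l}^B$. These are the columns of $U$ and of $\overline{V}$, so each family is orthonormal.
\item Substitute $M_{jl}=\sum_i U_{ji}\lambda_i\overline{V_{li}}$ into the expansion. Regrouping the sums gives $\ket{\psi}=\sum_i\lambda_i\ket{a_i}\ket{b_i}$.
\item Compute $\sum_i\lambda_i^2=\Tr(D^\dagger D)=\Tr(M^\dagger M)=1$, using unitary invariance of the Frobenius norm. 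So the $\lambda_i^2$ form a probability distribution.
\end{enumerate}

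\textbf{Spectral identification.} Tracing out $B$ and using orthonormality of the $\ket{b_i}$ gives
\[\psi^A=\sum_i\lambda_i^2\state{a_i}.\]
Because the $\ket{a_i}$ are orthonormal, this is a spectral decomposition, so the eigenvalues of $\psi^A$ are exactly the $\lambda_i^2$ (padded with zeros). The same computation, using orthonormality of the $\ket{a_i}$, gives $\psi^B=\sum_i\lambda_i^2\state{b_i}$. Hence the two marginals have the same nonzero spectrum. Equivalently, $\psi^A = MM^\dagger$ and $\psi^B=M^T\overline{M}$ share their nonzero eigenvalues.

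\textbf{Uniqueness.} This is the only point needing care. Take any decomposition $\ket{\psi}=\sum_i\mu_i\ket{a'_i}\ket{b'_i}$ with orthonormal families and $\mu_i\geq0$. Repeating the partial-trace computation gives $\psi^A=\sum_i\mu_i^2\state{a'_i}$. The spectrum of a Hermitian operator is basis-independent, so the multiset $\{\mu_i^2\}$ equals the multiset $\{\lambda_i^2\}$. Since all coefficients are non-negative, $\{\mu_i\}=\{\lambda_i\}$ up to reordering. The vectors themselves need not be unique when eigenvalues are degenerate, but the statement only claims uniqueness of the coefficients, so nothing more is needed.

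Finally, sorting the coefficients in descending order fixes the ordering convention used in the rest of the paper.
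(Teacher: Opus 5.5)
Your argument is correct and is the standard SVD proof of the Schmidt decomposition; the paper gives no proof of its own and simply records this as a well-known fact, so there is no alternative route to compare against. One small imprecision is in the uniqueness step: two decompositions may have different numbers of terms, so the multisets $\{\mu_i^2\}$ and $\{\lambda_i^2\}$ agree only after discarding (or padding with) zeros, and that is the precise sense in which the coefficients are unique up to reordering.
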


\begin{definition}[Entanglement min-entropy]
For a pure entangled state $\ket{\psi}^{AB}$, its $\eps$-smoothed entanglement min-entropy is equal to either $H^\eps_{\min}(A)_\psi$ or $H^\eps_{\min}(B)_\psi$. It can be seen that both these quantities are equal to the $\eps$-smoothed min-entropy of its Schmidt spectrum.
\end{definition}
Entanglement min-entropy is a lower bound on distillable entanglement from a pure state, as we shall see below.

\begin{definition}[Entanglement distillation]
An entanglement distillation protocol between two parties Alice and Bob, who start with an arbitrary initial shared entangled state $\rho^{AB}$, is a protocol utilizing only local operations and classical communication (LOCC) between the parties, which ends with Alice and Bob sharing the state $\sigma^{AB}$ satisfying
\[ \mathsf{F}(\sigma^{AB}, \state{\Phi_+}^{AB}) \geq 1-\eps,\]
where $\state{\Phi_+}^{AB}$ is the maximally entangled state on $AB$ of rank $d$, for some $d$. The parameter $\eps$ is called the error of the protocol.
\end{definition}

\begin{fact}[\cite{WTB17}, Proposition 21]\label{fc:distillation-lb}
For $\eta \in [0, \sqrt{\eps})$, there exists a one-way $\eps$-error entanglement distillation protocol between Alice and Bob who share the entangled state $\rho^{AB}$, utilizing classical communication only from Alice to Bob, yielding a maximally entangled state of rank $d$, for
\[ \log d = H_{\min}^{\sqrt{\eps} - \eta}(A|E)_\rho - 4\log(1/\eta).\]
Here the register $E$ is the purifying register in an arbitrary purification $\ket{\rho}^{ABE}$ of $\rho^{AB}$.
\end{fact}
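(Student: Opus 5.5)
The plan is the standard decoupling route to one-way distillation: Alice applies a random unitary and a measurement, and Uhlmann's theorem supplies Bob's decoding.

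\textbf{Step 1 (smoothing).} Choose $\rho'^{AB}$ with $\|\rho'-\rho\|_\tr\le \sqrt{\eps}-\eta$ that attains (up to arbitrarily small slack) $H_{\min}^{\sqrt{\eps}-\eta}(A|E)_\rho$. Since trace distance is monotone under channels, it suffices to build the protocol for $\rho'$ and then add $\sqrt\eps-\eta$ to the final trace-distance error.

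\textbf{Step 2 (Alice's measurement).} Write $|A|=d\cdot L$, padding $A$ with extra dimensions if necessary, so that $A\cong A_1\otimes A_2$ with $|A_1|=d$. Alice applies a unitary $U$ drawn from a $2$-design on $A$, measures $A_2$ in the computational basis to get outcome $\ell\in[L]$, and sends $\ell$ to Bob. The main estimate is the one-shot decoupling theorem in its Rényi-$2$ form:
\[
\mathbb{E}_U\, \Big\| \mathcal{T}(U\rho'^{AE}U^\dagger) - \tfrac{I^{A_1}}{d}\otimes \tfrac{I^{L}}{L}\otimes \rho'^E\Big\|_1 \le 2^{-\frac12\left(H_2(A|E)_{\rho'} - \log d\right)},
\]
where $\mathcal{T}$ denotes measuring $A_2$ and keeping the classical outcome $\ell$. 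One has to check that the dephasing of $A_2$ does not increase the relevant collision quantity of the channel, which I expect to be routine. By Fact~\ref{fact:entropy-ineq}, $H_2(A|E)_{\rho'}\ge H_{\min}(A|E)_{\rho'}$. Choosing $\log d = H_{\min}(A|E)_{\rho'} - 4\log(1/\eta)$ then makes the right-hand side at most $\eta^2$. Averaging over $U$ shows that some fixed $U$ achieves this bound, so the protocol needs no shared randomness.

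\textbf{Step 3 (Bob's decoding).} For each outcome $\ell$, the state on $A_1E$ is $\eta^2$-close on average to $\frac{I}{d}\otimes\rho'^E$. This target has a purification of the form $\ket{\Phi_+}^{A_1B_1}\otimes\ket{\rho'}^{EB_2}$. The actual post-measurement state is purified by Bob's register $B$. By Uhlmann's theorem, there is an isometry $V_\ell$ on $B$, which Bob applies after receiving $\ell$, such that the fidelity of the resulting state with $\ket{\Phi_+}^{A_1B_1}\otimes\ket{\rho'}^{EB_2}$ equals the fidelity of the $A_1E$ marginals. Fuchs--van de Graaf (Fact~\ref{fac:distanceproperties}) converts the trace-distance bound into a fidelity of at least $1-O(\eta^2)$. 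Tracing out $E B_2$ then gives a state close to $\ket{\Phi_+}$ on $A_1B_1$.

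\textbf{Step 4 (combining errors).} Combine the smoothing error with the decoupling error, using the triangle inequality for purified distance and the fact that Bob's isometry depends only on the message. The aim is to show that the final fidelity with $\ket{\Phi_+}$ is at least $1-\eps$. Concretely, the purified distance should come out at most $(\sqrt\eps-\eta)+\eta\le\sqrt\eps$, which squares to a fidelity of at least $1-\eps$.

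I expect the main obstacle to be the constant bookkeeping in Step 2. This means matching the $2$-design decoupling exponent against the measurement channel so that exactly $4\log(1/\eta)$ is lost, and keeping the smoothing in a metric (purified distance) that composes with Uhlmann's theorem. My first attempt would be to smooth in purified distance from the outset, and to use the fact that trace distance and purified distance differ by at most a square root. This accounts for the $\sqrt\eps$ appearing in the smoothing parameter.
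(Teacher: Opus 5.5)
The paper gives no proof of this Fact. It imports it from \cite{WTB17}, remarking only that the argument there works with $H_2$ in place of $H_{\min}$. Your Steps 2--3 are the standard decoupling argument of this kind: the measure-$A_2$ channel has a Choi state with $H_2=-\log d$, a good fixed $U$ exists by averaging, and Bob decodes via Uhlmann. Your choice $\log d=H_{\min}(A|E)_{\rho'}-4\log(1/\eta)$ does give trace-norm error $\eta^2$.

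The genuine gap is the metric used for smoothing. In Step 1 you smooth in trace distance, as the paper's definition of $H^\eps_{\min}$ does, and propose to add $\sqrt\eps-\eta$ to a final trace-distance error. But a final trace distance of order $\sqrt\eps$ only gives $\mathsf{F}\gtrsim 1-2\sqrt\eps$, not $1-\eps$. Step 4 then silently treats $\sqrt\eps-\eta$ as a purified-distance radius. Your suggested repair misplaces the square root. A trace-distance ball of radius $s$ is only guaranteed to lie in the purified-distance ball of radius $\sqrt{2s}$, and this is essentially tight. The $\sqrt\eps$ in the statement comes from the target criterion itself: $\mathsf{F}\ge 1-\eps$ says exactly that $P=\sqrt{1-\mathsf{F}}\le\sqrt\eps$.

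No argument can close this gap under trace-distance smoothing, because the statement is then false. Take $E$ trivial and $\ket{\psi}=\sqrt t\ket{00}+\sqrt{(1-t)/D}\sum_{i=1}^D\ket{ii}$, with $0<\eta\le t\le 1/8$ and $\eps=(t+\eta)^2$, so that $\sqrt\eps-\eta=t$. The uniform distribution on $D+1$ points is within trace distance $t$ of the Schmidt spectrum, so $H^{t}_{\min}(A)_\psi\ge\log(D+1)$ in the paper's sense. However, $1-\lambda_1$ is non-increasing on average under LOCC (Vidal's monotone). Also, any pure branch state $\phi$ (including local ancillas) with top Schmidt coefficient $\mu_1$ satisfies $\bra{\Phi_+}\Tr_{\mathrm{anc}}(\phi)\ket{\Phi_+}\le(\sqrt{1-\mu_1}+1/\sqrt d)^2$. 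This holds because $(\state{\Phi_+}\otimes I)\ket\phi$ is proportional to a state with top Schmidt coefficient at most $1/d$, and von Neumann's trace inequality bounds the overlap by $\sum_i\sqrt{\mu_i\nu_i}$. Hence every LOCC protocol reaches fidelity at most $1-t+3/\sqrt d$. Since $d\ge(D+1)\eta^4$, this is below the promised $1-(t+\eta)^2\ge 1-t/2$ once $D$ is large.

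So the Fact must be read with purified-distance smoothing, as in \cite{WTB17}, and under that reading your outline closes:
\begin{itemize}
\item Monotonicity carries $P(\rho'^{AE},\rho^{AE})\le\sqrt\eps-\eta$ through Alice's operations.
\item The decoupling bound $\frac12\|\cdot\|_1\le\eta^2/2$ gives $P\le\eta$ via the half of Fuchs--van de Graaf not listed in Fact~\ref{fac:distanceproperties}, namely $1-\sqrt{\mathsf{F}}\le\|\cdot\|_\tr$. This is exactly why $4\log(1/\eta)$ is lost.
\item Comparing with the target built from $\rho'^E$, as you do, makes the triangle inequality cost $(\sqrt\eps-\eta)+\eta$.
\item Applying Uhlmann for each $\ell$, then Cauchy--Schwarz over $\ell$, converts the fidelity of the classical--quantum states into the average final fidelity.
\end{itemize}
Also, the smoothed state in Step 1 should be $\rho'^{AE}$, not $\rho'^{AB}$.
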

If $\rho^{AB}$ is a pure state, the entanglement distillation procedure from this theorem yields $H^{\sqrt{\eps}-\eta}_{\min}(A)_\rho-4\log(1/\eta)$ many EPR pairs. We'll also note that inspecting the proof of Proposition 21 in \cite{WTB17} reveals that their protocol works fine with $H_2$ instead of $H_{\min}$. So the actual lower bound on one-way distillable entanglement is $H_2^{\sqrt{\eps}-\eta}(A|E)_\rho - 4\log(1/\eta)$ for mixed states, and  $H_2^{\sqrt{\eps}-\eta}(A)_\rho - 4\log(1/\eta)$ for pure states.

Finally, we will use a result characterizing when a pure entangled state can be transformed into an ensemble of pure entangled states via LOCC. In this scenario, we are considering implementing a measurement that gives the $i$-th outcome with probability $p_i$, and the post-measurement joint state when the $i$-th outcome is obtained is $\ket{\phi_i}$. We use $\{(p_i, \state{\phi_i})\}_i$ to denote the outcome ensemble.

To state the result, we need the concept of majorization.
\begin{definition}
Let $x$ and $y$ be two $d$-dimensional vectors whose components are sorted in decreasing order, i.e., $x_1 \geq x_2 \geq \ldots \geq x_d$. We say that $x$ majorizes $y$, denoted by $x \succeq y$,\footnote{Note that we have used the same notation for operator inequalities, but it will be clear from context which one we mean.} iff
\[ \sum_{i=1}^k x_i \geq \sum_{i=1}^ky_i \quad \forall k \in [d].\]
For vectors that are not arranged in descending order, we need to arrange them in descending order first before checking if the majorization condition is satisfied.
\end{definition}

\begin{fact}[\cite{Jonathan_1999}, see also \cite{NielsenVidal2001}, Theorem 16]\label{thm:jonathanplenio}
Take $\ket{\psi}$ to be a pure state and take $\{(p_i, \state{\phi_i})\}$ to be an ensemble. Letting $\lambda$ be the Schmidt spectrum of $\psi$ and the $\mu_i$ be those of the $\state{\phi_i}$ (all sorted in descending order), the transformation $\state{\psi} \rightarrow \{(p_i, \state{\phi_i})\}$ can be accomplished with LOCC if and only if:
\[\lambda \preceq \sum_i p_i \mu_i.\]
\end{fact}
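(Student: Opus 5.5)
We prove the two directions separately: a Ky Fan convexity argument for necessity, and a two-stage construction through an intermediate pure state for sufficiency.

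\emph{Necessity.} Suppose an LOCC protocol maps $\ket{\psi}$ to the ensemble $\{(p_i,\state{\phi_i})\}$. We first invoke the Lo--Popescu reduction. For transformations between pure bipartite states, any multi-round LOCC protocol can be simulated by a one-way protocol: Alice performs a measurement with Kraus operators $\{M_j\}$, and Bob applies a unitary $V_j$ depending on her outcome. The simulation may alter each output only by local unitaries, which do not change Schmidt spectra. Outcomes of this simulation that correspond to the same final label $i$ can be coarse-grained, since each resulting post-measurement state is pure and has spectrum $\mu_i$.

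Alice's measurement does not change Bob's averaged marginal. Hence
\[\psi^B=\sum_i p_i\, W_i^\dagger \phi_i^B W_i\]
for some unitaries $W_i$. Now apply Ky Fan's principle: the sum of the $k$ largest eigenvalues of a state $\rho$ equals $\max_{\Pi}\Tr(\Pi\rho)$, where $\Pi$ ranges over rank-$k$ projectors. This quantity is therefore convex in $\rho$ and unitarily invariant. Applying it to the decomposition of $\psi^B$ gives
\[\sum_{j\le k}\lambda_j\le \sum_i p_i\sum_{j\le k}(\mu_i)_j\]
for every $k$, which is exactly $\lambda\preceq\sum_ip_i\mu_i$.

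\emph{Sufficiency.} Set $\bar\mu=\sum_ip_i\mu_i$, with each $\mu_i$ sorted in descending order; then $\bar\mu$ is also sorted. The first step is to use Nielsen's deterministic theorem. Since $\lambda\preceq\bar\mu$, LOCC can convert $\ket\psi$ with certainty into $\ket{\chi}=\sum_j\sqrt{\bar\mu_j}\ket{j}\ket{j}$. If a self-contained argument is wanted here, we would write $\lambda=D\bar\mu$ with $D$ doubly stochastic, decompose $D$ as a convex combination of permutation matrices (Birkhoff), and read off a one-way protocol in which Alice's Kraus operators are weighted permutations, followed by Bob's corrective unitaries.

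The second step is a measurement by Alice on $\ket\chi$ with diagonal Kraus operators
\[M_i=\sum_j\sqrt{\frac{p_i(\mu_i)_j}{\bar\mu_j}}\,\ket j\bra j,\]
where indices $j$ with $\bar\mu_j=0$ are handled separately; at such $j$ every $(\mu_i)_j$ vanishes, so we put those entries into, say, $M_1$. These operators satisfy $\sum_iM_i^\dagger M_i=I$. Outcome $i$ yields the unnormalized state $\sum_j\sqrt{p_i(\mu_i)_j}\ket{jj}$. Its norm squared is $p_i$, so outcome $i$ occurs with probability $p_i$, and the normalized state has Schmidt spectrum $\mu_i$. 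Local unitaries, which Alice and Bob can choose using the communicated outcome $i$, then rotate this state to $\ket{\phi_i}$.

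The main obstacle is the Lo--Popescu reduction in the necessity direction, i.e.\ showing that Bob's operations in a general multi-round protocol can be replaced by Alice-side operations up to local unitaries. We would do this by induction on the number of rounds. The key fact is that for a pure state $\sum_k\sqrt{\lambda_k}\ket{a_k}\ket{b_k}$, any Kraus operator $N$ applied by Bob acts identically to an operator $N'$ applied by Alice followed by a local unitary. To see this, use the Schmidt-basis transpose identity $(I\otimes N)\ket\psi=(N'\otimes I)\ket\psi$ up to local unitaries, which follows by comparing polar decompositions of the coefficient matrix.
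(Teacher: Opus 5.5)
The paper does not prove this Fact; it cites it from Jonathan--Plenio and Nielsen--Vidal. Your argument is correct and is essentially the standard proof from those sources: the Lo--Popescu reduction plus Ky Fan convexity gives necessity, and for sufficiency you use Nielsen's theorem to reach the averaged spectrum $\bar\mu$, then Alice's diagonal Kraus measurement.

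One thing to tidy is the display $\psi^B=\sum_i p_i W_i^\dagger\phi_i^B W_i$. Fine-grained outcomes $j$ with the same label $i$ can carry different unitaries, so write $\psi^B=\sum_j q_j W_j^\dagger\phi_{i(j)}^B W_j$. Apply Ky Fan to that decomposition, and group the $q_j$ into $p_i$ only afterwards; the conclusion is unchanged.
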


\subsection{Remote state preparation}
In a remote state preparation setting, Alice gets a classical description of a quantum state. Alice and Bob share entanglement and do local operations and classical communication on their halves of it. The requirement is that at the end of the protocol, Bob should have a state close to Alice's input state. This closeness can be measured in different ways, either in the worst-case or on average.

In general, remote state preparation protocols could have two-way communication, and this may be especially important in cases when the shared entangled state between Alice and Bob is mixed. This is because there exist mixed entangled states from which one can distill EPR pairs with two-way communication, but not one-way communication \cite{BDSW96}. Correspondingly, there may be mixed entangled states with which RSP with efficient communication is possible in the two-way case but not the one-way case.

However, it is more common to consider RSP protocols with one-way communication in the literature, and we shall stick to this too, for convenience. In any case, we only consider pure entangled states in this work, and allowing two-way communication may not make much difference for pure states. However, our lower bound result does also work for protocols with two-way communication, as long as we replace the size of Alice's message to Bob with the total size of the transcript. This is easy to see for our communication lower bound, but more tedious for the entanglement lower bound.

We describe the general form of an RSP protocol for preparing flat states $\frac{P}{k}$ for $P \in G(d,k)$ below. We call this a $(d,k)$-RSP protocol. We also introduce three different notions of error for such protocols.

\begin{center}
\begin{mdframed}[nobreak=true]
\textbf{Parameters.}
\begin{itemize}
  \item Target dimension: \(d\)
  \item Target rank: \(k\)
  \item Classical message length: $m$
\end{itemize}

\bigskip
\textbf{Input.}
Alice is given a classical description of a rank-$k$ projector $P$ on $\mathbb{C}^d$.

\medskip
\textbf{Protocol.}
Alice and Bob initially share a pure entangled state \(\ket{\sigma}^{AB}\)
\begin{enumerate}
  \item Alice performs an isometry $U_P: A \mapsto A'C$, with the register $C$ being of dimension $2^m$.
  \item Alice measures $C$ and sends the measurement outcome $c \in \{0,1\}^m$ to Bob.
  \item Bob applies an isometry \(V_c : B \rightarrow B_1 B_2\), with $B_2$ being of dimension $d$. For message $c$, the final state in the registers $A'B_1B_2$ is $\ket{\chi_{P,c}}^{A'B_1B_2}$, and its marginal on the target register $B_2$ is $\chi^{B_2}_{P,c}$.
\end{enumerate}

\textbf{Error measures.}
\begin{enumerate}
\item The worst-case error $\varepsilon_w$ of the protocol is the smallest $\varepsilon$ such that, for all $P$, 
\[\mathbb{E}_c(\Vert\chi^{B_2}_{P,c} - P/k\Vert_\tr) \leq \varepsilon.\]
\item The average-case error $\varepsilon_a$ of the protocol is
\[\mathbb{E}_{P, c} (\Vert\chi^{B_2}_{P,c} - P/k\Vert_\tr).\]
\item The relaxed average-case error $\varepsilon_r$ of the protocol is 
\[\mathbb{E}_{P, c} \Tr((I-P)\chi^{B_2}_{P,c}).\]
\end{enumerate}
\end{mdframed}
\captionsetup{hypcap=false}
\captionof{figure}{Formal description of a $(d,k)$-RSP protocol with pure shared entanglement}
\label{fig:protocol}
\end{center}
We note that in our protocol description, the restriction to isometries is done without loss of generality as general quantum operations can be purified into isometries through the addition of an ancilla, which our model allows for. Also, it is apparent that $\varepsilon_w \geq \varepsilon_a$, and it follows from point 3 of Fact \ref{fac:distanceproperties} that $\varepsilon_a \geq \varepsilon_r$.

\section{Efficient protocols for remote state preparing flat states}

In this section, we will give two protocols for remote state preparing flat states, achieving any desired worst-case error $\eps > 0$. The protocols achieve communication and entanglement costs that are roughly similar up to $\eps$-dependent parameters, and it will be shown that they are essentially resource-optimal.

Ideally we want RSP protocols to succeed in the worst case. But when designing RSP protocols, one finds that it is more convenient to work with the average-case error. Indeed, our two base constructions will only have average-case correctness. In order to obtain protocols with worst-case error from protocols with average-case error, we first give an average-case to worst-case reduction in Section \ref{sec:av-to-worst}. Then we give two different protocols with average-case error in Sections \ref{sec:protocol-1} and \ref{sec:protocol-2}. Together, the results of these three theorems imply Theorem \ref{thm:rsp-upper}, whose statement is recalled below.
\upperbound*

\subsection{Converting a good average-case protocol into a good worst-case one}\label{sec:av-to-worst}
In this section, we show that for $(d,k)$-RSP protocols, the notions of average-case correctness over the uniform distribution in $G(d,k)$, and worst-case correctness, are essentially equivalent. This means that a protocol with a certain average-case error $\varepsilon_a$ can be turned into a new protocol with worst-case error arbitrarily close to $\varepsilon_a$ at the cost of a modest increase in the communication. 

The main result of this section is Theorem \ref{thm:worstcase}, whose statement is recalled below.
\avtoworst*

Given a protocol $\mathcal{P}$ which achieves average-case error $\varepsilon_a$, we will consider a modified protocol $\mathcal{P}'$ as given in Figure \ref{fig:prot1}. Given $\delta > 0$, we want to pick $N$ to be such that, for some such choice of unitaries $\mathcal{U}$, the worst-case error of the resulting protocol is at most $\varepsilon_a + \delta$. Theorem \ref{thm:N-avtoworst} below shows how to do this, which proves Theorem \ref{thm:worstcase}.
\begin{center}
\begin{mdframed}[nobreak=true]
\textbf{Given.}
\begin{itemize}
\item A $(d,k)$-protocol $\mathcal{P}$
\item An error tolerance $\delta > 0$
\item A collection $\mathcal{U} = \{U_1,\ldots,U_N\}$ of unitaries on $\mathbb{C}^d$
\item A $\delta/4$-net $\mathcal{N}$ over $G(d,k)$ of minimal size
\end{itemize}

\textbf{Protocol.}
Alice and Bob share the same entangled state as in $\mathcal{P}$.
\begin{enumerate}
  \item Given $P \in G(d,k)$, for every $i \in [N]$, Alice sets $\tilde{P_i} = U_i P U_i^\dagger$.
  \item For each $i\in [N]$, Alice finds $P_i \in \mathcal{N} \cap B(\tilde{P_i}, \frac{\delta}{4})$ which minimizes the error $\eps(P_i)$ of $\mathcal{P}$ on input $P_i$. Let $\varepsilon_{\delta/4}(i)$ be this minimal error for $i$. Then Alice finds the minimal $i\in [N]$ such that $\varepsilon_{\delta/4}(i)$ is minimized; let $i^*$ be this minimal $i$.
  \item Alice and Bob run $\mathcal{P}$ on input $P_{i^*}$.
  \item Alice sends $i^*$ to Bob, who performs $U_{i^*}^\dagger$ on $B_2$.
\end{enumerate}
\end{mdframed}
\captionsetup{hypcap=false}
\captionof{figure}{Protocol $\mathcal{P}'$ with worst-case error, given $\mathcal{P}$ with average-case error}
\label{fig:prot1}
\end{center}

In order to prove Theorem \ref{thm:N-avtoworst}, we start by proving a few lemmas.
\begin{lemma} \label{lem:tracelipschitz}
    Let $P_0$ be a fixed projector in $G(d,k)$, and define the function $f: G(d,k) \to \mathbb{R}$ by:
    \[f(P) = \|P/k - P_0/k\|_\tr.\]
     Then, $f$ is $\left(\frac{1}{\sqrt{2k}}\right)$-Lipschitz.
\end{lemma}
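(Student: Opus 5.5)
We need $|f(P)-f(Q)| \le \frac{1}{\sqrt{2k}}\|P-Q\|_2$, with the Schatten 2-norm on the domain as in Lemma \ref{prop:concentration}. The plan is to combine the reverse triangle inequality for the trace norm with a rank-based comparison between the 1-norm and the 2-norm of $P-Q$.

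The reverse triangle inequality gives
\[
|f(P)-f(Q)| \le \left\|\frac{P}{k}-\frac{Q}{k}\right\|_\tr = \frac{1}{2k}\|P-Q\|_1 ,
\]
so it suffices to show $\|P-Q\|_1 \le \sqrt{2k}\,\|P-Q\|_2$. Since $P$ and $Q$ both have rank $k$, the Hermitian operator $P-Q$ has rank at most $2k$. Cauchy--Schwarz over its at most $2k$ nonzero eigenvalues then gives
\[
\|P-Q\|_1 \le \sqrt{\operatorname{rank}(P-Q)}\,\|P-Q\|_2 \le \sqrt{2k}\,\|P-Q\|_2 .
\]
Combining the two displays yields
\[
|f(P)-f(Q)| \le \frac{\sqrt{2k}}{2k}\|P-Q\|_2 = \frac{1}{\sqrt{2k}}\|P-Q\|_2 ,
\]
which is exactly the claimed Lipschitz constant.

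One can sharpen the rank step using the structure of $P-Q$: the nonzero eigenvalues of a difference of two projectors come in pairs $\pm\lambda$, with at most $k$ such pairs. This does not improve the constant beyond what the rank bound already gives, so the crude bound suffices.

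There is no real obstacle here. The only care needed is to check that the metric on the domain is the Schatten 2-norm, as used in Lemma \ref{prop:concentration}, and not the trace distance used elsewhere for $G(d,k)$. The normalisation $\frac{1}{2}\|\cdot\|_1$ in the trace distance is what produces the factor $\frac{1}{2k}$ that cancels against $\sqrt{2k}$.
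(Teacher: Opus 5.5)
Your proposal is correct and is essentially the paper's own proof: the reverse triangle inequality gives $|f(P)-f(Q)| \le \frac{1}{2k}\|P-Q\|_1$, and since $P-Q$ has rank at most $2k$, $\|P-Q\|_1 \le \sqrt{2k}\,\|P-Q\|_2$. Your additional remark that the $\pm\lambda$ pairing of the eigenvalues of $P-Q$ gives the same constant is accurate but not needed.
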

\begin{proof}
For all $P, Q \in G(d,k)$, it holds
\begin{align*}
    |f(P) - f(Q)| &\leq \frac{1}{2k} \|P-Q\|_1\\
                  &\leq \frac{1}{2k} \sqrt{2k} \|P-Q\|_2
\end{align*}
where the reverse triangle inequality was used for the first inequality and the fact that $P-Q$ has rank at most $2k$ was used in the second.
\end{proof}
This lets us derive the following isoperimetric inequality for $G(d,k)$. 
\begin{lemma} \label{lem:isoperimetry}
Let $A$ be a measurable subset of $G(d,k)$ and let $\mu(A)$ be its measure under the Haar measure. Define the distance function $f_A: G(d,k) \to \mathbb{R}$ by:
\[f_A(P) = \inf_{Q \in A}(\|P/k - Q/k\|_\tr).\]
There exists a universal constant $c'' > 0$ such that, for all $t > 0$,
\[ \Pr[f_A(P) \geq t] \leq \exp(-c''dkt^2 \mu(A)^2).\]
\end{lemma}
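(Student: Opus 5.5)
The plan is the standard "concentration implies isoperimetry" argument, applied to the distance function $f_A$. First, $f_A$ is Lipschitz. Each function $P \mapsto \|P/k - Q/k\|_\tr$ with $Q \in A$ is $\frac{1}{\sqrt{2k}}$-Lipschitz with respect to the Schatten $2$-norm by Lemma \ref{lem:tracelipschitz}. Hence their infimum $f_A$ is also $\frac{1}{\sqrt{2k}}$-Lipschitz by Fact \ref{fac:extlipschitz}. Lemma \ref{prop:concentration} with $\kappa = 1/\sqrt{2k}$ then gives, for all $s > 0$,
\[ \Pr\bigl[|f_A(P) - \mathbb{E}[f_A(P)]| \geq s\bigr] \leq \exp(-2c\,dk\,s^2). \]

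Next I bound the mean $\mu_0 := \mathbb{E}[f_A(P)]$ in terms of $\mu(A)$. Since $f_A \equiv 0$ on $A$, applying the lower tail with $s = \mu_0$ (when $\mu_0 > 0$) gives
\[ \mu(A) \leq \Pr[f_A(P) \leq 0] \leq \Pr[\mu_0 - f_A(P) \geq \mu_0] \leq \exp(-2c\,dk\,\mu_0^2). \]
This yields $\mu_0 \leq \sqrt{\ln(1/\mu(A))/(2c\,dk)}$.

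Then I combine the two estimates. For $t \geq 2\mu_0$, the upper tail with $s = t - \mu_0 \geq t/2$ gives $\Pr[f_A \geq t] \leq \exp(-c\,dk\,t^2/2)$. This is at most $\exp(-c''dkt^2\mu(A)^2)$ for any $c'' \leq c/2$, since $\mu(A) \leq 1$. For $t < 2\mu_0$ we have
\[ dk\,t^2 \mu(A)^2 \leq 4 dk\,\mu_0^2\,\mu(A)^2 \leq \frac{2}{c}\,\mu(A)^2 \ln\frac{1}{\mu(A)}. \]
Because $x^2\ln(1/x) \leq 1/(2e)$ on $(0,1]$, the exponent $c''dkt^2\mu(A)^2$ is bounded by $c''/(ce)$. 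Choosing $c'' \leq c e \ln 2$ (say) makes the bound $\exp(-c''dkt^2\mu(A)^2) \geq 1/2$ in this regime. So for $t < 2\mu_0$ the claimed inequality holds whenever $\Pr[f_A \geq t] \leq 1/2$, though not automatically in general.

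This last regime is the main obstacle. Near $t \approx \mu_0$ the probability $\Pr[f_A \geq t]$ need not be small, while the right-hand side must remain below $1$. I would handle it through the $\mu(A)^2$ factor, which is exactly why it appears in the statement. If $\Pr[f_A \geq t] = q$, then $A$ and $\{f_A \geq t\}$ are two sets at distance at least $t$ with measures $\mu(A)$ and $q$. Applying the mean bound symmetrically to both sets, together with Lipschitz concentration of $f_A$, gives $t \lesssim \sqrt{\ln(1/\mu(A))/(dk)} + \sqrt{\ln(1/q)/(dk)}$, so $q \leq \exp(-c' dk t^2)$ whenever $t$ exceeds roughly twice the $\mu(A)$-term. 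The leftover range $t \lesssim \sqrt{\ln(1/\mu(A))/(dk)}$ is then absorbed by shrinking $c''$, since there $c''dkt^2\mu(A)^2$ is tiny and $\exp$ of it is close to $1$. If the constants cannot be made to close exactly, the fallback is to prove the inequality only in the range where the right-hand side is at most $1/2$, which is all that the later union-bound application needs.
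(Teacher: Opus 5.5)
Your first two steps are correct: $f_A$ is $\frac{1}{\sqrt{2k}}$-Lipschitz, the lower tail gives $\mu(A)\le\exp(-2c\,dk\,\mu_0^2)$, and for $t\ge 2\mu_0$ the upper tail gives $\exp(-c\,dk\,t^2/2)$, which is stronger than needed.

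The gap is the regime $0<t<2\mu_0$, which you leave unresolved. The ``two sets at distance $t$'' argument only reproduces the case $t\gtrsim 2\mu_0$ that you already handled. The claim that the leftover range is ``absorbed by shrinking $c''$'' because the right-hand side is close to $1$ is not an argument. $\Pr[f_A\ge t]$ can itself be as large as $1-\mu(A)$, so you need a quantitative comparison between two numbers that are both close to $1$. Your fallback of proving the bound only where the right-hand side is at most $1/2$ does not prove the stated lemma, which holds for all $t>0$. It is also not what Theorem~\ref{thm:N-avtoworst} uses as written. There each per-sample factor is $\exp(-\Theta(dk\delta^4))$, which may be close to $1$, and the argument relies on these exponents adding up over the $N$ independent samples.

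The missing observation is one line. Since $f_A=0$ on $A$, you always have $\Pr[f_A\ge t]\le 1-\mu(A)\le e^{-\mu(A)}$. For $t<2\mu_0$ your mean bound gives
\[ c''dk\,t^2\mu(A)^2 \le \frac{2c''}{c}\,\mu(A)\cdot\mu(A)\ln\frac{1}{\mu(A)} \le \frac{2c''}{ce}\,\mu(A) \le \mu(A) \]
as soon as $c''\le ce/2$. Then the right-hand side is at least $e^{-\mu(A)}$ and the inequality holds. Taking $c''\le c/2$ satisfies this and also covers your other regime, which completes your proof.

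For comparison, the paper needs no case split. It applies concentration to the soft indicator $g_t=\max(0,1-f_A/t)$. This function is $\frac{1}{\sqrt{2k}\,t}$-Lipschitz, has mean at least $\mu(A)$, and vanishes exactly on $\{f_A\ge t\}$. Hence $\Pr[f_A\ge t]\le\Pr[|g_t-\mathbb{E}g_t|\ge\mu(A)]\le\exp(-2c\,dk\,t^2\mu(A)^2)$ for every $t>0$, and this is the actual source of the $\mu(A)^2$ factor. Your repaired route is slightly longer, but it gives a sharper tail for $t\ge 2\mu_0$, with only $\ln(1/\mu(A))$ dependence instead of $\mu(A)^2$.
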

\begin{proof}
Note first that it follows from Lemma \ref{lem:tracelipschitz} along with Fact \ref{fac:extlipschitz} that $f_A$ is $\left(\frac{1}{\sqrt{2k}}\right)$-Lipschitz. Given $t > 0$, define the function $g_t: G(d,k) \to \mathbb{R}$ by:
\[g_t(P) = \max\left(0, 1 - \frac{f_A(P)}{t}\right).\]
Note that $g_t(P) = 1$ if and only if $P \in \overline{A}$ and $g_t(P) = 0$ if and only if $f_A(P) \geq t$. We have that $g_t$ is $\left(\frac{1}{\sqrt{2k} t}\right)$-Lipschitz in view of Fact \ref{fac:extlipschitz}. We also have that $\mathbb{E}[g_t(P)] \geq \mu(A)$, since in the set $A$, $f_A(P)=0$, and $g_t(P)$ takes value $1$. This all gives:
\begin{align*}
    \Pr[f_A(P) \geq t] &= \Pr[g_t(P) = 0] \\ 
                              &\leq \Pr[|g_t(P) - \mathbb{E}[g_t(P)]| \geq \mu(A)]\\
                              &\leq \exp(-2c'dkt^2 \mu(A)^2)
\end{align*}
where Lemma \ref{prop:concentration} was applied in the last line. The statement follows.
\end{proof}

With this result in hand, we prove Theorem \ref{thm:N-avtoworst} below, which lets us pick an appropriate choice of $N$ for Theorem \ref{thm:worstcase}.
\begin{theorem}\label{thm:N-avtoworst}
    In the protocol $\mathcal{P}'$ in Figure \ref{fig:prot1}, it is enough to take $N = O\left(\frac{\log 1/\delta}{\delta^4}\right)$.
\end{theorem}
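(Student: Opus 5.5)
We will show that a uniformly random choice of $U_1,\ldots,U_N$ (i.i.d.\ Haar on $U(d)$) makes $\mathcal{P}'$ have worst-case error at most $\eps_a+\delta$ with positive probability, by a union bound over the net $\mathcal{N}$. We begin by relating the error of $\mathcal{P}'$ on input $P$ to the quantity $\eps_{\delta/4}(i^*)$. Running $\mathcal{P}$ on $P_{i^*}$ and then applying $U_{i^*}^\dagger$ leaves Bob's register at expected distance $\eps(P_{i^*})$ from $U_{i^*}^\dagger (P_{i^*}/k) U_{i^*}$, by unitary invariance (Fact \ref{fac:distanceproperties}). That state is within $\delta/4$ of $P/k$, since $\|P_{i^*}/k-\tilde P_{i^*}/k\|_\tr\le\delta/4$. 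Hence the worst-case error of $\mathcal{P}'$ on $P$ is at most $\min_i \eps_{\delta/4}(i)+\delta/4$, and it suffices to show that for every $P$ some $i$ has $\eps_{\delta/4}(i)\le \eps_a+3\delta/4$.

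Next we define the good set $A=\{Q\in G(d,k): \eps(Q)\le \eps_a+\delta/4\}$. By Markov's inequality applied to the average-case error, and since errors are bounded by $1$, we get $\mu(A)\ge 1-\frac{\eps_a}{\eps_a+\delta/4}\ge \delta/4$, i.e.\ $\mu(A)=\Omega(\delta)$. For fixed $P$, $\tilde P_i=U_iPU_i^\dagger$ is Haar distributed on $G(d,k)$. Lemma \ref{lem:isoperimetry} with $t=\delta/8$ then gives
\[\Pr[f_A(\tilde P_i)\ge \delta/8]\le \exp(-c''dk\delta^4/1024)\le \exp(-c_0 dk\delta^4).\]
If $f_A(\tilde P_i)<\delta/8$, there is $Q\in A$ within $\delta/8$ of $\tilde P_i$. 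Some net point lies within $\delta/4$ of $\tilde P_i$, but to get an error bound we also need that net point's $\mathcal{P}$-error to be close to $\eps(Q)$. This is the delicate step, because $\eps(\cdot)$ need not be continuous on $G(d,k)$. The plan is to use the fact that Alice may pick any net point in the ball and minimize. One option is to define $A$ directly as a subset of net points, i.e.\ take $A$ to be the union of $\delta/8$-balls around net points with small error. Its measure can be lower bounded by covering $G(d,k)$ with the Voronoi cells of the net and applying Markov over cells, using the error of the best net point associated with each cell. The resulting constant-factor losses in $\delta$ only affect the constant in $N$.

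With the good event established, the failure probability for a single $P$ is at most $\exp(-c_0dk\delta^4N)$ by independence of the $U_i$. We apply this only to $P$ in a $\delta/8$-net of $G(d,k)$ and extend to all $P$ by the triangle inequality, at the cost of another constant-factor loss in $\delta$. By Fact \ref{lem:net1}, the net has size at most $(C/\delta)^{2k(d-k)}$. The union bound then requires
\[\exp(-c_0dk\delta^4N)\,(C'/\delta)^{2kd}<1,\]
which holds when $N\ge C''\log(1/\delta)/\delta^4$. The factor $dk$ in the concentration exponent exactly cancels the $dk$ in the log-covering number, which is why $N$ is independent of $d$ and $k$. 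With this $N$, sending $i^*$ costs $\log N = 4\log(1/\delta)+\log\log(1/\delta)+O(1)$ bits, matching Theorem \ref{thm:worstcase}.
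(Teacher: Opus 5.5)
Your outline follows the paper's proof:
\begin{itemize}
\item Haar-random $U_i$;
\item Markov's inequality for a good set of measure $\Omega(\delta)$;
\item Lemma \ref{lem:isoperimetry}, showing each $U_iPU_i^\dagger$ lands near that set except with probability $\exp(-\Omega(dk\delta^4))$;
\item independence over $i$;
\item a union bound over a net of size $(C/\delta)^{2k(d-k)}$, in which the $dk$ factors cancel.
\end{itemize}
The gap is exactly at the step you flag, and your proposed repair does not close it. Markov over Voronoi cells needs an inequality of the form $\sum_j \mu(V_j)\,\varepsilon(R_j)\le\varepsilon_a$ for the net points $R_j$ Alice is allowed to use. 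But average-case correctness says nothing about $\varepsilon$ on the finite, measure-zero set $\mathcal{N}$. $\mathcal{P}$ could fail completely on every point of $\mathcal{N}$ and still have average error $\varepsilon_a$. In that case no net point is good, your $A$ is empty, and $\mathcal{P}'$ as specified in Figure \ref{fig:prot1} has error close to $1$ on every input. (The paper's proof passes over the same point. It asserts via Markov that $\{Q:\varepsilon_{\delta/4}(Q)<\varepsilon_a+\delta/2\}$ has measure at least $\delta/2$, although $\varepsilon_{\delta/4}$ depends only on the values of $\varepsilon$ on $\mathcal{N}$.)

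The missing idea is that the net may be chosen depending on $\mathcal{P}$. Start from a $\delta/16$-net with Voronoi cells $V_j$, each contained in the $\delta/16$-ball around its center. For every cell with $\mu(V_j)>0$, replace the center by a point $R_j\in V_j$ with $\varepsilon(R_j)\le\mu(V_j)^{-1}\int_{V_j}\varepsilon\,d\mu$. Such a point exists because a function cannot exceed its average everywhere on a set of positive measure. The $R_j$ then form a $\delta/8$-net with $\sum_j\mu(V_j)\varepsilon(R_j)\le\varepsilon_a$. Markov over cells now gives that the union $A'$ of the cells with $\varepsilon(R_j)\le\varepsilon_a+\delta/4$ has measure $\Omega(\delta)$, and every point of $A'$ is within $\delta/8$ of a good net point. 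Run your argument with $A'$, $t=\delta/32$, and an analysis net of mesh $\delta/32$. For every input this yields some $i$ with a good net point within $3\delta/16<\delta/4$ of $\tilde P_i$, hence error at most $\varepsilon_a+\delta/2$, with the same $N=O(\log(1/\delta)/\delta^4)$.

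Alternatively, remove the net from the protocol and let Alice minimize $\varepsilon$ over the whole ball $B(\tilde P_i,\delta/4)$. Then your original $A=\{Q:\varepsilon(Q)\le\varepsilon_a+\delta/4\}$ works directly with $t=\delta/8$, and the net is used only in the union bound.
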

\begin{proof}
    Let $\varepsilon(P)$ be the error of $\mathcal{P}$ on input $P$, and set:
    \[\varepsilon_{\delta/4}(P) = \inf_{Q \in \mathcal{N} \cap B(P, \delta/4)}(\varepsilon(Q)).\] 
    
   We claim that $\mathcal{P}'$ will have worst-case error at most $\varepsilon_a + \delta$ if the unitaries $\mathcal{U}$ are such that, for every $P' \in \mathcal{N}$,
    \begin{equation} \label{eq:worstcasecondition}
    \min_{i \in [N]} \varepsilon_{\frac{\delta}{4}}(U_i P' U_i^\dagger) < \varepsilon_a + \frac{\delta}{2}.
    \end{equation}
    On input $P$, $\mathcal{P}'$ then prepares $P'\in \mathcal{N}$ that is $\delta/2$-close to $P$ up to distance $\eps_a+\delta/2$. Therefore, the overall distance between $P$ and Bob's output on $P$ is $\eps_a+\delta$.
    
    The existence of such a choice of unitaries $\mathcal{U}$ will now be shown using the probabilistic method. It is enough to show that for a specific choice of $P \in \mathcal{N}$, if $\mathcal{U}$ is chosen at random by selecting each $U_i$ independently from the Haar measure, the probability that equation \eqref{eq:worstcasecondition} is not satisfied is strictly less than $\frac{1}{|\mathcal{N}|}$. This is because we can take a union bound over $\mathcal{N}$ to say that the probability (over the random choice of $\mathcal{U}$) that the condition does not hold for every choice of $P\in \mathcal{N}$ is smaller than $1$, which means there exists a choice of $\mathcal{U}$ for which the condition holds for every $P\in\mathcal{N}$.
    
    Assuming $U_1, \ldots, U_N$ are thus chosen, fixing a $P' \in \mathcal{N}$, by the definition of Haar measure, setting $P_i = U_i P' U_i^\dagger$, we see that the $P_i$ are independent and uniformly distributed in $G(d,k)$. Define the set $A \subseteq G(d,k)$ by:
    \[A = \left\{Q : \varepsilon_{\delta/4}(Q) < \varepsilon_a + \frac{\delta}{2}\right\}\]
    Markov's inequality can be seen to imply that $\mu(A) \geq \frac{\delta}{2}$. It suffices to bound the probability that, for all $i$, $f_A(P_i) > \frac{\delta}{2}$, where $f_A(P_i)$ is as defined in Lemma \ref{lem:isoperimetry}. Applying Lemma \ref{lem:isoperimetry}, for some universal constant $K$, this is at most
   \[
        \exp(-Kdk\delta^4 N).
    \]
    We have thus upper bounded the probability of $U_1, \ldots, U_N$ being bad for a single $P'\in \mathcal{N}$. We know from Fact \ref{lem:net1} that
    \[|\mathcal{N}| \leq \left(\frac{2C}{\delta}\right)^{2k(d-k)}.\]
    Setting $N=O\left(\frac{\log(1/\delta)}{\delta^4}\right)$ as in the statement of the theorem therefore makes the probability of $U_1, \ldots, U_N$ being bad for any $P'\in \mathcal{N}$ as small as needed.
\end{proof}

\subsection{First protocol: the Kraus operator protocol}\label{sec:protocol-1}
In this section, we will give our first protocol for RSP of flat states with average-case error, proving the following theorem.
\begin{theorem}\label{thm:rsp-prot-1}
For every $\eps_a > 0$, there exists a $(d,k)$-RSP protocol with average error $\eps_a$ in which Alice and Bob share the standard maximally entangled state with local dimension $d$, and Alice communicates $\log \frac{d}{k} + \log \log d + 2\log \frac{1}{\varepsilon_a} + O(1)$ bits to Bob. Moreover, the error in this protocol is only one-sided, i.e. either it fails and the failure is known to Alice and Bob, or it succeeds and the residual state on Bob's end is exactly $P/k$.
\end{theorem}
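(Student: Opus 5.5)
We generalize the Bennett et al.\ pure-state RSP protocol: Alice performs a Kraus measurement on her half of $\ket{\Phi_+}$ whose outcomes steer Bob's half to a bounded set of rotated copies of $P/k$, and then sends an index. Fix $N$ unitaries $U_1,\ldots,U_N$ on $\mathbb{C}^d$, to be chosen probabilistically. On input $P$, Alice sets $Q_i = \overline{U_i P U_i^\dagger}$ (complex conjugate, so that by Fact \ref{fact:maximallyentangled} and the transpose trick an outcome proportional to $Q_i$ on her side leaves Bob with $U_iPU_i^\dagger/k$). She considers $S = \sum_{i=1}^N Q_i$ and takes Kraus operators $M_i = \sqrt{Q_i/\lambda}$ for $i\in[N]$, where $\lambda=\|S\|_\infty$, together with a failure operator $M_0=\sqrt{I - S/\lambda}$. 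These form a valid POVM since $S/\lambda\preceq I$. On outcome $i\ge 1$, Alice sends $i$ and Bob applies $U_i^\dagger$, obtaining exactly $P/k$. On outcome $0$, Alice sends a failure flag and the protocol aborts. This gives one-sided error. The communication is $\lceil\log(N+1)\rceil$ and the entanglement is one copy of $\ket{\Phi_+}$ of dimension $d$.

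The failure probability on $\ket{\Phi_+}$ is
\[
\Tr\bigl((I-S/\lambda)\,I/d\bigr)=1-\frac{Nk}{d\lambda},
\]
so we need $\lambda\le (1+\eps')\,\mathbb{E}[S]$, where $\mathbb{E}[S]=\frac{Nk}{d}I$ for Haar-random $U_i$. Then the failure probability is at most $1-\frac{1}{1+\eps'}\le\eps'$. This is exactly the setting of the operator Chernoff bound (Fact \ref{fact:matrixchernoff}): the $X_i=Q_i$ are i.i.d.\ projectors with $\|X_i\|\le1$ and $\alpha=k/d$. Hence
\[
\Pr\bigl[S\not\preceq (1+\eps')\tfrac{Nk}{d}I\bigr]\le 2d\exp\!\left(-\frac{Nk\eps'^2}{2d\ln 2}\right).
\]
Taking $N = C\frac{d}{k}\frac{\ln d}{\eps'^2}$ makes this probability at most $1/2$ for each fixed $P$ (indeed for $\eps'<1/2$).

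The main obstacle is that the $U_i$ must be fixed in advance, independently of $P$, while the Chernoff bound holds only per $P$. Here average-case error suffices. For each fixed $P$, the expected failure probability over the random $\mathcal{U}$ is at most $\eps' + 2d\exp(-Nk\eps'^2/(2d\ln2))$, since failure is at most $1$ on the bad event. Averaging over $P$ drawn from the Haar measure and swapping expectations (Fubini), there exists a fixed $\mathcal{U}$ with $\mathbb{E}_P[\text{failure}]\le \eps'+1/d$, or at most $2\eps'$ after slightly enlarging the constant in $N$. No net or union bound over $G(d,k)$ is needed.

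Finally, on failure Bob's output is arbitrary, so it contributes trace distance at most $1$. Thus $\eps_a\le\Pr[\text{fail}]$, and setting $\eps'=\eps_a/2$ gives
\[
\log N=\log\frac dk+\log\log d+2\log\frac1{\eps_a}+O(1),
\]
with one extra bit for the failure flag absorbed into the $O(1)$. The one remaining check is that Alice's post-measurement state is a flat state. This follows since $M_i^TM_i\propto Q_i^T$, which is a rank-$k$ projector, so Bob's state is exactly flat; I will verify the conjugation conventions so that Bob's correction is precisely $U_i^\dagger$.
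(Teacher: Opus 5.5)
Your protocol, the failure-probability computation $1-\frac{Nk}{d\lambda}$, and the overall argument match the paper's: the operator Chernoff bound for each fixed $P$, Fubini over $P$, and a probabilistic choice of $\mathcal{U}$ with no net needed. The one step that does not go through as claimed is turning the Chernoff tail into a bound on the expected failure probability. You cut at a single threshold $\eps'$ and charge failure $1$ on the bad event. You then assert that enlarging the constant in $N=C\frac{d}{k}\frac{\ln d}{\eps'^2}$ makes $2d\exp(-Nk\eps'^2/(2d\ln 2))\le\eps'$. That inequality reads $2d^{1-C/(2\ln 2)}\le\eps'$, which forces $C=\Omega(\ln(1/\eps')/\ln d)$. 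This is not a constant once $\eps'$ is below every fixed power of $1/d$, for instance with $d$ fixed and $\eps'\to 0$. A single cut genuinely needs $N\approx\frac{d}{k\eps'^2}\ln\frac{d}{\eps'}$, i.e.\ $\log\frac dk+2\log\frac1{\eps_a}+\log\ln\frac{d}{\eps_a}+O(1)$ bits. That exceeds the claimed bound by $\log\ln\frac{d}{\eps_a}-\log\ln d$, which is unbounded. As written, your argument proves the theorem only for $\eps_a\ge d^{-O(1)}$.

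The paper avoids the extra $\ln(1/\eps_a)$ by integrating the whole tail rather than cutting once. With $X=\frac{d}{Nk}\|S\|_\infty\ge 1$, the failure probability is $1-1/X\le X-1$, so
\[\mathbb{E}[\text{fail}]\le\int_0^\infty\Pr[X>1+t]\,dt\le t_0+\int_{t_0}^\infty 2d\,e^{-Kt^2}\,dt\le t_0+\frac{1}{Kt_0},\qquad K=\frac{Nk}{2d\ln 2}.\]
Choosing $t_0=\sqrt{\ln d/K}$ gives $O(\sqrt{\ln d/K})$, so $N=\Theta(\frac{d\ln d}{k\eps_a^2})$ suffices for every $\eps_a$. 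The range $t\ge 1/2$ lies outside the stated Chernoff bound; since $X\le d/k$, it contributes at most $\frac dk\cdot 2d\,e^{-K/5}$, which is negligible. Alternatively, you can keep your threshold argument and handle $\eps_a<1/d$ separately. In that regime, plain teleportation of a locally prepared purification of $P/k$ is exact and costs $\lceil 2\log d\rceil<2\log\frac1{\eps_a}+1$ bits, already within the stated bound.
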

Theorem \ref{thm:rsp-prot-1} implies item \ref{item:prot-1} of Theorem \ref{thm:rsp-upper}. It is also possible to apply Theorem \ref{thm:worstcase} to the protocol in Theorem \ref{thm:rsp-prot-1} to get a worst-case correct protocol. Note however, that the fact that the protocol only makes one-sided error is not preserved after applying the average-case to worst-case reduction, so the resulting protocol is simply worse than the protocol in item \ref{item:prot-2} of Theorem \ref{thm:rsp-upper}.

This first protocol that we give is a straightforward generalization of the remote state preparation protocol for pure states that is given in 
\cite{bennett2005remote}, and basically reduces to exactly it when we set $k=1$. The only slight difference is that \cite{bennett2005remote} had to work a bit harder to ensure worst-case correctness, whereas we only need to control the average-case error, which makes the analysis somewhat simpler. As in the previous average-case to worst-case reduction, the protocol is specified by a collection of $N$ unitaries, for some appropriately chosen value of $N$, and is described in Figure \ref{fig:prot2}.

\begin{center}
\begin{mdframed}[nobreak=true]
\textbf{Parameters.}
A collection $\mathcal{U} = \{U_1,\ldots,U_N\}$ of unitaries on $\mathbb{C}^d$

\textbf{Protocol.}

Alice and Bob share the standard maximally entangled state $\ket{\Phi_+}$ on $\mathbb{C}^d \otimes \mathbb{C}^d$.
\begin{enumerate}
  \item Given $P \in G(d,k)$, Alice measures her share of $\ket{\Phi_+}$ according to the generalized measurement $\{M_e\} \cup \{M_i\}_{i \in [N]}$, where
\[M_e = \sqrt{I - M/\|M\|_\infty}\]
\[M_i= \sqrt{\frac{d}{kN \|M\|_\infty}} U_i \overline{P} U_i^\dagger\]
where
  \begin{equation}\label{eq:Kraus-def}
M = \frac{d}{kN} \sum_{i \in [N]} U_i \overline{P} U_i^\dagger
\end{equation}
  \item Alice sends her measurement outcome to Bob. If Bob receives $e$, he does nothing: if he receives $i \in [N]$, he applies $\overline{U_i}^\dagger$ to his share of the state, which can be checked to now have reduced density matrix exactly $P/k$. 
\end{enumerate}
\end{mdframed}
\captionsetup{hypcap=false}
\captionof{figure}{First average-case-correct $(d,k)$-RSP protocol}
\label{fig:prot2}
\end{center}

For the protocol in Figure \ref{fig:prot2} to have average-case error at most $\varepsilon_a$, we see that we must exhibit $N$ so that there exists a choice of unitaries $\mathcal{U}$ such that, if $P \in G(d,k)$ is sampled uniformly at random, and $M$ is as in the description of the protocol, then
\[\mathbb{E}_P\left[1 - \frac{1}{\|M\|_\infty}\right] < \varepsilon_a.\]
This will ensure that the average probability of the protocol producing the error outcome $e$ is at most $ \varepsilon_a$. This value of $N$ is exhibited in the following result.

\begin{theorem}
    Such a choice of $\mathcal{U}$ exists if we take $N = \Theta(\frac{d \log d }{k\varepsilon_a^2})$.
\end{theorem}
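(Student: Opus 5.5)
We need to show there exist unitaries $U_1,\ldots,U_N$ such that $\mathbb{E}_P[1-1/\|M\|_\infty] < \eps_a$, where $M = \frac{d}{kN}\sum_i U_i\overline{P}U_i^\dagger$. We use the probabilistic method, drawing the $U_i$ i.i.d. from the Haar measure on $U(d)$. Since $1-1/\|M\|_\infty \le \max(0,\|M\|_\infty - 1)$, it suffices to show that
\[\mathbb{E}_{\mathcal{U}}\,\mathbb{E}_P\bigl[\max(0,\|M\|_\infty-1)\bigr] < \eps_a .\]
Some choice of $\mathcal{U}$ then attains at most this value. By Fubini, and since for fixed $P$ the matrices $U_i\overline{P}U_i^\dagger$ are i.i.d. Haar-random rank-$k$ projectors, the inner quantity does not depend on $P$. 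So it is enough to fix a single $P$ and bound $\mathbb{E}_{\mathcal{U}}[\max(0,\|M\|_\infty-1)]$.

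For this we apply the operator Chernoff bound, Fact~\ref{fact:matrixchernoff}, with $X_i = U_i\overline{P}U_i^\dagger$. These are PSD with $\|X_i\|\le 1$, and $A = \mathbb{E}[X_i] = \frac{k}{d}I$, so $\alpha = k/d$. Since $M = \frac{d}{k}\cdot\frac{1}{N}\sum_i X_i$, the event $\frac1N\sum_i X_i \preceq (1+\eps)A$ is exactly $\|M\|_\infty \le 1+\eps$. The Chernoff bound therefore gives
\[\Pr[\|M\|_\infty > 1+\eps] \le 2d\exp\!\left(-\frac{Nk\eps^2}{2d\ln 2}\right)\qquad (0<\eps<1/2).\]

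To control the expectation we split it at threshold $\eps = \eps_a/2$. We use two facts: $\|M\|_\infty - 1 \le \eps_a/2$ on the good event, and $\|M\|_\infty \le d/k$ always, since $\|M\| \le \frac{d}{kN}\cdot N$. This gives
\[\mathbb{E}[\max(0,\|M\|_\infty-1)] \le \frac{\eps_a}{2} + \frac{d}{k}\cdot 2d\exp\!\left(-\frac{Nk\eps_a^2}{8d\ln 2}\right).\]
Choosing $N = C\frac{d\log d}{k\eps_a^2}$ with $C$ large, and also absorbing a $\log(1/\eps_a)$ term, makes the second term below $\eps_a/2$. The crude factor $d/k$ is harmless here: $\log(2d^2/(k\eps_a))$ is $O(\log d + \log(1/\eps_a))$, and the extra $\log(1/\eps_a)$ only changes the communication cost by lower-order terms. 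For the cleanest statement, I would instead integrate the tail, $\int_{\eps_a/2}^{\infty}\Pr[\|M\|_\infty-1>t]\,dt$. For $t\ge 1/2$ the Chernoff bound at parameter just below $1/2$ covers everything, yielding the stated $N=\Theta\bigl(\frac{d\log d}{k\eps_a^2}\bigr)$ when $\eps_a \ge 1/\mathrm{poly}(d)$.

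The main obstacle is only this bookkeeping: the Chernoff bound is stated for $\eps<1/2$, while $\|M\|_\infty$ can be as large as $d/k$. Either the trivial bound on the bad event or the tail integral handles it. I would also remark that the upper bound on $N$ needs only the upper-deviation half of the Chernoff bound. The communication cost is then $\log(N+1) = \log\frac dk + \log\log d + 2\log\frac1{\eps_a}+O(1)$, as claimed in Theorem~\ref{thm:rsp-prot-1}.
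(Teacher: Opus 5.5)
Your approach is the paper's: Haar-random $U_i$, invariance to reduce to i.i.d.\ uniform projectors for a fixed $P$, the upper tail of Fact~\ref{fact:matrixchernoff}, and then a bound on the expectation of $\|M\|_\infty-1$. There is a quantitative shortfall, though. As written, you only obtain the stated $N=\Theta\bigl(\frac{d\log d}{k\eps_a^2}\bigr)$ when $\log(1/\eps_a)=O(\log d)$.

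In your main split, the bad-event term $\frac{2d^2}{k}e^{-Nk\eps_a^2/(8d\ln 2)}$ has to be pushed below $\eps_a/2$ by the exponential alone. That forces $N\gtrsim\frac{d}{k\eps_a^2}\ln\frac{4d^2}{k\eps_a}$. For very small $\eps_a$ (say $\eps_a=2^{-d}$) this exceeds the claimed $N$ by a factor of about $d/\log d$. It also costs up to roughly $\log\log(1/\eps_a)$ extra bits of communication, which the $O(1)$ in Theorem~\ref{thm:rsp-prot-1} does not absorb. So your final sentence about the communication cost does not follow from your main route.

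Your assessment that the tail-integral route needs $\eps_a\ge 1/\mathrm{poly}(d)$ is too pessimistic. That route is what the paper does, and it works for every $\eps_a$. Take $K=\Theta(Nk/d)=\Theta(\log d/\eps_a^2)$ and $t_0=\sqrt{\ln d/K}=\Theta(\eps_a)$. Fact~\ref{lem:gaussianbound} then gives
\[\int_{t_0}^{\infty}2d\,e^{-Kt^2}\,dt\le\frac{d}{Kt_0}e^{-Kt_0^2}=\frac{1}{Kt_0}=O\!\left(\frac{\eps_a}{\log d}\right).\]
The needed factor of $\eps_a$ comes from the prefactor $1/(Kt_0)$, not from making the exponential beat $1/\eps_a$. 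Similarly, the range $t\ge 1/2$ contributes at most $\frac{d}{k}\cdot 2d\,e^{-\Omega(K)}=\frac{2d^2}{k}\,d^{-\Omega(1/\eps_a^2)}$. This is below $\eps_a$ for every $\eps_a\in(0,1)$ once the constant in $N$ is large, because $1/\eps_a^2$ dominates $\log(1/\eps_a)$.

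On that last range you are more careful than the paper. It applies the Chernoff bound ``for all $t>0$'', even though Fact~\ref{fact:matrixchernoff} is stated only for deviations below $1/2$. Combine your treatment of $t\ge 1/2$ (Chernoff just below $1/2$ together with $\|M\|_\infty\le d/k$) with the paper's Gaussian-tail computation on $[t_0,1/2)$. That gives a complete proof of the statement as written, with no restriction on $\eps_a$.
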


\begin{proof}
With the choice of $N$ in the theorem statement, the existence of an appropriate collection $\mathcal{U}$ is shown using the probabilistic method. For a given choice of $\mathcal{U}$ and a given choice of $P$, define $\Lambda(\mathcal{U}, P)$ to be the spectral norm of $M$ defined in equation \eqref{eq:Kraus-def} for a fixed $\mathcal{U}$ and $P$: 
\[ \Lambda(\mathcal{U}, P) = \left\| \frac{d}{kN} \sum_i U_i \overline{P} U_i^\dagger \right\|_\infty.\]

We want to show the existence of a collection of unitaries $\mathcal{U}$ such that
\[\mathbb{E}_P\left[1 - \frac{1}{\Lambda(\mathcal{U}, P)}\right] < \eps_a.\]
Note that since $\Lambda(\mathcal{U}, P) \geq 1$ always (since $M$ has trace $d$), we get
\[\mathbb{E}_P\left[1 - \frac{1}{\Lambda(\mathcal{U}, P)}\right] \leq \mathbb{E}_P[\Lambda(\mathcal{U}, P)] - 1.\]
We will pick the elements of $\mathcal{U}$ i.i.d. from the Haar measure, and show that 
\[\mathbb{E}_\mathcal{U}\mathbb{E}_P[\Lambda(\mathcal{U}, P)] < 1 + \eps_a.\]
This, by the probabilistic method, implies that there exists a choice of $\mathcal{U}$ such that $\mathbb{E}_P[\Lambda(\mathcal{U}, P)]$ is at most $1+\eps_a$, which gives us the required average probability of the error outcome $e$. Note that by the right-invariance of the Haar measure, taking $P_1, \ldots, P_N$ to be independent uniformly random elements of $G(d,k)$ and defining the random variable $X$ by 
\[X = \left\|\frac{d}{kN} \sum_i P_i\right\|_\infty\]
we have that
\[\mathbb{E}_{\mathcal{U}}\mathbb{E}_P[\Lambda(\mathcal{U}, P)] = \mathbb{E}[X].\]

This expectation can now be estimated using the operator Chernoff bound (Fact \ref{fact:matrixchernoff}). Firstly, it is easy to see that for any $i$, $\mathbb{E}_{P_i}\left[\frac{d}{k}P_i\right]$ is the identity operator. The operator Chernoff bound then gives that, for $K = \Theta(\frac{Nk}{d}) = \Theta(\frac{\log d}{\eps _a^2})$ and for all $t > 0$, 
\[\Pr\left[X > 1 + t\right] = \Pr_{P_1, \ldots, P_N}\left[\frac{d}{kN} \sum_i P_i \not \preceq \left(1 + t\right)I\right] \leq 2d\cdot\exp\left(-Kt^2\right).\]
Set $t_0 = \sqrt{\frac{\ln d}{K}}$. Then,
\begin{align*}
    \mathbb{E}[X] &= 1 + \int_0^\infty \Pr[X > 1+t] dt \\
    &\leq 1 + t_0 + \int_{t_0}^\infty \Pr[X > 1 + t] dt \\
     &\leq 1 + t_0 + \int_{t_0}^\infty 2d\cdot\exp(-Kt^2) dt \\
     &\leq 1 + t_0 + \frac{d}{Kt_0} \exp(-Kt_0^2)\\
     &= 1 + t_0 + \frac{1}{K t_0}
\end{align*}
where Fact \ref{lem:gaussianbound} was used in the second to last inequality. It can be seen that our choice of $N$ makes this quantity at most $1 + \eps_a$, as desired.
\end{proof}

\subsection{Second protocol: Damped rejection sampling}\label{sec:protocol-2}

The protocol we gave in Section \ref{sec:protocol-1} was simple to analyze, had zero error and was quite economical when it came to the entanglement cost. However, its communication cost was not optimal as it included a $\log\log d$ term. In this subsection, we describe a more sophisticated $(d, k)$-RSP protocol whose communication cost matches that of the best-known protocol due to \cite{jain2005} up to $\eps$-dependent terms (which itself is a generalization of a protocol for pure states due to \cite{Bennett_2001}). The \cite{jain2005} rejection sampling uses an extravagant amount of entanglement, whereas the entanglement cost of our protocol, which is a refined version of rejection sampling, is nearly optimal.

This second protocol, given in Figure \ref{fig:prot3}, will prove the following theorem.

\begin{theorem}\label{thm:rsp-prot-2}
For every $\eps_a>0$, there exists a $(d,k)$-RSP protocol with average error $\eps_a$, in which Alice and Bob share the maximally entangled state with local dimension $d$, and Alice communicates $\log\frac{d}{k} + 2\log\frac{1}{\eps_a} + \log\ln\frac{1}{\eps_a} + O(1)$ bits to Bob.
\end{theorem}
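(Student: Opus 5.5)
We analyze the damped rejection sampling protocol sketched in the introduction. Fix a damping parameter $\eta$ (to be set to $\Theta(\eps_a^2)$) and a number of rounds $N = \lceil \frac{d}{\eta k}\ln\frac{2}{\eps_a}\rceil$. Let $U_1,\ldots,U_N$ be unitaries, chosen i.i.d. Haar-random in the analysis and fixed afterwards by the probabilistic method. On input $P$, Alice sets $Q_i = U_i\overline{P}U_i^\dagger$ and applies, for $i=1,\ldots,N$ in turn, the two-outcome instrument with Kraus operators $\{\sqrt{\eta Q_i},\sqrt{I-\eta Q_i}\}$ to her half of $\ket{\Phi_+}$. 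She stops at the first $i$ at which she obtains the first outcome and sends $i$ to Bob; if no round succeeds she sends a failure symbol. Bob applies $\overline{U_i}^\dagger$ (the appropriate conjugate) to his half and outputs it. The message has $\log(N+1) = \log\frac{d}{k} + 2\log\frac{1}{\eps_a} + \log\ln\frac{1}{\eps_a} + O(1)$ bits, which is the claimed cost.

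\textbf{Step 1 (failure probability).} Average jointly over $P$ and the $U_i$. Conditioned on the history, $Q_i$ is a fresh uniformly random rank-$k$ projector, so $\mathbb{E}[Q_i]=\frac{k}{d}I$. The conditional probability of failing round $i$ is therefore exactly $1-\frac{\eta k}{d}$, whatever the current state. The probability that all $N$ rounds fail is then $(1-\frac{\eta k}{d})^N\le \eps_a/2$, and the probability of first succeeding at round $i$ is $p_i=\frac{\eta k}{d}(1-\frac{\eta k}{d})^{i-1}$.

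\textbf{Step 2 (state damage per failed round).} Let $\ket{\psi_{i-1}}$ be the normalized shared state after $i-1$ failures. I want to show that
\[\mathbb{E}\,\mathsf{F}(\psi_{i},\Phi_+)\ge \mathbb{E}\,\mathsf{F}(\psi_{i-1},\Phi_+) - O\!\left(\frac{\eta^2 k}{d}\right).\]
The tool is a gentle-measurement computation. We have $\sqrt{I-\eta Q} = I - (1-\sqrt{1-\eta})Q$, and $1-\sqrt{1-\eta} = \frac{\eta}{2}+O(\eta^2)$. Write $\ket{\psi_{i-1}}=\alpha\ket{\Phi_+}+\ket{\perp}$. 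Expanding $\braket{\Phi_+|\sqrt{I-\eta Q}\otimes I|\psi_{i-1}}$ and averaging over $Q$ using $\mathbb{E}Q=\frac{k}{d}I$, the first-order term is a uniform shrink $1-\frac{\eta k}{2d}$ of the overlap. Normalizing by $\sqrt{1-\eta k/d}$ cancels this shrink to first order. What survives is a second-order term governed by the variance of $\langle Q\rangle$, which is $O(\eta^2 k/d)$. This is where I would be careful: the normalization is a random, state-dependent denominator, and the cross terms with $\ket{\perp}$ must vanish in expectation. If the clean per-round bound is hard to get, my fallback is to track the unnormalized vector $\prod_j\sqrt{I-\eta Q_j}\ket{\Phi_+}$ together with its overlap with $\ket{\Phi_+}$ directly. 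Bounds on $\mathbb{E}$ of the unnormalized overlap and the success probability then give the weighted fidelity via Cauchy--Schwarz. This avoids conditioning, and I expect it to be the main obstacle.

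\textbf{Step 3 (success round and total error).} If round $i$ succeeds from a state $\psi_{i-1}$ with fidelity $f$ to $\Phi_+$, compare with the ideal case. Applying $\sqrt{\eta Q_i}$ to $\Phi_+$ leaves Bob exactly $\overline{Q_i}$-type flat state (Fact \ref{fact:maximallyentangled}), and after Bob's correction this is $P/k$. Monotonicity of fidelity, together with the renormalization by a success probability close to $\eta k/d$, should give output fidelity about $1-(i-1)O(\eta^2k/d)$, with Fuchs--van de Graaf converting this to trace distance. The average error is then at most $\eps_a/2+\sum_i p_i\sqrt{(i-1)O(\eta^2 k/d)}$. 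By Jensen's inequality and $\sum_i p_i(i-1)\le \frac{d}{\eta k}$, the sum is $O(\sqrt{\eta})$. Choosing $\eta=c\,\eps_a^2$ makes the total at most $\eps_a$. Finally, the average over the random $U_i$ is at most $\eps_a$, so some fixed choice achieves it.
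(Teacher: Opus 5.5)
\textbf{Verdict: gap.} Your protocol, parameters ($\eta=\Theta(\eps_a^2)$, $N\approx\frac{d}{\eta k}\ln\frac{1}{\eps_a}$), message-length count and overall bookkeeping match the paper. The bookkeeping is a per-round loss of $O(\eta^2k/d)$ summed against geometric reach probabilities. However, the two estimates that carry the proof are missing, and the one for the success round rests on an argument that fails.

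\textbf{The success round (Step~3).} You appeal to monotonicity of fidelity and to a success probability ``close to $\eta k/d$''. Conditioning on the outcome $\sqrt{\eta}Q_i$ is a post-selection, not a channel, so monotonicity does not apply. Also, for a fixed $U_i$ the success probability $\eta\Tr(Q_i\psi^A_{i-1})$ can be far from $\eta k/d$. For example, with $\gamma=k/d$, take $\ket{\psi_{i-1}}=\sqrt{f}\ket{\Phi_+}+\sqrt{1-f}\ket{a}\ket{b}$ where $Q_i\ket{a}=\ket{a}$ and $\overline{Q_i}\ket{b}=0$; the second vector is orthogonal to $\ket{\Phi_+}$. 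After post-selection and Bob's correction, the output puts weight $\frac{1-f}{1-f+f\gamma}$ on the kernel of $P$. This is close to $1$ once $1-f\gg k/d$, so post-selection can amplify the deviation by a factor $d/k$ and no pointwise argument can work.

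What saves the protocol is that $U_i$ is fresh and independent of $\psi_{i-1}$. You therefore need a statement averaged over $U_i$ and weighted by the success probability. This is the paper's Lemma~\ref{lem:postselection}, $\mathbb{E}[p\,\mathsf{F}(\rho',\sigma')]\ge\eta\gamma\,\mathsf{F}(\rho,\sigma)$, which uses $\mathbb{E}[Q_i]=\gamma I$.

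\textbf{The failed rounds (Step~2).} Step~2 is left heuristic. Because reaching round $j$ is correlated with the state, your Step~3 accounting needs the failure-weighted form
$\mathbb{E}_{U_j}[(1-p_j)\,\mathsf{F}(\psi_j,\Phi_+)]\ge(1-\eta\gamma)\,\mathsf{F}(\psi_{j-1},\Phi_+)-c^2\gamma$,
where $p_j$ is the conditional success probability of round $j$ and $c=1-\sqrt{1-\eta}$. This is Lemma~\ref{lem:damagecontrol}.

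\textbf{How your fallback closes both gaps.} Your fallback in Step~2, pushed one step further, fixes both problems at once and is shorter than the paper's two per-round lemmas. Set $M_j=I-cQ_j=\sqrt{I-\eta Q_j}$ and define the amplitude $X_i=\sqrt{\eta d/k}\,\bra{\Phi_+}(Q_iM_{i-1}\cdots M_1\otimes I)\ket{\Phi_+}$. For fixed $\mathcal{U}$ it satisfies $|X_i|^2=\Pr[\text{first success at round } i]\cdot|\braket{\phi_i|\tilde\phi_i}|^2$. Since the $U_j$ are independent with $\mathbb{E}[Q_j]=\gamma I$, the expectation of the operator product factorizes, giving $\mathbb{E}X_i=\sqrt{\eta\gamma}\,(1-c\gamma)^{i-1}$. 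Applying $\mathbb{E}|X_i|^2\ge|\mathbb{E}X_i|^2$ and summing over $i\le N$ gives $\mathbb{E}_{\mathcal U}[\mathcal{F}(\mathcal{U},P)]\ge(1-\eta)\bigl(1-(1-c\gamma)^{2N}\bigr)$. This matches the paper's $1-(1-\delta)^N-\eta$, after which Jensen and Fuchs--van de Graaf finish as in the paper.
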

Together, Theorems \ref{thm:rsp-prot-2} and \ref{thm:worstcase} imply item \ref{item:prot-2} of Theorem \ref{thm:rsp-upper}.

\begin{center}
\begin{mdframed}[nobreak=true]
\textbf{Parameters.}
\begin{itemize}
\item A natural number $N$ and a damping parameter $\eta \in (0,1]$
\item A collection $\mathcal{U} = \{U_1,...,U_N\}$ of unitaries on $\mathbb{C}^{d}$
\end{itemize}

\textbf{Protocol.}

Alice and Bob share the standard maximally entangled state $\ket{\Phi_+}^{AB}$, where $A$ and $B$ are both of dimension $d$.
\begin{enumerate}
  \item Given input $P$, Alice iterates for $i=1,\ldots,N$:
  \begin{enumerate}[label=\arabic{enumi}.\arabic*]
    \item 
    She measures register $A$ of the state according to the POVM with Kraus operators given by $\{\sqrt{\eta} Q_i, \sqrt{I - \eta Q_i}\}$, where $Q_i = U_i\overline{P}U_i^\dagger$.
    \item If the first outcome was obtained, she sends $i$ to Bob and exits the loop. If the second outcome was obtained and $i=N$, Alice sends $N+1$ to Bob. 
  \end{enumerate}
  \item If Bob receives $i \in [N]$ from Alice, he applies the unitary $U_i^T$ to $B$, and makes $B$ into the maximally mixed state otherwise. The register $B$ then contains the approximation of $P/k$ that is produced by the protocol. 
  \end{enumerate}
\end{mdframed}
\captionsetup{hypcap=false}
\captionof{figure}{Second average-case protocol for RSP of flat states}
\label{fig:prot3}
\end{center}

To analyze our new protocol, let us introduce notation. For $i \in \{0,1,...,N\}$, we will write $\ket{\psi_i}_{AB}$ to denote the shared state between Alice and Bob after $i$ failed iterations, and we will write $p_i$ to denote the probability that iteration $i$ succeeds given that iterations $1,...,i-1$ failed. For $i \in \{1,...,N+1\}$, we will write $\ket{\phi_i}$ to denote the shared state between Alice and Bob given that iteration $i$ succeeded, but before Bob applied his unitary correction. Also, we will write 
\[\rho_i =\Tr_A (I \otimes U_i^T) \ket{\phi_i} \bra{\phi_i} (I \otimes \overline{U_i})), \]
and we will write $\ket{\tilde{\phi}_i} = \sqrt{\frac{d}{k}}(Q_i \otimes I) \ket{\Phi_+}$. Note that, using Fact \ref{fact:maximallyentangled}, 
\[\Tr_A (I \otimes U_i^T) \ket{\tilde{\phi}_i} \bra{\tilde{\phi}_i} (I \otimes \overline{U_i})) = \frac{P}{k}.\]
For a given choice of unitaries $\mathcal{U}$ and Alice input $P$, we will write $\mathcal{E}(\mathcal{U}, P)$ to mean the expected error of the protocol measured in trace distance. For a given $\eps_a > 0$, we wish to exhibit values of $\eta$ and $N$ such that there exists a choice of unitaries $\mathcal{U}$ with the property that:
\[\mathbb{E}_P[\mathcal{E}(\mathcal{U}, P)] \leq \eps_a.\]
Actually, it will be more convenient for us to work with the expected fidelity $\mathcal{F}(\mathcal{U}, P)$ of the protocol instead. Note that since the function $\sqrt{1-x}$ is concave on $[0,1]$, Jensen's inequality together with the Fuchs-van de Graaf inequality (item \ref{item:fvdg} of Fact \ref{fac:distanceproperties}) imply that:
\[\mathbb{E}_P[\mathcal{E}(\mathcal{U}, P)] \leq \sqrt{1-\mathbb{E}_P[\mathcal{F}(\mathcal{U}, P)]}.\]
Hence, it suffices to show the existence of the unitaries $\mathcal{U}$ such that
\[\mathbb{E}_P(\mathcal{F}(\mathcal{U}, P)) \geq 1 - \eps_a^2.\]
The law of total expectation gives:
\begin{align}
\mathcal{F}(\mathcal{U}, P) &= \sum_{i=1}^N \left(\prod_{j=1}^{i-1} (1-p_j)\right) p_i \mathsf{F}(P/k, \rho_i) + \left(\prod_{i=1}^{N} (1-p_i)\right) \mathsf{F}(P/k, I/d) \nonumber \\
                            &\geq \sum_{i=1}^N \left(\prod_{j=1}^{i-1} (1-p_j)\right) p_i \left|\braket{\phi_i | \tilde{\phi_i}}\right|^2 \label{eq:fidelitybound}
\end{align}
where items \ref{item:pure-f}, \ref{item:unitary-inv} and \ref{item:partial} of Fact \ref{fac:distanceproperties} were used for the inequality.

We will show the following theorem, which proves Theorem \ref{thm:rsp-prot-2}.
\begin{theorem} \label{prop:rejectionsampling}
For every $\varepsilon_a > 0$, such a choice of $\mathcal{U}$ exists if we take $\eta = \frac{\eps_a^2}{2}$ and $N=2\frac{d \ln(1/\varepsilon_a)}{\eps_a^2k}+\Theta(1)$. 
\end{theorem}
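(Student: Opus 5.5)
The plan is to pick $U_1,\ldots,U_N$ i.i.d.\ from the Haar measure, lower bound $\mathbb{E}_{\mathcal{U}}\mathbb{E}_P$ of the right-hand side of \eqref{eq:fidelitybound} by $1-\eps_a^2$, and conclude by the probabilistic method that some fixed $\mathcal{U}$ achieves $\mathbb{E}_P[\mathcal{F}(\mathcal{U},P)]\geq 1-\eps_a^2$. By right-invariance of the Haar measure, for any fixed $P$ the operators $Q_i=U_i\overline{P}U_i^\dagger$ are i.i.d.\ uniformly random rank-$k$ projectors. So it suffices to bound the expectation over i.i.d.\ uniform $Q_1,\ldots,Q_N$. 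The only property of the $Q_i$ we will use is $\mathbb{E}[Q_i]=\frac{k}{d}I$, which also gives $\mathbb{E}\sqrt{I-\eta Q_i}=aI$ with $a=1-\frac{k}{d}(1-\sqrt{1-\eta})$, since $\sqrt{I-\eta Q}=I-(1-\sqrt{1-\eta})Q$ for a projector $Q$.

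\textbf{Step 1: rewrite each summand without normalizations.} Let $A_{i-1}=\sqrt{I-\eta Q_{i-1}}\cdots\sqrt{I-\eta Q_1}$. Then the unnormalized state after $i-1$ failures is $(A_{i-1}\otimes I)\ket{\Phi_+}$, with squared norm $\prod_{j<i}(1-p_j)$. The unnormalized success state is $\sqrt{\eta}(Q_iA_{i-1}\otimes I)\ket{\Phi_+}$. Hence
\[\Big(\prod_{j<i}(1-p_j)\Big)p_i\,|\braket{\phi_i|\tilde\phi_i}|^2=\eta\frac{d}{k}\big|\bra{\Phi_+}(Q_iA_{i-1}\otimes I)\ket{\Phi_+}\big|^2=\frac{\eta}{dk}\,|\Tr(Q_iA_{i-1})|^2,\]
using $Q_i^2=Q_i$ and $\bra{\Phi_+}(X\otimes I)\ket{\Phi_+}=\Tr(X)/d$.

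\textbf{Step 2: average by two applications of Jensen ($|\cdot|^2$ is convex).} Conditioning on $Q_1,\ldots,Q_{i-1}$ and averaging over the independent $Q_i$ gives $\mathbb{E}|\Tr(Q_iA_{i-1})|^2\geq \frac{k^2}{d^2}\,\mathbb{E}|\Tr A_{i-1}|^2$. Next, by independence the expectation of the ordered product factorizes as $\mathbb{E}A_{i-1}=a^{i-1}I$, so $\mathbb{E}|\Tr A_{i-1}|^2\geq d^2a^{2(i-1)}$. Summing over $i$ gives
\[\mathbb{E}\,\mathcal{F}\geq \eta\frac{k}{d}\sum_{i=1}^{N}a^{2(i-1)}=\eta\frac{k}{d}\cdot\frac{1-a^{2N}}{1-a^2}.\]
This factorization of the ordered product is the point that makes the noncommuting sequential measurements tractable. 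It is the main place where care is needed, since everything must be done at the level of amplitudes $\Tr A_{i-1}$ rather than fidelities, so that the ``damage'' term is quadratic in $\eta$.

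\textbf{Step 3: elementary estimates.} Write $s=1-\sqrt{1-\eta}=\frac{\eta}{2}+\frac{\eta^2}{8}+O(\eta^3)$. Then $1-a^2=2\frac{k}{d}s-\frac{k^2}{d^2}s^2\leq \frac{k}{d}\,\eta\,(1+\frac{\eta}{4}+O(\eta^2))$, so $\eta\frac{k}{d}/(1-a^2)\geq 1-\frac{\eta}{4}-O(\eta^2)$. Also $1-a^2\geq \frac{k}{d}s\geq \frac{k\eta}{2d}$, so $a^{2N}\leq \exp(-N\frac{k\eta}{2d})$. With $\eta=\eps_a^2/2$, the first loss is about $\eps_a^2/8$. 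Choosing $N$ of order $\frac{d}{k\eta}\ln\frac{1}{\eps_a}$, i.e.\ $N=\Theta(\frac{d\ln(1/\eps_a)}{k\eps_a^2})$, makes $a^{2N}$ a small multiple of $\eps_a^2$. I expect that tightening the lower bound on $1-a^2$ to $\frac{k}{d}\eta(1-O(\eta))$ (valid since $k\leq d$) yields exactly the stated constant $N=2\frac{d\ln(1/\eps_a)}{\eps_a^2 k}+\Theta(1)$; the precise constant only affects the $O(1)$ in the communication $\log(N+1)$. Together these give $\mathbb{E}_{\mathcal{U}}\mathbb{E}_P\mathcal{F}\geq 1-\eps_a^2$, and fixing a good $\mathcal{U}$ finishes the proof.
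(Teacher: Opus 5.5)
Your argument is correct, and it reaches the bound by a genuinely different and shorter route than the paper. The paper proves two general fidelity inequalities stated for arbitrary states $\rho,\sigma$. Lemma~\ref{lem:damagecontrol} says a failed damped step $I-cQ$ costs at most $c^2\gamma$ in expected weighted fidelity. Lemma~\ref{lem:postselection} says post-selecting on $\sqrt{\eta}Q$ preserves weighted fidelity in expectation. The paper then iterates a recursion for $\mathbb{E}\bigl[\prod_{j\le i}(1-p_j)\,|\langle\psi_i|\Phi_+\rangle|^2\bigr]$. This makes each term of \eqref{eq:fidelitybound} at least $\delta(1-\delta)^{i-1}\bigl(1-(i-1)\frac{c^2\delta}{\eta(1-\delta)}\bigr)$ with $\delta=\eta k/d$, and summing gives $1-(1-\delta)^N-c^2/\eta$.

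You instead use that every relevant quantity is an amplitude against $\ket{\Phi_+}$. Each term is then exactly $\frac{\eta}{dk}|\Tr(Q_iA_{i-1})|^2$, which is multilinear in the independent $Q_j$, so Jensen plus $\mathbb{E}A_{i-1}=a^{i-1}I$ closes the argument with no recursion. Your $s$ is the paper's $c$, and the exact identity $2s-s^2=\eta$ gives $1-a^2=\frac{k}{d}\bigl(2s-s^2\frac{k}{d}\bigr)\in\bigl[\delta,\,2s\frac{k}{d}\bigr]$. Hence your bound is at least $\frac{1+\sqrt{1-\eta}}{2}\bigl(1-(1-\delta)^N\bigr)\ge 1-(1-\delta)^N-\frac{s}{2}$. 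This is the paper's bound, even marginally sharper, since $s/2\le s^2/\eta$. Like the paper's, it needs only independence and $\mathbb{E}[Q_i]=\frac{k}{d}I$, so a $1$-design suffices. The paper's route buys reusable lemmas that do not rely on this specific amplitude structure; yours buys transparency.

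One correction to Step 3: tightening your estimates does not yield the leading constant $2$. The bound $1-a^2\ge\delta$ is essentially tight; for $k\ll d$ one has $1-a^2\approx\delta(1+\eta/4)$, so $a^{2N}\approx e^{-N\delta}$. With $N=\frac{2d\ln(1/\eps_a)}{\eps_a^2 k}$ you get $N\delta=\ln(1/\eps_a)$, so $a^{2N}\approx\eps_a$. Your bound then only gives $\mathbb{E}\mathcal{F}\gtrsim 1-\eps_a$, not $1-\eps_a^2$. To reach $1-\eps_a^2$ you need $N\delta\ge\ln(C/\eps_a^2)$, i.e.\ a leading term of about $\frac{4d\ln(1/\eps_a)}{\eps_a^2 k}$. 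The paper's own proof takes $N=\lceil \ln(2/\eps_a^2)/\delta\rceil$ for exactly this reason, so the constant in the displayed statement is a slip there as well. As you say, this only changes the $O(1)$ in the communication $\log(N+1)$.
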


 In order to prove Theorem \ref{prop:rejectionsampling}, we will need two preliminary lemmas. We will actually only need the versions of the lemmas with pure states, but we state the mixed state cases for generality.
\begin{lemma} \label{lem:damagecontrol}
    Let $\rho$ and $\sigma$ be two states, and let $P$ be a random projector (according to some distribution, not necessarily Haar-random in $G(d,k)$) with $\mathbb{E}[P] = \gamma I$. For $c \in (0,1)$, let:
    \begin{eqnarray*}
    M & = & I - cP, \\
    \beta & = & \tr(M^2 \rho), \\
    \rho' & = & \frac{1}{\beta} M \rho M.
    \end{eqnarray*}
    Then,
    \[\mathbb{E}[\beta \cdot\F(\rho', \sigma)] \geq \mathbb{E}[\beta] \cdot\F(\rho, \sigma) - c^2 \gamma.\]
\end{lemma}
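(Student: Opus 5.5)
The plan is to first prove the pure-state case by direct expansion, and then reduce the mixed case to it via Uhlmann's theorem.

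\textbf{Homogeneous form.} Extend $\F$ to unnormalized positive operators by the same formula $\F(X,Y)=\left[\Tr\sqrt{\sqrt{X}Y\sqrt{X}}\right]^2$. This is homogeneous of degree one in each argument, so
\[\beta\cdot\F(\rho',\sigma)=\F(M\rho M,\sigma).\]
It therefore suffices to show $\mathbb{E}[\F(M\rho M,\sigma)]\geq \mathbb{E}[\beta]\,\F(\rho,\sigma)-c^2\gamma$.

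\textbf{Pure case.} Let $\rho=\state{\psi}$ and $\sigma=\state{\phi}$, and write $a=\braket{\phi|\psi}$ and $b=\bra{\phi}P\ket{\psi}$. Then
\[\F(M\rho M,\sigma)=|\bra{\phi}M\ket{\psi}|^2=|a-cb|^2=|a|^2-2c\,\mathrm{Re}(\bar a b)+c^2|b|^2.\]
Take expectations using $\mathbb{E}[P]=\gamma I$, so that $\mathbb{E}[b]=\gamma a$:
\[\mathbb{E}[\F(M\rho M,\sigma)]=|a|^2(1-2c\gamma)+c^2\,\mathbb{E}|b|^2.\]
Since $P^2=P$, we have $M^2=I-(2c-c^2)P$. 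Hence
\[\mathbb{E}[\beta]=1-2c\gamma+c^2\gamma.\]
Subtracting,
\[\mathbb{E}[\F(M\rho M,\sigma)]-\mathbb{E}[\beta]\,|a|^2=c^2\left(\mathbb{E}|b|^2-\gamma|a|^2\right)\geq -c^2\gamma|a|^2\geq -c^2\gamma.\]
This uses only $|b|^2\geq 0$ and $|a|\le 1$.

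\textbf{Mixed case.} By Uhlmann's theorem, pick purifications $\ket{\psi}$ of $\rho$ and $\ket{\phi}$ of $\sigma$ on $\mathbb{C}^d\otimes R$ with $\F(\rho,\sigma)=|\braket{\phi|\psi}|^2$. These are chosen once, independently of the random $P$.

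\begin{itemize}
\item For each realization of $P$, the vector $(M\otimes I)\ket{\psi}$ is a purification of $M\rho M$ (unnormalized). Uhlmann's theorem then gives $\F(M\rho M,\sigma)\geq |\bra{\phi}(M\otimes I)\ket{\psi}|^2$.
\item The operator $P\otimes I$ is again a random projector with $\mathbb{E}[P\otimes I]=\gamma I$, and $M\otimes I=I-c(P\otimes I)$.
\item $\beta=\Tr(M^2\rho)=\bra{\psi}(M\otimes I)^2\ket{\psi}$.
\end{itemize}

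So the pure-state computation applies verbatim on the enlarged space and yields the claimed inequality. The only delicate point is that the Uhlmann purifications must be fixed before averaging over $P$; this is legitimate because the lower bound $\F(M\rho M,\sigma)\geq|\bra{\phi}(M\otimes I)\ket{\psi}|^2$ holds pointwise for every realization of $P$.
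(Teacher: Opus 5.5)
Your proof is correct and follows essentially the same route as the paper. Both arguments fix Uhlmann purifications once and expand $|\bra{\phi}(I-c\,P\otimes I)\ket{\psi}|^2$. Both then use $\mathbb{E}[P]=\gamma I$ to cancel the linear-in-$c$ cross term against the corresponding part of $\mathbb{E}[\beta]$, and bound the leftover $c^2$ term below by $-c^2\gamma$.

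The differences are only cosmetic. You work with the unnormalized $M\rho M$ via homogeneity of $\F$ rather than the normalized $\ket{\psi'}$. You also write the cross term as $\mathrm{Re}(\bar a b)$, so you do not need the real-phase choice of $\braket{\psi|\phi}$.
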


\begin{proof}
By Uhlmann's theorem, we can find purifications $\ket{\psi}$ and $\ket{\phi}$ of $\rho$ and $\sigma$, respectively, with $\braket{\psi|\phi}$ real and with $\braket{\psi | \phi}^2 = \F(\rho, \sigma)$. Let
\begin{eqnarray*}
Q & = & P \otimes I, \\
\alpha & = & \tr(P \rho) = \braket{\psi|Q|\psi},
\end{eqnarray*}
so that
\[\beta =  \Tr(M^2 \rho) = \braket{\psi | (M \otimes I) ^2 | \psi}  = 1+ (c^2-2c)\alpha.  \]
Note that $\mathbb{E}[\alpha] = \Tr(\mathbb{E}[P]\rho) = \gamma\Tr(\rho)=\gamma$. 

Now, letting
\[\ket{\psi'} = \frac{1}{\sqrt{\beta}} (M \otimes I) \ket{\psi},\]
we compute:
\begin{align*}
\beta\braket{\psi|\phi}^2 - \beta\left|\braket{\psi'|\phi}\right|^2  &= \beta \braket{\psi|\phi}^2 - \left|\braket{\psi | I-cQ | \phi}\right|^2\\
&= \left(1+(c^2-2c)\alpha\right)\braket{\psi|\phi}^2 - \braket{\psi|\phi}^2 + 2c \braket{\psi|\phi}\text{Re}(\braket{\psi|Q|\phi}) - c^2\left|\braket{\psi|Q|\phi}\right|^2\\
&= 2c \braket{\psi | \phi} (\text{Re}(\braket{\psi|Q|\phi}) - \alpha \braket{\psi | \phi} )+ c^2 \left(\alpha\braket{\psi|\phi}^2 - \left|\braket{\psi|Q|\phi}\right|^2 \right) \\
& \leq 2c \braket{\psi | \phi} (\text{Re}(\braket{\psi|Q|\phi}) - \alpha \braket{\psi | \phi} )+ c^2\alpha\braket{\psi|\phi}^2
\end{align*}
Hence,
\begin{align*}
\mathbb{E}[\beta]\cdot\left|\braket{\psi|\phi}\right|^2 - \mathbb{E}[\beta\cdot\left|\braket{\psi'|\phi}\right|^2] &= \mathbb{E}[\beta\cdot\left|\braket{\psi|\phi}\right|^2 - \beta\cdot\left|\braket{\psi'|\phi}\right|^2]\\
&\leq 2c \braket{\psi | \phi} \text{Re}(\braket{\psi|\mathbb{E}[Q - \alpha I]|\phi}) + c^2 \mathbb{E}[\alpha]\braket{\psi|\phi}^2 \\
&= 2c \braket{\psi | \phi} \text{Re}(\braket{\psi|\gamma I - \gamma I|\phi}) + c^2 \gamma \braket{\psi|\phi}^2\\
&\leq c^2 \gamma
\end{align*}
where the fact that $\braket{\psi|\phi}$ is real was used for the second equality and the Cauchy-Schwarz inequality was used for the second inequality. Hence,
\begin{align*}
\mathbb{E}[\beta] \F(\rho,\sigma) - c^2 \gamma &= 
\mathbb{E}[\beta] \left|\braket{\psi|\phi}\right|^2 - c^2 \gamma \\
& \leq \mathbb{E}[\beta\cdot\left|\braket{\psi'|\phi}\right|^2] \\
& \leq \mathbb{E}[\beta\cdot\F(\rho', \sigma)] \\
\end{align*}
where we used the fact that $\psi'$ is a purification of $\rho'$ in the last line. This completes the proof. 
\end{proof}

\begin{lemma} \label{lem:postselection}
Let $\rho$ and $\sigma$ be arbitrary states, and let $P$ be a random projector with $\mathbb{E}[P] = \gamma I$. For $\eta \in (0,1]$, consider the random variables:
\begin{eqnarray*}
p & = & \eta \tr(P \rho) \\
\rho' & = & \frac{\eta}{p} P \rho P \\
q & = & \eta \tr(P \sigma) \\
\sigma' & = & \frac{\eta}{q} P \sigma P.
\end{eqnarray*}
Then
\[\mathbb{E}[p\cdot \F(\rho', \sigma')] \geq \eta \gamma \cdot\F(\rho,\sigma) \]
\end{lemma}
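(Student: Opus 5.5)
The plan is to reduce everything to a statement about pure-state overlaps via Uhlmann's theorem, as in the proof of Lemma \ref{lem:damagecontrol}. Then a single Cauchy--Schwarz inequality over the randomness of $P$ finishes the argument. Concretely, by Uhlmann's theorem I fix purifications $\ket{\psi}$ of $\rho$ and $\ket{\phi}$ of $\sigma$ with $\braket{\psi|\phi}$ real and $\braket{\psi|\phi}^2=\F(\rho,\sigma)$. I set $Q=P\otimes I$ and abbreviate
\[ x = \braket{\psi|Q|\phi}, \qquad a=\tr(P\rho)=\braket{\psi|Q|\psi}, \qquad b=\tr(P\sigma)=\braket{\phi|Q|\phi}, \]
so that $p=\eta a$ and $q=\eta b$.

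First I will produce explicit purifications of the post-measurement states. The vector $\frac{1}{\sqrt{a}}Q\ket{\psi}$ purifies $\rho'=\frac{1}{a}P\rho P$, and similarly $\frac{1}{\sqrt{b}}Q\ket{\phi}$ purifies $\sigma'$. Monotonicity of fidelity under partial trace (item \ref{item:partial} of Fact \ref{fac:distanceproperties}) then gives
\[ \F(\rho',\sigma') \geq \frac{|x|^2}{ab}, \qquad\text{hence}\qquad p\,\F(\rho',\sigma') \geq \eta\,\frac{|x|^2}{b}. \]
It is pleasant that the factor $a$ cancels here. The degenerate cases need a convention. If $a=0$ then $p=0$ and the term contributes nothing. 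If $b=0$ then $Q\ket{\phi}=0$, so $x=0$ and the bound reads $p\F\ge 0$, which is trivially true. Either way we may interpret $|x|^2/b$ as $0$ on these events.

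It therefore suffices to show $\mathbb{E}[|x|^2/b]\geq \gamma\,\F(\rho,\sigma)$. I will use two facts.
\begin{itemize}
\item By linearity and $\mathbb{E}[P]=\gamma I$, we get $\mathbb{E}[x]=\gamma\braket{\psi|\phi}$ and $\mathbb{E}[b]=\gamma$. Consequently $\mathbb{E}[|x|]\ge|\mathbb{E}[x]|=\gamma\sqrt{\F(\rho,\sigma)}$.
\item Cauchy--Schwarz in expectation form gives
\[ \mathbb{E}[|x|]^2 = \mathbb{E}\!\left[\frac{|x|}{\sqrt b}\sqrt b\right]^2 \le \mathbb{E}\!\left[\frac{|x|^2}{b}\right]\mathbb{E}[b] = \gamma\,\mathbb{E}\!\left[\frac{|x|^2}{b}\right]. \]
\end{itemize}
Combining the two yields $\mathbb{E}[|x|^2/b]\ge \gamma^2\F(\rho,\sigma)/\gamma=\gamma\F(\rho,\sigma)$. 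Multiplying by $\eta$ gives the claim.

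The only delicate point is choosing how to weight the Cauchy--Schwarz step. Dividing by $b$, the success probability for $\sigma$, rather than by $a$ is what makes the $\rho$-normalization cancel exactly. I also need to check that the events $b=0$ cause no trouble. As argued above, on those events $x=0$, so they contribute zero to $\mathbb{E}[|x|]$ and only help the inequality. Everything else is routine bookkeeping.
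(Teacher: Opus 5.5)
Your proof is correct, and its first step is the same as the paper's. Both pass to Uhlmann purifications and observe that $p\,|\braket{\psi'|\phi'}|^2 = \eta|x|^2/b$, with the $\rho$-normalization cancelling. The two routes differ in how $\mathbb{E}[|x|^2/b]$ is bounded.

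The paper writes $\ket{\psi}=\alpha\ket{\phi}+\beta\ket{\phi^\bot}$ and expands $p$ and $\eta^2|x|^2/q$ in these coordinates. This gives the exact identity $p - p|\braket{\psi'|\phi'}|^2 = \eta\beta^2\bigl(\braket{\phi^\bot|Q|\phi^\bot} - \tfrac{\eta}{q}|\braket{\phi|Q|\phi^\bot}|^2\bigr)$. Discarding the subtracted term leaves a pointwise lower bound on $p|\braket{\psi'|\phi'}|^2$ that is affine in $Q$. After that, only linearity of expectation and $\mathbb{E}[Q]=\gamma I$ are needed.

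Unwound, that affine bound is exactly $|x|^2/b \geq 2\alpha\,\mathrm{Re}(x) - \alpha^2 b$, i.e.\ $|x-\alpha b|^2 \geq 0$. This is a supporting hyperplane of the jointly convex function $(x,b)\mapsto|x|^2/b$. Your chain $\mathbb{E}[|x|^2/b]\geq \mathbb{E}[|x|]^2/\mathbb{E}[b] \geq |\mathbb{E}[x]|^2/\mathbb{E}[b]$ is Jensen's inequality for that same function, proved via Cauchy--Schwarz instead.

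Your version is shorter and coordinate-free, and it isolates the only two inputs, $\mathbb{E}[x]=\gamma\braket{\psi|\phi}$ and $\mathbb{E}[b]=\gamma$. The paper's version yields a deterministic per-$P$ inequality before averaging, parallels the proof of Lemma \ref{lem:damagecontrol}, and never divides by $\gamma$.

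Two cosmetic points. First, dividing by $\gamma$ requires $\gamma>0$; if $\gamma=0$ then $P=0$ almost surely and the claim is trivial. Second, on the event $a=0<b$ you need no convention: $a=\|Q\ket{\psi}\|^2$, so $Q\ket{\psi}=0$ and $x=0$ genuinely. More generally, $|x|^2\leq ab\leq 1$, so all the expectations you take are finite.
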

\begin{proof}
Let $\ket{\psi}, \ket{\phi}$ be purifications of $\rho$ and $\sigma$, respectively, with $\braket{\psi | \phi}$ real and with $\braket{\psi|\phi}^2 = \F(\rho,\sigma)$. 
Letting $Q = P \otimes I$, and
\[\ket{\psi} = \alpha \ket{\phi} + \beta \ket{\phi^\bot} \]
with $\beta$ real, we compute:
\begin{equation}\label{eq:p}
p = \eta\braket{\psi | Q | \psi} = \alpha^2q  + \eta\beta^2 \braket{\phi^\bot | Q | \phi^\bot} + 2 \eta\alpha \beta   \text{ Re}(\braket{\phi | Q | \phi^\bot}).
\end{equation}
Moreover,
\begin{equation}\label{eq:eta-Q}
\eta^2\left|\braket{\psi|Q|\phi}\right|^2 = \left|\alpha q  + \eta\beta \braket{\phi | Q | \phi^\bot}\right|^2 = \alpha^2 q^2 + 2\eta\alpha \beta q \text{ Re}(\braket{\phi | Q | \phi^\bot}) +  \eta^2\beta^2 \left|\braket{\phi | Q | \phi^\bot}\right|^2.
\end{equation}
Now setting
\[\ket{\psi'} = \sqrt{\frac{\eta}{p}} Q\ket{\psi}\]
\[\ket{\phi'} = \sqrt{\frac{\eta}{q}} Q\ket{\phi}\]
and using equations \eqref{eq:p} and \eqref{eq:eta-Q},
\begin{align*}
    p- p\left|\braket{\psi'|\phi'}\right|^2 &= p - \frac{\eta^2}{q} |\braket{\psi|Q|\phi}|^2\\
    & = \alpha^2q  + \eta\beta^2 \braket{\phi^\bot | Q | \phi^\bot} + 2 \eta\alpha \beta\text{ Re}(\braket{\phi | Q | \phi^\bot}) \\
    & \quad - \alpha^2 q - 2\eta\alpha \beta \text{ Re}(\braket{\phi | Q | \phi^\bot}) - \frac{\eta^2\beta^2}{q} |\braket{\phi | Q | \phi^\bot}|^2\\
    &= \eta\beta^2\left(\braket{\phi^\bot | Q | \phi^\bot} - \frac{\eta}{q}\left|\braket{\phi | Q | \phi^\bot}\right|^2\right)\\
    &\leq \eta\beta^2\braket{\phi^\bot | Q | \phi^\bot} \\
    & = \eta(1-\braket{\psi|\phi}^2)\cdot \braket{\phi^\bot | Q | \phi^\bot}
\end{align*}
where in the last line we have used the fact that $\alpha=\braket{\psi|\phi}$, and $\alpha^2+\beta^2=1$.

Now, note that $\mathbb{E}[p]=\eta\gamma$. Hence,
\begin{align*}
 \eta\gamma - \mathbb{E}[p\cdot\left|\braket{\psi'|\phi'}\right|^2] &= \mathbb{E}[p- p\cdot\left|\braket{\psi'|\phi'}\right|^2] \\
 & \leq \eta(1-\braket{\psi|\phi}^2)\cdot[\braket{\phi^\bot | \mathbb{E}[Q] | \phi^\bot}]\\
 & = \eta(1 - \braket{\psi | \phi}^2)\cdot[\braket{\phi^\bot | \gamma I | \phi^\bot}] \\
 & = \eta\gamma(1 - \braket{\psi | \phi}^2).
\end{align*}
This implies:
\[\eta\gamma \F(\rho,\sigma) = \eta\gamma \braket{\psi | \phi}^2 \leq \mathbb{E}[p\left|\braket{\psi'|\phi'}\right|^2] \leq \mathbb{E}[p\cdot\F(\rho', \sigma')],  \]
since $\psi'$ and $\phi'$ are purifications of $\rho'$ and $\sigma$, respectively. 
\end{proof}

We can now finish the correctness proof for our protocol.
\begin{proof}[Proof of Theorem \ref{prop:rejectionsampling}]
Let $\mathcal{U}=\{U_1\ldots U_N\}$ be $N$ i.i.d. samples from any exact $1$-design on $U(d)$ (for example, the Pauli group forms such a design when $d$ is a power of $2$). By the definition of a $1$-design, for any $P$, $E_{U_i}[U_iPU^\dagger_i] = \frac{k}{d}I$.

It is enough to prove that with the choices of $\eta$ and $N$ in the theorem statement,
\[\mathbb{E}_{\mathcal{U}, P}[\mathcal{F}(\mathcal{U},  P)] \geq 1 - \eps_a^2.\]
By the probabilistic method, we can then select $\mathcal{U}$ for which the $\mathbb{E}_P[\mathcal{F}(\mathcal{U},P)] \geq  1 - \eps_a^2$. We now turn to lower bounding this expectation. In fact, we can lower bound $\mathbb{E}_{\mathcal{U}}[\mathcal{F}(\mathcal{U},  P)]$ for any fixed $P$.
Let $\delta= \frac{\eta k}{d}$. Note that $\mathbb{E}[p_i]=\delta$ for all $i$ because $\mathbb{E}[\eta Q_i^\dagger Q_i] = \frac{\eta k}{d} I$. We will now lower bound the expected value of a given term in the expression \eqref{eq:fidelitybound}. The projectors $U_i\overline{P}U_i^\dagger$ satisfy the conditions of Lemmas \ref{lem:damagecontrol} and \ref{lem:postselection} with $\gamma=\frac{k}{d}$, and $\beta = 1-\delta$. Therefore, by applying Lemma \ref{lem:postselection}, we have:
\begin{align*}
\mathop{\mathbb{E}}_{U_1\ldots U_i}\left[\left(\prod_{j=1}^{i-1} (1-p_j)\right) p_i \left|\braket{\phi_i| \tilde{\phi_i}}\right|^2\right] &= \mathop{\mathbb{E}}_{U_1\ldots U_{i-1}}\left[\mathop{\mathbb{E}}_{U_i}\left[\left(\prod_{j=1}^{i-1} (1-p_j)\right) p_i \left|\braket{\phi_i | \tilde{\phi_i}}\right|^2 \middle| U_1,...,U_{i-1}\right]\right]\\
&= \mathop{\mathbb{E}}_{U_1\ldots U_{i-1}}\left[ \left(\prod_{j=1}^{i-1} (1-p_j)\right) \mathop{\mathbb{E}}_{U_i}\left[ p_i \left|\braket{\phi_i | \tilde{\phi_i}}\right|^2 \middle| U_1,...,U_{i-1}\right]\right]\\
& \geq \mathbb{E}_{U_1\ldots U_{i-1}}\left[ \left(\prod_{j=1}^{i-1} (1-p_j)\right) \eta\cdot\frac{k}{d}\left|\braket{\psi_{i-1} | \Phi_+}\right|^2\right] \\
& = \delta \cdot\mathbb{E}\left[ \left(\prod_{j=1}^{i-1} (1-p_j)\right) |\braket{\psi_{i-1} | \Phi_+}|^2\right].
\end{align*}
Also, note that $\sqrt{I-\eta Q_i} = I - c Q_i$ where $c = 1 - \sqrt{1-\eta}$, as can be verified by squaring the right-hand-side. Applying Lemma \ref{lem:damagecontrol}, we have, for all $j$:
\[\mathop{\mathbb{E}}_{U_j}[(1-p_j) \left[\left|\braket{\psi_j | \Phi_+}\right|^2 \middle| U_1,...U_{j-1}\right] \geq (1-\delta)|\braket{\psi_{j-1} | \Phi_+}|^2 - \frac{c^2 k}{d}.\]
Applying the above repeatedly and using that $\ket{\psi_0} = \ket{\Phi_+}$, we get:
\begin{align*}
\mathop{\mathbb{E}}_{U_1\ldots U_i}\left[\left(\prod_{j=1}^{i-1} (1-p_j)\right) p_i |\braket{\phi_i | \tilde{\phi_i}}|^2\right] &\geq \delta(1-\delta)^{i-1}\left(1 - (i-1) \frac{c^2 \delta}{\eta(1-\delta)}\right).
\end{align*}

Substituting the above into equation \eqref{eq:fidelitybound} yields:
\begin{align}
\mathbb{E}_{\mathcal{U}}[\mathcal{F}(\mathcal{U},  P)] &\geq \sum_{i=1}^N \delta(1-\delta)^{i-1} - \frac{c^2\delta^2}{\eta (1-\delta)} \sum_{i=1}^N (i-1) (1-\delta)^{i-1} \nonumber \\
& \geq \delta\sum_{i=1}^N(1-\delta)^{i-1} - \frac{c^2\delta^2}{\eta (1-\delta)} \sum_{i=0}^\infty i (1-\delta)^i. \label{eq:series}
 \end{align}
The geometric series $\sum_{i=1}^N(1-\delta)^{i-1}$ is equal to $\dfrac{1-(1-\delta)^N}{\delta}$. The series $\sum_{i=0}^\infty i(1-\delta)^{i-1}$ can be computed by differentiating the geometric series $\sum_{i=0}^\infty (1-\delta)^i$, and its value is $\dfrac{1-\delta}{\delta^2}$. Moreover,
\begin{align*}
c = 1 - \sqrt{1-\eta} = \frac{\eta}{1+\sqrt{1-\eta}} \leq \eta.
\end{align*}
Putting these into \eqref{eq:series} we get,
\begin{align*}
\mathbb{E}_{\mathcal{U}}[\mathcal{F}(\mathcal{U},  P)] & \geq  \delta\cdot\frac{1 - (1-\delta)^N}{\delta} - \frac{c^2\delta^2}{\eta(1-\delta)}\cdot\frac{1-\delta}{\delta^2} \\
& \geq  1 - (1-\delta)^N - \frac{c^2}{\eta} \\
& \geq 1 - (1-\delta)^N - \eta.
\end{align*}
Therefore, taking:
\begin{eqnarray*}
\eta & = & \frac{\eps_a^2}{2} \\
N & = & \left\lceil \frac{\ln(2/\eps_a^2)}{\delta} \right \rceil = \left\lceil \frac{2 d \ln(2/\eps_a^2)}{\eps_a^2 k} \right \rceil
\end{eqnarray*}
makes the expectation at least $1- \eps_a^2$, as desired.
\end{proof}

\section{Resource lower bounds}

In the previous section, we described protocols which achieved constant-error RSP of projectors in $G(d,k)$ with $\approx \log \frac{d}{k}$ bits of communication from Alice to Bob and $\approx \log d$ EPR pairs. We now turn to showing that these are near-optimal. While proving the communication lower bound is simple, proving the entanglement lower bound is a lot more involved and takes up the bulk of the section.

We note that no lower bounds can be shown for protocols with $\eps_r \geq 1 - \frac{k}{d}$, as this can be achieved with no communication or entanglement at all by having Bob output the maximally mixed state. Provided that $\eps_r$ is smaller than this, we can prove a communication lower bound, which will be a simple consequence of the following result.
\begin{fact}[See e.g. Theorem 4.1 of \cite{nosignalling}] \label{thm:nonsignalling}
For $n \geq 1$: consider the following task. Alice is given a uniformly random string $x \in \{0,1\}^n$. Alice and Bob, who initially start out with an arbitrary maximally entangled state and have access to arbitrary shared randomness, run a two-way classical communication protocol. At the end of the protocol, Bob produces a string $x' \in \{0,1\}^n$. If $m$ is an upper bound on the number of bits that were sent from Alice to Bob during the course of the protocol, we have
\[\Pr[x=x'] \leq 2^{m-n}\]
\end{fact}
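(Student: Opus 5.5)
The plan is a \emph{guessing} argument: turn any protocol in which Alice sends $m$ bits into one in which Alice sends Bob nothing, while losing at most a factor $2^m$ in success probability, and then use no-signalling.

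\textbf{Normal form.} Absorb the shared randomness into the shared entangled state as a classically correlated register. Pad Alice's messages so that her total communication to Bob is exactly $m$ bits, and fix the (possibly adaptive) schedule of rounds. Write the transcript of Alice-to-Bob messages as a string $t\in\{0,1\}^m$.

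\textbf{Modified protocol $\mathcal{Q}$.} At the start, Bob samples $g\in\{0,1\}^m$ uniformly at random, independently of everything else. He then runs his part of the original protocol, but whenever he would read the next block of Alice's message, he uses the corresponding block of $g$ instead. Bob's own messages to Alice are still sent, so Alice behaves exactly as before. Alice's messages, however, are never read by Bob and can be deleted, so $\mathcal{Q}$ has no communication from Alice to Bob.

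\textbf{Comparing $\mathcal{Q}$ with the original protocol.} Fix $x$ and a transcript $t$. Consider the joint evolution of Alice's and Bob's systems together with Bob's messages.
\begin{itemize}
\item In the original protocol, the event ``transcript equals $t$ and $x'=x$'' is given by a product of local operations: Alice's instruments projected onto outcomes $t$, and Bob's operations conditioned on $t$.
\item In $\mathcal{Q}$, conditioned on $g=t$, the event ``Alice's actual messages equal $t$ and $x'=x$'' is given by exactly the same product of operators.
\end{itemize}
Since $g$ is independent and uniform, this gives
\[
\Pr_{\mathcal{Q}}[x'=x] \;\geq\; \sum_t 2^{-m}\Pr_{\mathrm{orig}}[\text{transcript}=t,\ x'=x] \;=\; 2^{-m}\Pr_{\mathrm{orig}}[x'=x].
\]

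\textbf{No-signalling step.} In $\mathcal{Q}$, Bob's final state and output are obtained from his half of the shared state, his private randomness $g$, and his own local operations; he receives nothing from Alice. Alice's choices, including everything that depends on $x$, are local operations on her side. These cannot change Bob's reduced state: any Alice-side instrument, summed over outcomes, is trace preserving on her side. Hence the distribution of $x'$ does not depend on $x$, and since $x$ is uniform,
\[
\Pr_{\mathcal{Q}}[x'=x]=2^{-n}.
\]
Combining with the previous display gives $\Pr[x=x']\le 2^{m-n}$.

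\textbf{Main obstacle.} The delicate step is the transcript-by-transcript comparison in the two-way setting. Bob's later messages to Alice, and hence Alice's later behaviour, depend on the (guessed) earlier messages. One must check that, on the event $g=t$ together with Alice's actual outcomes being $t$, the operator products coincide round by round. I would verify this by induction on rounds, writing each party's operations as instruments indexed by the messages received so far.
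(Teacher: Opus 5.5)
The paper does not prove this fact; it only cites it, so there is no in-paper argument to compare against. Your reduction is correct and self-contained: Bob guesses Alice's transcript, and then no-signalling finishes the job.

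\textbf{The step you flag as delicate.} It needs no real induction. Absorb the shared randomness into $\rho$. Write Alice's round-$r$ instrument as $\{\mathcal{A}^{(r)}_{a_r\mid x,a_{<r},b_{<r}}\}$ and Bob's as $\{\mathcal{B}^{(r)}_{b_r\mid a_{\le r},b_{<r}}\}$, followed by his output measurement. The two parties act on different tensor factors, so their operations commute. Hence the probability of a full record $(a,b,x')$ in the original protocol is $\Tr[(\mathcal{A}_{a,b,x}\otimes\mathcal{B}_{x',b\mid a})(\rho)]$, where each map is the composition of that party's instrument elements along the record. In $\mathcal{Q}$ the probability of $(g,a,b,x')$ is $2^{-m}\Tr[(\mathcal{A}_{a,b,x}\otimes\mathcal{B}_{x',b\mid g})(\rho)]$. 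Restricting to $g=a$ gives your inequality.

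\textbf{The no-signalling step.} State explicitly that Alice's instruments are adaptive. For each fixed $b$, you have $\sum_a\Tr_A\circ\mathcal{A}_{a,b,x}=\Tr_A$. To see this, sum out $a_R$ first (for fixed $a_{<R},b$ this is a channel), then $a_{R-1}$, and so on. Hence $\Pr_{\mathcal{Q}}[x'\mid x]=\sum_{g,b}2^{-m}\Tr[(\mathrm{id}\otimes\mathcal{B}_{x',b\mid g})(\rho)]$, which is independent of $x$.

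\textbf{Padding.} It is harmless but unnecessary. Bob can guess each of Alice's bits when it is due, and along any branch he guesses at most $m$ of them, so the guessing factor is still at least $2^{-m}$.

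\textbf{What your route buys.} It proves slightly more than the statement as quoted, in exactly the directions the paper needs.
\begin{itemize}
\item It never uses that the shared state is maximally entangled, or even pure. This matches the claim in Theorem~\ref{thm:lowerbound}, and the remark after Theorem~\ref{thm:comm-lowerbound}, that the communication bound holds for arbitrary (possibly mixed) shared states.
\item The no-signalling step gives $\Pr[x=x']\le 2^m/|\mathcal{X}|$ for $x$ uniform on any finite set $\mathcal{X}$. This is the form actually used in the proof of Theorem~\ref{thm:comm-lowerbound}, where $t$ is uniform on $[\lfloor d/k\rfloor]$ rather than on $\{0,1\}^n$. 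It yields the stated bound because $2^{\lfloor\log(d/k)\rfloor}\le\lfloor d/k\rfloor$.
\end{itemize}
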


We can then show our lower bound for the communication cost of $(d,k)$-RSP protocols.
\begin{theorem}\label{thm:comm-lowerbound}
    A $(d,k)$-RSP protocol with communication $m$ and relaxed average-case error $\varepsilon_r$ must satisfy
    \[m \geq \left\lfloor\log \frac{d}{k} \right\rfloor + \log (1-\varepsilon_r).\]
\end{theorem}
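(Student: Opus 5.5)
We will reduce to the no-signalling bound of Fact \ref{thm:nonsignalling}. Set $n = \lfloor \log \frac{d}{k} \rfloor$, so that $2^n k \leq d$. Fix an orthonormal basis $\{\ket{j}\}_{j\in[d]}$ of $\mathbb{C}^d$. Partition $2^n k$ of its vectors into $2^n$ disjoint blocks $S_x$ of size $k$, one for each $x \in \{0,1\}^n$. Let $P_x$ be the projector onto $\mathrm{span}\{\ket{j}: j\in S_x\}$. These are mutually orthogonal rank-$k$ projectors.

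\textbf{Coding protocol.} Given a uniformly random $x$, Alice and Bob would like to run the $(d,k)$-RSP protocol on input $P_x$. Bob then measures $B_2$ with the projective measurement $\{P_{x'}\}_{x'} \cup \{I - \sum_{x'} P_{x'}\}$ and outputs $x'$. On the last outcome he outputs something arbitrary. The probability of success on input $P_x$ is at least $\mathbb{E}_c \Tr(P_x \chi^{B_2}_{P_x,c})$.

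\textbf{Main obstacle.} The relaxed error is an average over Haar-random $P$, not over our fixed family $\{P_x\}$. We handle this with shared randomness, which Fact \ref{thm:nonsignalling} allows. Alice and Bob share a Haar-random unitary $U$. Alice runs the protocol on input $UP_xU^\dagger$, and Bob measures in the rotated family $\{UP_{x'}U^\dagger\}$. For each fixed $x$, the input $UP_xU^\dagger$ is Haar-distributed on $G(d,k)$. Hence, averaged over $U$ and $c$, the success probability for each $x$ is at least
\[
\mathbb{E}_{P,c}\Tr(P\chi^{B_2}_{P,c}) = 1-\eps_r .
\]
Averaging over $x$ keeps this bound.

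\textbf{Remaining issues.} The shared state in Fact \ref{thm:nonsignalling} is stated as maximally entangled, while our protocol uses an arbitrary $\ket{\sigma}$. We expect the cited no-signalling result to hold for arbitrary shared entanglement. The standard argument has Bob guess the $m$-bit message uniformly, which succeeds with probability $2^{-m}$. Without any message, Bob's output is independent of $x$, so he is right with probability $2^{-n}$. This gives $\Pr[x=x'] \leq 2^{m-n}$ for any shared state, and we would include this argument if needed. Shared randomness can likewise be absorbed into the shared state.

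Combining the two bounds gives $1-\eps_r \leq 2^{m-n}$. Taking logarithms yields $m \geq \lfloor \log\frac{d}{k}\rfloor + \log(1-\eps_r)$, which is the claim.
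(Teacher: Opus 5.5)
Your proposal is correct and is essentially the paper's proof: you use mutually orthogonal rank-$k$ projectors and a shared Haar-random rotation, so each rotated input is Haar-distributed and Bob's success probability is at least $1-\varepsilon_r$, and then the no-signalling bound $\Pr[x=x']\leq 2^{m-n}$. Your use of exactly $2^{\lfloor \log(d/k)\rfloor}$ projectors, and your message-guessing argument extending that bound to an arbitrary shared state $\ket{\sigma}$, tidy up two points the paper passes over without comment: it uses $\lfloor d/k\rfloor$ projectors, and it applies the Fact, which is stated for maximally entangled states, to an arbitrary shared state.
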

\begin{proof}
For $r = \lfloor{\frac{d}{k}} \rfloor$, let $P_1, \ldots, P_r$ be mutually orthogonal elements of $G(d,k)$. Complete them to a projective measurement with an additional projector $\tilde{P}$. Given the protocol in the statement of the theorem, consider the following protocol for transmitting $t \in [r]$ from Alice to Bob. Using shared randomness, they sample a uniformly random unitary $U \in U(d)$; they run the original protocol with Alice input $U P_t U^\dagger$, and at the end of the protocol, Bob measures $B_2$ according to the projective measurement $U P_1 U^\dagger, \ldots, U P_r U^\dagger, U \tilde{P} U^\dagger$, getting outcome $t' \in [r]$ (if the last outcome is obtained, Bob picks $t'$ uniformly at random). We see that
\[1-\eps_r \leq \Pr[t=t'] \leq 2^{m - \lfloor \log \frac{d}{k}\rfloor}\]
Where $t$ was encoded as a binary string and Fact \ref{thm:nonsignalling} was applied. This gives the statement of the theorem. 
\end{proof}
Note that this communication lower bound straightforwardly works even for RSP protocols with two-way communication and any amount of shared entanglement and randomness, and it lower bounds not only for the total communication of such a protocol, but the communication from Alice to Bob specifically.

Now we move on to proving a lower bound on the entanglement cost of $(d,k)$-RSP. In this regard, in addition to the condition $\eps_r$ not be too close to $1$, lower bounds can only be shown under the additional condition that $m = o(d)$. This is because, as noted in the introduction, there exists a good $(d,k)$-RSP protocol for relaxed error, namely the $\eps$-net protocol, which is completely classical (i.e. requires no entanglement) and which involves only $\Theta(d)$ bits of communication. As it turns out, we can show that any protocol for which the bounds on both $\eps_r$ and communication are true, and which uses pure entanglement, must essentially be using $\log d - O(1)$ EPR pairs.
\begin{theorem} \label{thm:minentropy}
    For all $\gamma > 0$, any $(d,k)$-RSP protocol with relaxed average error $\eps_r$, $m$ bits of communication, and initial pure shared state $\ket{\sigma}^{AB}$, must satisfy
    \[H^{\delta + \gamma}_{\text{min}}(A)_\sigma \geq \log d - 3 \log(1/\gamma) - O(1) \]
    where $\delta = F\left(\frac{k}{d}+O(\sqrt{\frac{m}{d}}), 1-\varepsilon_r\right) $, where $F$ is the truncated fidelity function.
\end{theorem}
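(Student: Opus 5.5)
The plan follows the outline in the techniques section. The protocol is run coherently over Alice's input, and the Nielsen--Vidal/Jonathan--Plenio majorization criterion (Fact \ref{thm:jonathanplenio}) then relates the Schmidt spectrum of $\sigma$ to the spectra of the final states.

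\textbf{Step 1: coherent protocol and majorization.} Take a finite ensemble $\{P_1,\dots,P_L\}$ approximating the Haar measure on $G(d,k)$; for instance, i.i.d. samples with $L$ large, justified by the weak law of large numbers (Fact \ref{fact:weaklaw}). Alice prepares $\frac{1}{\sqrt L}\sum_j \ket{j}^X$ locally and applies the isometry $U_{P_j}$ controlled on $X$. She then measures $C$, sends $c$, and Bob applies $V_c$. This is an LOCC map taking $\ket{\sigma}$ to an ensemble $\{(p(c),\ket{\Omega_c})\}$ of pure states on $(XA')(B_1B_2)$. By Fact \ref{thm:jonathanplenio},
\[\lambda(\sigma)\preceq \sum_c p(c)\,\mu(\Omega_c).\]
Bob's marginal of $\Omega_c$ is $\rho_c^{B_1B_2}=\sum_j p(j|c)\,\chi_{P_j,c}^{B_1B_2}$. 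Its spectrum majorizes that of any pinching, but we need the other direction. Instead we bound the largest eigenvalues: the sum of the top $t$ Schmidt coefficients of $\Omega_c$ is at most $\max_{\Pi}\Tr(\Pi\rho_c)$ over rank-$t$ projectors $\Pi$. Combined with majorization, this yields an upper bound on the top-$t$ mass of $\lambda(\sigma)$ by $\mathbb{E}_c\max_{\Pi}\Tr(\Pi\rho_c)$. By Lemma \ref{lem:minentropydef}, controlling top-$t$ masses for $t\approx \gamma^{3}d$ is exactly what smoothed min-entropy requires.

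\textbf{Step 2: bounding $\Tr(\Pi\rho_c)$ using correctness.} Fix a rank-$t$ projector $\Pi$ on $B_1B_2$. Split $\Tr(\Pi\chi_{P,c})$ into the part supported on $P$ in $B_2$ and the part supported on $I-P$. The second part averages to at most the relaxed error mass. For the first, $\Pi$ has rank $t$ while the target register is concentrated on a random $k$-dimensional subspace, so under the Haar average $\mathbb{E}_P[(P\otimes I)\,\cdot\,]$ the overlap behaves like $\frac{k}{d}$ times something of rank $t$. This suggests a bound of the form $t/d$-type mass together with a truncated-fidelity combination: the top-$t$ weight is at most about $F(\frac{k}{d}+\cdot,\,1-\eps_r)$ for the ``bad'' portion, plus $t/d$ for the flat portion. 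The function $F$ arises from optimizing, via Cauchy--Schwarz on a $2\times2$ Gram structure, the overlap of a vector with weight $1-\eps_r$ on $P$ against a subspace with Haar overlap $k/d$. Concavity of $F$ lets us pass averages inside.

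\textbf{Step 3: removing the conditioning on $c$.} This is the main obstacle. $p(j|c)$ differs from uniform by a factor up to $2^m/p(c)$. Discarding messages with $p(c)<\gamma 2^{-m}$ (total mass $\le\gamma$) gives density ratio $K\le 2^m/\gamma$. Lemma \ref{lem:expectationbound} with $r=\log K\approx m$ then bounds $\mathbb{E}_{P|c}[f(P)]$ by $2(\mathbb{E}_P f^r)^{1/r}$. Here $f(P)=\Tr(\Pi'\,P)/k$-type overlaps are Lipschitz with constant $O(1/\sqrt k)$, so by Lemma \ref{prop:concentration} they are sub-Gaussian around $k/d$-scale means with width $O(1/\sqrt d)$. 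Fact \ref{prop:subgaussian} then gives $r$-th moments at most mean $+O(\sqrt{r/d})=$ mean $+O(\sqrt{m/d})$. The delicate point is that $\Pi$ may depend on $c$, so we need the moment bound uniformly over rank-$t$ projectors. I would handle this by using the net of Fact \ref{fc:sphere-net} together with Fact \ref{lem:net2}, or by writing the max as an operator-norm quantity and applying the operator Chernoff bound. The factor $2$ from Lemma \ref{lem:expectationbound} and the losses from the net are absorbed by choosing $t=\Theta(\gamma^{3} d)$. This is where the $3\log(1/\gamma)$ term arises, along with the Lemma \ref{lem:boundfidelity} trade-off $K\Delta^2$ vs.\ $K^{-1/3}$ converting the additive $\sqrt{m/d}$ slack into an argument shift of $F$.

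\textbf{Step 4: conclusion.} Combining the steps gives $\sum_{i\le t}\lambda_i(\sigma)\le \delta+\gamma+t\cdot O(1/(\gamma^{3}d))$-type bounds for all $t$. Applying Lemma \ref{lem:minentropydef} with threshold $S=O(1/(\gamma^3 d))$ then gives $H^{\delta+\gamma}_{\min}(A)_\sigma\ge\log d-3\log(1/\gamma)-O(1)$. Finally, let $L\to\infty$ to remove the finite-ensemble approximation.
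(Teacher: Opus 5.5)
Your outline has the same skeleton as the paper's proof: a coherent run plus Fact \ref{thm:jonathanplenio}, a per-message bound on the top-$t$ eigenvalue mass through $F$, a change of measure via Lemma \ref{lem:expectationbound} with $r=\log K$, and then Lemmas \ref{lem:boundfidelity} and \ref{lem:minentropydef}. However, Step 2, which is the core of the argument, is never actually carried out, and the heuristic you give for it does not work.

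You cannot average $P$ out of $\Tr(\Pi(I\otimes P)\chi_{P,c}(I\otimes P))$ to get ``$\frac{k}{d}$ times something of rank $t$''. The state $\chi_{P,c}$ itself depends on $P$, and the protocol may correlate it with $P$ arbitrarily. Splitting into $P$ and $I-P$ parts also leaves cross terms. A plain Cauchy--Schwarz bound on those only gives $(\sqrt{xs}+\sqrt{1-s})^2$, which is strictly weaker than $F(x,s)$.

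The missing ingredient is a bound that holds for every state. Lemma \ref{prop:optsdp} shows that $\max\{\Tr(M\rho):\Tr(N\rho)\ge s\}=F(\|MNM\|_\infty,s)$, via SDP duality and a reduction to the two-dimensional Jordan blocks of the two subspaces. Consequently $\Tr(\Pi\chi_{P,c})\le F\bigl(f(P),\Tr((I\otimes P)\chi_{P,c})\bigr)$ with $f(P)=\|\Pi(I^{B_1}\otimes P)\Pi\|_\infty$, and concavity of $F$ then moves the average over $P$ inside.

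So the governing statistic is this largest squared principal cosine, not a trace overlap like $\Tr(\Pi'P)/k$. Its Haar mean is $\frac{k}{d}+O(\sqrt{t/d})$ (Lemma \ref{lem:expectation}, using a $\frac12$-net on the $t$-dimensional range of $\Pi$; this is what Facts \ref{fc:sphere-net} and \ref{lem:net2} are for). Hence the $t$-dependence sits inside the first argument of $F$, not as an additive ``$t/d$ for the flat portion''. Lemma \ref{lem:boundfidelity} is then used to trade this $A\sqrt{t/d}$ shift against $St$, which is what forces $S=\Theta(1/(\gamma^3 d))$. The $\sqrt{m/d}$ term is not traded away; it stays in $\delta$.

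Step 3 contains a concrete error. Applying Lemma \ref{lem:expectationbound} directly to $f$ gives the bound $2(\mathbb{E}_{\mathrm{Haar}}f^r)^{1/r}$. This is at least twice the Haar mean, i.e.\ at least $\frac{2k}{d}$, so the leading term is doubled. No choice of $t$ absorbs this: whenever $\frac{k}{d}+O(\sqrt{m/d})<1-\eps_r\le\frac{2k}{d}$, your bound gives $F=1$ while the theorem's $\delta<1$. The paper avoids this by writing $\mathbb{E}_\mu f\le\mathbb{E}_{\mathrm{Haar}}f+\mathbb{E}_\mu|f-\mathbb{E}_{\mathrm{Haar}}f|$. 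It applies the change of measure only to the centered deviation, whose $r$-th moment is $O(\sqrt{r/d})$ by Lemma \ref{lem:concentration} and Fact \ref{prop:subgaussian}.

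There are also three smaller points:
\begin{itemize}
\item No uniformity over $\Pi$ is needed. The supremum over $\Pi$ lies outside the expectation, and the moment bound has constants independent of $\Pi$. A net over rank-$t$ projectors on $B_1B_2$ would in any case depend on the unbounded dimension of $B_1$.
\item Discarding messages with $p(c)<\gamma 2^{-m}$ adds an $O(\sqrt{\log(1/\gamma)/d})$ shift that is absent from $\delta$. Instead, use $K_c=1/p(c)$ for each $c$ and Jensen: $\sum_c p(c)\sqrt{\log(1/p(c))}\le\sqrt{H(C)}\le\sqrt{m}$.
\item Take $L\to\infty$ already in Step 1, as in Lemma \ref{prop:majorizes}. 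Then the density you bound is that of $p(\cdot\mid c)$ with respect to the Haar measure, to which the concentration results apply, rather than with respect to the empirical measure on your samples.
\end{itemize}
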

In view of Fact \ref{fc:distillation-lb}, Theorem \ref{thm:minentropy} implies that the initial entangled state is such that almost $\log d$ many EPR pairs can be extracted from it via LOCC, with a failure probability that is dependent on the smoothing. In particular, nothing is implied about the entanglement if $m = \Theta(d)$, in which case $\delta = 1$, and the failure probability of the distillation protocol in Fact \ref{fc:distillation-lb} is $1$.

\subsection{The proof strategy}

Our proof strategy for Theorem \ref{thm:minentropy} will run as follows. We fix a particular $(d,k)$-RSP protocol. Let $p$ be the distribution on $G(d,k)\times \{0,1\}^m$ obtained by sampling a $P$ Haar-randomly from $G(d,k)$ and then sampling $c$ according to the message distribution induced by the protocol on input $P$. If Alice sends message $c$, the pure state shared by Alice and Bob at the end of the protocol is $\ket{\chi_c}^{IA'B_1B_2}$.  For each $c \in \{0,1\}^m$, define the state $\rho_c^{B_1B_2}$ by:
\[\rho^{B_1B_2}_c = \mathbb{E}_{P \sim p(\cdot \mid c)}[\chi^{B_1B_2}_{c, P}].\]
Moreover, let $\lambda = \{\lambda_j\}_j$ denote the Schmidt spectrum of $\ket{\sigma}^{AB}$, and $\nu_c = \{\nu_{c,j}\}_j$ denote the eigenspectra of the $\rho^{B_1B_2}_c$. We start by proving the following lemma.
\begin{lemma} \label{prop:majorizes}
The spectra $\lambda$ and $\nu_c$ satisfy 
\[\lambda \preceq \sum_c p(c) \nu_c.\]
\end{lemma}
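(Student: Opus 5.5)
The plan is to run the protocol coherently over Alice's input. The resulting map is an LOCC transformation of a pure state $\ket{\sigma}^{AB}$ (tensored with a local ancilla on Alice's side, which leaves the Schmidt spectrum unchanged) into the ensemble $\{(p(c), \ket{\Omega_c})\}_c$ of pure states. Here $\ket{\Omega_c}$ is a pure state whose Bob-side marginal is exactly $\rho_c^{B_1B_2}$. Once this is in place, Fact~\ref{thm:jonathanplenio} gives $\lambda \preceq \sum_c p(c)\nu_c$ directly.

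\textbf{Step 1: a finite ensemble of inputs.} Since the Haar measure on $G(d,k)$ is continuous, I would first replace it by a finite ensemble $\{(w_P, P)\}$ (for instance a weighted discretization), on which the coherent construction makes sense. Alice prepares an input register $I$ in the state $\sum_P \sqrt{w_P}\ket{P}^I$. Controlled on $P$, she applies the isometry $U_P : A \to A'C$, then measures $C$ and sends the outcome $c$ to Bob, who applies $V_c$.

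Conditioned on $c$, the global state on $IA'B_1B_2$ is proportional to
\[
\sum_P \sqrt{w_P}\,\ket{P}^I \otimes (I\otimes V_c)(\bra{c}^C U_P \otimes I)\ket{\sigma}.
\]
This is pure, and it occurs with probability $p(c) = \sum_P w_P\, p(c|P)$. Its marginal on $B_1B_2$ is $\sum_P \frac{w_P\,p(c|P)}{p(c)}\,\chi_{P,c}^{B_1B_2}$, because the different $\ket{P}^I$ are orthogonal and so kill the cross terms. By Bayes' rule this marginal is exactly $\mathbb{E}_{P\sim p(\cdot|c)}[\chi^{B_1B_2}_{P,c}] = \rho_c$.

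\textbf{Step 2: apply majorization.} The whole procedure (local ancilla preparation, local isometries, one-way communication) is LOCC. Its Schmidt spectra are $\lambda$ before and $\nu_c$ after, where $\nu_c$ is the spectrum of the Bob marginal $\rho_c$. Fact~\ref{thm:jonathanplenio} therefore gives $\lambda \preceq \sum_c p(c)\nu_c$ for the discretized ensemble.

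\textbf{Step 3 (main obstacle): pass to the Haar measure.} The remaining work is to extend the statement from the discretized ensemble to the Haar measure. I would choose discrete measures $\mu_n$ converging weakly to Haar. The protocol's outcome distribution $p(c|P)$ and the states $\chi_{P,c}$ depend continuously on $P$, since $P$ enters only through the isometry $U_P$.

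The difficulty is that an arbitrary protocol need not be continuous in $P$. To handle this I would avoid relying on continuity and argue as follows:
\begin{itemize}
\item Approximate the Haar measure by empirical measures of i.i.d.\ samples.
\item Apply the weak law of large numbers (Fact~\ref{fact:weaklaw}) to the bounded matrix-valued quantities $p(c|P)\chi_{P,c}$, entrywise and for each of the finitely many $c$.
\item Conclude that $p_n(c)$ and $p_n(c)\rho_{c,n}$ converge in probability to their Haar counterparts.
\end{itemize}
Eigenvalues, and hence partial sums of sorted spectra, are continuous functions of the matrix. Majorization is a closed condition, so it passes to the limit, which yields the claim.

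A subtlety is the case $p(c)=0$. Such $c$ contribute nothing to $\sum_c p(c)\nu_c$, so I would work throughout with the unnormalized quantities $p(c)\nu_c$, which also handles the continuity at $p(c)=0$.
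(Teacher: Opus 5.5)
Your proposal is correct and is essentially the paper's proof. Like the paper, you run the protocol in uniform superposition over $N$ i.i.d.\ Haar-random inputs, apply Fact~\ref{thm:jonathanplenio}, apply the weak law of large numbers entrywise to $p(c|P)\chi^{B_1B_2}_{P,c}$ (your $p_n(c)\rho_{c,n}$ is the paper's $\tilde p(c)\chi^{B_1B_2}_c$) with a union bound over the finitely many $c$, and pass to Haar via continuity of the sum of the top $l$ eigenvalues. Your aside that $p(c|P)$ and $\chi_{P,c}$ depend continuously on $P$ is false for general protocols, but as you note yourself, the argument never uses it.
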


To facilitate proving Lemma \ref{prop:majorizes}, we will consider running the protocol in superposition over a number of Alice inputs $P_1, \ldots , P_N$ to be determined later (along with the number $N$).\footnote{Ideally, we would run a uniform superposition over all $P\in G(d,k)$, but this will make the overall state infinite-dimensional, which is unpleasant to analyze. Although it would appear that results in the same vein as Fact \ref{thm:jonathanplenio} are known to hold in infinite dimensions (\cite{vanluijk2024purestateentanglementvon}), it seems safer to instead pick a finite set of $P_i$ such that the superposition over them approximates the superposition over $G(d,k)$ for our purposes.} In order to run the protocol in superposition,  we introduce a new dimension-$N$ register $I$ on Alice's side, initially in the uniform superposition. The original protocol is then carried out with Alice's first operation controlled on $I$, i.e.  the isometry $U: IA \mapsto IA'C$ Alice applies is given by
\[U = \sum_i \ket{i} \bra{i}^I \otimes U_{P_i}^{A}\]
where $U_{P_i}^A$ is the isometry Alice applies on input $P_i$ in the original protocol. The rest of Alice and Bob's operations in the superposition protocol are the same as in the original protocol. We will invoke Fact \ref{thm:jonathanplenio} to infer properties of the spectrum of the initial state $\ket{\sigma}^{AB}$ from the spectra of the final states of this superposition protocol.

Let $\tilde{p}$ be the distribution on $[N] \times \{0,1\}^m$ that is induced by first sampling $i \in [N]$ and then sampling $c$ according to the marginal distribution $p(c|P_i)$. In the superposition protocol, the final state of Alice and Bob's registers for a given outcome $c$ is given by
\[\ket{\chi_c}^{IA'B_1B_2} = \sum_i \sqrt{\tilde{p}(i| c)} \ket{i}^I \ket{\chi_{P_i,c}}^{A'B_1B_2}.\]
The marginal state on the registers $B_1B_2$ for outcome $c$ is then,
\[ \chi^{B_1B_2}_c =\sum_i \tilde{p}(i|c)\chi^{B_1B_2}_{P_i,c}.\]

With this notation, we are now ready to prove Lemma \ref{prop:majorizes}.
\begin{proof}[Proof of Lemma \ref{prop:majorizes}]
For any $\eps > 0$, we will show that for all $l$, 
\[\sum_{i=1}^l \lambda_i \leq \left(\sum_c p(c) \sum_{i=1}^l \nu_{c,i}\right) + \eps.\]
Since $\eps$ could have been taken to be arbitrarily small, this establishes the lemma.

Let $\tilde{\nu}_c$ denote the eigenspectra of $\chi^{B_1B_2}_c$ (which is equal to the Schmidt spectra of $\ket{\chi}^{IA'B_1B_2}_c$). From Fact \ref{thm:jonathanplenio}, since the ensemble $\{(\tilde{p}(c), \ket{\chi_c}^{IA'B_1B_2})\}_{c \in \{0,1\}^m}$ can be produced probabilistically from the initial state $\ket{\sigma}^{AB}$, we always have that
\[\sum_{i=1}^l \lambda_i \leq \sum_c \tilde{p}(c) \sum_{i=1}^l \tilde{\nu}_{c,i} = \sum_c p(c) \sum_{i=1}^l \frac{\tilde{p}(c)}{p(c)}\tilde{\nu}_{c,i}.\]

We will show that, for some sufficiently large value of $N$, it is possible to choose $P_1, \ldots, P_N$ so that, for all $c$, 
\begin{equation} \label{eq:trace}
\left\|\rho^{B_1B_2}_c - \frac{\tilde{p}(c)}{p(c)}\chi^{B_1B_2}_c\right\|_\tr \leq \eps.
\end{equation}
Now the sum of the largest $l$ eigenvalues of a density matrix $\omega$ can be written as $\sup\{\Tr(M\omega): 0 \preceq M \preceq I, \mathrm{rank}(M)=l\}$. Moreover, the optimal $M$ can be taken to be a projector, which means that due to point 3 of Fact \ref{fac:distanceproperties}, the optimization takes values that are at most $\eps$-far for matrices that are $\eps$-far in trace distance. Therefore, the trace distance condition will imply for all $l$ and for all $c$, 
\[\left|\sum_{i=1}^l \nu_{c, i} - \frac{\tilde{p}(c)}{p(c)} \tilde{\nu}_{c,i}\right| \leq \eps,\]
and the statement follows. We pick $N$ and $P_1, \ldots, P_N$ such that equation \eqref{eq:trace} holds probabilistically. We will pick the $P_i$ independently and uniformly at random in $G(d,k)$ and bound the probability that this equation does not hold for some $c$. We have:
\begin{align*}
\frac{\tilde{p}(c)}{p(c)}\chi^{B_1B_2}_c = \frac{1}{N} \sum_{i=1}^N \frac{p(c|P_i)}{p(c)}\rho^{B_1B_2}_{P_i,c}
\end{align*}
just by expanding out the definition of $\tilde{p}(c)$. It is not hard to see that
\[\mathbb{E}_{P \sim p} \left[\frac{p(c|P)}{p(c)} \rho^{B_1B_2}_{P,c} \right] = \mathbb{E}_{P \sim p} \left[\frac{p(P|c)}{p(P)} \rho^{B_1B_2}_{P,c} \right] = \mathbb{E}_{P \sim p(\cdot |c)} \left[ \rho_{P,c} \right] = \rho^{B_1B_2}_c.\]
Since $\frac{\tilde{p}(c)}{p(c)}\chi^{B_1B_2}_c$ is a sample mean of independent $P_i$, by applying Fact \ref{fact:weaklaw} component-wise on each matrix entry, we get that for every $c$, there exists $N_c$ such that, for all $N \geq N_c$, the probability that equation \eqref{eq:trace} doesn't hold is strictly less than $\frac{1}{2^m}$. Setting $N = \max_c N_c$, and taking a union bound, this shows that there exists a choice of $P_1, \ldots, P_N$ such that equation \eqref{eq:trace} holds for all $c$. This completes the proof.
\end{proof}

In view of this lemma, we can prove Theorem \ref{thm:minentropy} assuming the following technical result which the rest of the section will be devoted to proving.
\begin{theorem} \label{thm:eigvalbound}
Let $\mu$ be a probability measure on $G(d,k)$ with infinity norm $K$, i.e., for all measurable sets $A \subseteq G(d,k)$, $\mu(A) \leq K\cdot \mu_{\text{Haar}}(A)$, where $\mu_{\text{Haar}}$ is the Haar measure on $G(d,k)$. Suppose there is a measurable map $P\to \omega_P$ where $\omega_P$ is a density matrix on registers $B_1B_2$. Take:
\[\omega = \mathbb{E}_{P \sim \mu}[\omega_P]\]
\[\varepsilon = \mathbb{E}_{P\sim \mu}\left[\Tr(\omega_P (I^{B_1} \otimes (I-P)^{B_2}))\right].\]
Then, letting $\nu_1 \geq \nu_2 \geq \ldots $ be the spectrum of $\omega$, there is an absolute constant $A > 0$ such that for all $l$, 
\[\sum_{j=1}^l \nu_j \leq F\left(\frac{k}{d} + \frac{A}{\sqrt{d}}\left( \sqrt{\log K} + \sqrt{l}\right), 1 - \varepsilon\right).\]
\end{theorem}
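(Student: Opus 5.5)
We bound $\sum_{j\le l}\nu_j$ variationally. Fix $l$ and let $\Pi$ be the projector onto the top-$l$ eigenspace of $\omega$ on $B_1B_2$, so that
\[\sum_{j=1}^l\nu_j=\mathbb{E}_{P\sim\mu}\Tr(\Pi\,\omega_P).\]
For each $P$, write $\omega_P$ restricted to the two-outcome measurement $\{I\otimes P, I\otimes(I-P)\}$. The key point is that $\Tr(\Pi\,\omega_P)$ is a fidelity-type overlap between two binary distributions:
\begin{itemize}
\item the distribution of $\omega_P$ under $\{I\otimes P, I\otimes (I-P)\}$, with weights $(1-\eps_P,\eps_P)$, where $\eps_P=\Tr(\omega_P(I\otimes(I-P)))$;
\item the corresponding weights $(x_P,1-x_P)$ of the normalized projector $\Pi/l$, with $x_P=\frac1l\Tr(\Pi(I\otimes P))$.
\end{itemize}

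Concretely, split $\Pi$ and $\omega_P$ into blocks along $I\otimes P$ and $I\otimes(I-P)$. Applying Cauchy--Schwarz to $\Tr(\Pi\omega_P)$ across the four blocks, and using that $\Pi$ has rank $l$ so each block of $\Pi$ is dominated by its trace, gives the per-$P$ bound
\[\Tr(\Pi\omega_P)\le\Big(\sqrt{l x_P(1-\eps_P)}+\sqrt{l(1-x_P)\eps_P}\Big)^2\wedge 1 .\]
That is not yet right, because the factor $l$ should not appear multiplying everything. I would refine this: $\Tr(\Pi\omega_P)\le \|\Pi (I\otimes P)\Pi\|$-weighted terms. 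Using $\omega_P\preceq I$ per block, one should get
\[\Tr(\Pi\omega_P)\le F(y_P,1-\eps_P),\qquad y_P=\Tr\big(\Pi(I\otimes P)\big)\ \text{capped}.\]
To see this, bound $\Tr(\Pi\omega_P)=\|\Pi\sqrt{\omega_P}\|_2^2$ and decompose $\sqrt{\omega_P}$ columns via $P$ versus $I-P$. This gives $(\sqrt{ab}+\sqrt{(1-a)(1-b)})^2$ with $a=\Tr(\omega_P(I\otimes P))=1-\eps_P$ and $b$ the relevant mass of $\Pi$ on $I\otimes P$, which is at most $y_P=\Tr(\Pi(I\otimes P))$. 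Monotonicity in $b$ when $b\le a$ then justifies the truncation built into $F$. Averaging over $\mu$ and using joint concavity of $F$ (the Lemma after its definition), we obtain
\[\sum_{j\le l}\nu_j\le F\big(\mathbb{E}_\mu y_P,\ 1-\eps\big).\]

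It remains to show $\mathbb{E}_\mu y_P\le \frac kd+\frac{A}{\sqrt d}(\sqrt{\log K}+\sqrt l)$, which is the main obstacle since $\Pi$ depends on $\mu$. Under Haar, $\mathbb{E}[y_P]=\frac kd\cdot\frac{\Tr\Pi}{\dim B_1}$, which is at most $\frac kd$ after normalization. The change of measure is handled by Lemma~\ref{lem:expectationbound} with $r=\log K$, which reduces matters to the $r$-th moment of $y_P$ under Haar. For a fixed $\Pi$, $y_P$ is Lipschitz in $P$ with constant of order $\sqrt{l}$-independent (roughly $\|\Tr_{B_1}\Pi\|_2\le\sqrt l$, scaled appropriately), so Lemma~\ref{prop:concentration} gives sub-Gaussian tails of width $O(1/\sqrt d)$. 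Fact~\ref{prop:subgaussian} then converts these tails to moments of order $\sqrt{r/d}$, producing the $\sqrt{\log K/d}$ term. However, $\Pi$ is not fixed, so I would take a supremum over all rank-$l$ $\Pi$ using an $\eps$-net on rank-$l$ projectors (Fact~\ref{fc:sphere-net}, Fact~\ref{lem:net2}). A union bound over the net costs roughly $e^{O(ld)}$ in the count but only an additive $\sqrt{l/d}$-type term in the deviation after rescaling, which yields the $\sqrt l/\sqrt d$ term. Making the Lipschitz constant and the net dimension interact correctly, especially when $B_1$ is large, is where I expect the real difficulty to lie. I would try to reduce to $\Tr_{B_1}$ of $\Pi$, an operator on $\mathbb{C}^d$ with trace $l$ and norm at most $1$, so that only $d$-dimensional nets are needed.
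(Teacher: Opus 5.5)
\textbf{Genuine gap: the first argument of $F$.} The structure of your proof matches the paper's: the variational formula for $\sum_{j\le l}\nu_j$, a per-$P$ bound $\Tr(\Pi\omega_P)\le F(s_P,1-\eps_P)$, concavity of $F$, and a H\"older change of measure with $r=\log K$ plus sub-Gaussian moments. The gap is in the quantity $s_P$ you use.

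You need $s_P$ to satisfy both (i) the per-$P$ inequality and (ii) $\mathbb{E}_{\mathrm{Haar}}[s_P]\le\frac kd+O(\sqrt{l/d})$.
\begin{itemize}
\item Your $y_P=\Tr(\Pi(I\otimes P))$ satisfies (i) but not (ii). Since $\mathbb{E}_{\mathrm{Haar}}[P]=\frac kd I$, its Haar mean is $\frac kd\Tr\Pi=\frac{kl}{d}$. There is no division by $\dim B_1$, because $\Tr_{B_1}\Pi$ still has trace $l$.
\item So your claimed bound on $\mathbb{E}_\mu y_P$ is false. Take $B_1$ trivial, $\mu$ Haar, $k=d/4$, $l=4$: then $\mathbb{E}y_P=1$ and your final bound is trivial, while the theorem gives $F(\frac14+O(d^{-1/2}),1-\eps)$. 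Its Lipschitz constant $\|\Tr_{B_1}\Pi\|_2$ can also be as large as $\sqrt l$.
\item The normalized $y_P/l$ satisfies (ii) but not (i). Let $\Pi=\ket{u}\bra{u}+\Pi'$ with $u$ in the range of $I\otimes P$ and $\Pi'$ orthogonal to that range, and let $\omega_P=\ket{u}\bra{u}$. Then $\Tr(\Pi\omega_P)=1$ and $\eps_P=0$, but $F(y_P/l,1)=1/l$.
\end{itemize}

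The right interpolant is the spectral norm $s_P=\|\Pi(I\otimes P)\Pi\|_\infty$, the largest squared cosine of the principal angles between the two ranges. Property (i) for it is Lemma~\ref{prop:optsdp}: the SDP $\max\{\Tr(\Pi\rho):\Tr((I\otimes P)\rho)\ge t\}$ has value exactly $F(s_P,t)$. The proof has four steps:
\begin{itemize}
\item take a pure optimizer, using convexity of the numerical range of $\Pi+i(I\otimes P)$;
\item split the span of the two ranges into orthogonal invariant blocks of dimension at most $2$ via Jordan's lemma;
\item use complementary slackness to place an optimizer in a single block;
\item solve the qubit case directly.
\end{itemize}
Your Cauchy--Schwarz/column-splitting sketch does not establish any such inequality; you note the stray factor $l$ yourself. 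So the per-$P$ step is also missing as written, and the $y_P$ version follows only a posteriori from $s_P\le y_P$.

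\textbf{The concentration step, with $s_P$.} This step is easier than you expect.
\begin{itemize}
\item The $\sqrt{l/d}$ term comes from a net over unit vectors in the $l$-dimensional range of the fixed $\Pi$, not over projectors. Write $s_P\le\frac kd+\|\Pi(I\otimes(P-\frac kd I))\Pi\|_\infty$. A $\frac14$-net of size $C^l$ in $\mathrm{range}(\Pi)$ controls the norm of this Hermitian operator up to a factor $2$ through the quadratic forms $\langle v|I\otimes(P-\frac kd I)|v\rangle$. Each is mean-zero and $1$-Lipschitz in $P$, so a union bound and Gaussian tail integration give $O(\sqrt{l/d})$.
\item $s_P$ itself is $1$-Lipschitz in Frobenius norm. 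Applying Lemma~\ref{lem:expectationbound} to $|s_P-\mathbb{E}_{\mathrm{Haar}}s_P|$, together with Fact~\ref{prop:subgaussian}, yields the $O(\sqrt{\log K/d})$ term.
\item Your worry that $\Pi$ depends on $\mu$ is unfounded. Once $\mu$ and $P\mapsto\omega_P$ are fixed, $\Pi$ is a fixed operator, the only randomness is in $P$, and every bound above is uniform over rank-$l$ projectors.
\item A union bound over $e^{\Theta(ld)}$ projectors would in any case be fatal. Against tails $e^{-cdt^2}$ it forces $t\gtrsim\sqrt l$ rather than $\sqrt{l/d}$.
\end{itemize}
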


With this result, we can now prove Theorem \ref{thm:minentropy}.

\begin{proof}[Proof of Theorem~\ref{thm:minentropy}]
Let $\eps_c = \mathbb{E}_{P \sim p(\cdot|c)}\left[\Tr\left(\chi^{B_1B_2}_{c,P}(I^{B_1}\otimes(I-P)^{B_2})\right)\right]$. Then $\mathbb{E}_{c \sim p}\eps_c = \eps_r$. We will apply Theorem \ref{thm:eigvalbound} to each of the matrices $\chi^{B_1B_2}_c$.

From Lemma \ref{prop:majorizes}, we have that, for all $l$, 
\begin{align*}
\sum_{j=1}^l \lambda_j &\leq \sum_c p(c) \sum_{j=1}^l \nu_{c,j} \\
                        &\leq \sum_c p(c) F\left(\frac{k}{d} + A\left( \sqrt{\frac{\log(1/p(c))}{d}} + \sqrt{\frac{l}{d}}\right), \; 1 - \varepsilon_c\right) \\
                        &\leq F\left(\frac{k}{d} + A\sqrt{\frac{l}{d}} + \frac{A}{\sqrt{d}} \sum_c p(c) \sqrt{\log(1/p(c))}, \; 1 - \varepsilon_r \right) \\
                        & \leq F\left(\frac{k}{d} + A\sqrt{\frac{l}{d}} + \frac{A}{\sqrt{d}} \cdot\sqrt{\sum_c p(c) \log(1/p(c))}, \; 1 - \varepsilon_r \right) \\
                        &\leq F\left(\frac{k}{d} + A \left(\sqrt{\frac{m}{d}} + \sqrt{\frac{l}{d}}\right), \; 1 - \varepsilon_r \right)
\end{align*}
where Theorem \ref{thm:eigvalbound} was applied for the first inequality, Jensen's inequality yields the second and third inequalities, also noting that $F$ is monotonically increasing in the first argument for the third inequality. The last inequality is obtained by noting that $\sum_cp(c)\log(1/p(c))$ is equal to the entropy of $p(c)$, which is at most $m$, since $c$ is a bit string of length $m$. Hence, for any $S > 0$ and for all $l$, 
\begin{align*}
\sum_{i=1}^l (\lambda_i - S) &\leq F\left(\frac{k}{d} + A\sqrt{\frac{m}{d}} + \sqrt{\frac{A^2l}{d}}, \; 1 - \varepsilon_r \right) - \frac{Sd}{A^2} \left(\sqrt{\frac{A^2l}{d}}\right)^2 \\
                                &\leq  F\left(\frac{k}{d} + A \sqrt{\frac{m}{d}}, \; 1 - \varepsilon_r \right) + O\left(\left(\frac{Sd}{A^2}\right)^{-1/3}\right)
\end{align*}
where Lemma \ref{lem:boundfidelity} was applied for the last inequality. Setting $S = \Theta\left(\frac{1}{\gamma^3 d}\right)$ and using the characterization of the smooth min-entropy in Lemma \ref{lem:minentropydef} completes the proof.
\end{proof}

\subsection{The key semidefinite program}
We first prove a result about the optimal value of a semidefinite program, which will be needed in our proof of Theorem \ref{thm:eigvalbound}. Taking $M$ and $N$ to be two nonzero projectors on $\mathbb{C}^d$, and taking $t \in (0,1)$, we will study the following semidefinite program:
\begin{equation}\label{eq:sdp}
\begin{aligned}
\max_{\rho} \quad & \Tr(M \rho) \\
\text{s.t.} \quad 
& \Tr(N \rho) \ge t \\
& \rho \succeq 0, \Tr(\rho)=1. \\
\end{aligned}
\end{equation}\label{eq:main}
This SDP is useful to us because the quantity we are trying to upper bound in Theorem \ref{thm:eigvalbound} is precisely the average version of such an SDP. Specifically, in the setting of Theorem \ref{thm:eigvalbound}, we know that $\mathbb{E}_P[\Tr((I^{B_1}\otimes P^{B_2})\omega_P)] \geq 1-\eps$, and we are trying to upper bound the sum of the first $l$ eigenvalues of $\omega_P$ on average. But the sum of the first $l$ eigenvalues of $\omega_P$ can be expressed as the maximum value of $\Tr(Q\omega_P)$, for a rank-$l$ projector $Q$, and we want to know what the maximum value of  $\mathbb{E}_P[\Tr(Q\omega_P)]$ can be when $\omega_P$-s are subject to the $\mathbb{E}_P[\Tr((I^{B_1}\otimes P^{B_2})\omega_P)] \geq 1-\eps$.

We will prove the following lemma about the value of the program \eqref{eq:sdp}.
\begin{lemma} \label{prop:optsdp}
The optimal value of program \eqref{eq:sdp} is $F(\|MNM\|_\infty, t)$, where $F$ is the truncated fidelity function.
\end{lemma}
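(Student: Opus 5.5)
The plan is to reduce to pure states and then use a geometric argument with principal angles. Write $x = \|MNM\|_\infty$, the squared cosine of the smallest principal angle $\theta \in [0,\pi/2]$ between $\mathrm{range}(M)$ and $\mathrm{range}(N)$. Also write $t = \cos^2\varphi$ with $\varphi\in(0,\pi/2)$. The target value is $F(x,t)$, which I will show is $\cos^2(\max(\theta-\varphi,0))$. Indeed, if $x\le t$ then $\theta\ge\varphi$, and the cosine subtraction formula gives $\cos^2(\theta-\varphi)=(\sqrt{xt}+\sqrt{(1-x)(1-t)})^2$. If $x>t$ then $\theta<\varphi$ and the value should be $1$.

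\textbf{Achievability.} Take unit vectors $u\in\mathrm{range}(M)$ and $n\in\mathrm{range}(N)$ with $|\langle u|n\rangle|^2 = x$; these are a top eigenvector of $MNM$ and its normalized image under $N$. Work in the real two-dimensional plane spanned by $u$ and $n$, after fixing phases so that $\langle u|n\rangle\ge 0$. If $\theta\le\varphi$, choose $v=u$; then $\Tr(M\state{v})=1$ and $\Tr(N\state{v})\ge x> t$ (or $\ge t$). Otherwise, rotate $u$ towards $n$ by the angle $\theta-\varphi$ to obtain $v$. Then $\Tr(M\state{v})\ge\cos^2(\theta-\varphi)$ and $\Tr(N\state{v})\ge|\langle v|n\rangle|^2=\cos^2\varphi=t$.

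\textbf{Upper bound for pure states.} Let $\rho=\state{v}$. Define $\alpha$ as the angle between $v$ and $\mathrm{range}(M)$, so $\cos^2\alpha=\|Mv\|^2=\Tr(M\rho)$. Similarly let $\beta$ be the angle to $\mathrm{range}(N)$, so $\cos^2\beta=\Tr(N\rho)\ge t$, i.e.\ $\beta\le\varphi$. Take $u$ to be the normalized $Mv$ and $n$ the normalized $Nv$; these realize the angles $\alpha$ and $\beta$. The triangle inequality for the Fubini--Study angle $\arccos|\langle\cdot|\cdot\rangle|$ then gives $\theta\le\angle(u,n)\le\alpha+\beta$. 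Hence $\alpha\ge\theta-\varphi$, and so $\Tr(M\rho)\le\cos^2(\max(\theta-\varphi,0))$. The degenerate cases $Mv=0$ or $Nv=0$ are trivial.

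\textbf{Mixed states.} Here I would argue that the feasible set is the intersection of the state space with one half-space, and the objective is linear. An optimum is therefore attained at an extreme point. Such a point is either an extreme point of the state space, hence pure, or lies on the face $\Tr(N\rho)=t$, where extreme points have rank $r$ with $r^2\le 2$, hence again pure. As a fallback, write $\rho=\sum_i q_i\state{v_i}$ and let $g(s)=\cos^2(\max(\theta-\arccos\sqrt{s},0))$, so the pure bound reads $\Tr(M\state{v_i})\le g(\Tr(N\state{v_i}))$. Then concavity and monotonicity of $g$ in $s$ would give $\Tr(M\rho)\le g(\Tr(N\rho))\le g(t)$ by Jensen, but this route needs a separate check that $g$ is concave. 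I expect the main point needing care to be this pure-state reduction, together with making the angle triangle inequality rigorous in the complex setting. For the latter, I would use that $\arccos|\langle a|b\rangle|$ is a metric on projective space.
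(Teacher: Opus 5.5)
Your proof is correct, but it takes a genuinely different route from the paper's. The paper reduces to pure states via the Toeplitz--Hausdorff theorem (Lemma~\ref{lem:pure}). It then uses strong duality, complementary slackness and Jordan's lemma to show that some optimal pure state lies inside a single two-dimensional Jordan block of $\mathrm{range}(M)$ and $\mathrm{range}(N)$. It evaluates each block by an explicit qubit optimization (Lemma~\ref{prop:optsdpspecial}) and takes the best block by monotonicity of $F$ in its first argument.

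You never dualize or decompose. Your upper bound is a single application of the triangle inequality for the Fubini--Study angle $\arccos|\langle a|b\rangle|$, passing through $Mv/\|Mv\|$ and $Nv/\|Nv\|$, combined with the fact that $\arccos\sqrt{\|MNM\|_\infty}$ is the smallest principal angle. The closed form $F(x,t)=\cos^2(\max(\theta-\varphi,0))$ then falls out of the cosine subtraction formula rather than a calculus exercise.

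Your extreme-point reduction is a valid substitute for Toeplitz--Hausdorff. The rank bound you quote follows by dimension counting: if $r^2>2$, the $r^2$-dimensional real space of Hermitian operators supported on $\mathrm{range}(\rho)$ contains a nonzero $\Delta$ with $\Tr\Delta=\Tr(N\Delta)=0$, and $\rho\pm\epsilon\Delta$ breaks extremality.

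What your route buys is brevity and a stronger pointwise statement. For every unit vector $v$, feasible or not, you get $\langle v|M|v\rangle\le F(\|MNM\|_\infty,\langle v|N|v\rangle)$. Your fallback Jensen step is also sound: on $[x,1]$, $F(x,\cdot)$ is affine plus $2\sqrt{x(1-x)}\sqrt{s(1-s)}$ with zero derivative at $s=x$, and it is constant on $[0,x]$, so it is concave and nonincreasing. This gives the mixed-state bound $\Tr(M\rho)\le F(\|MNM\|_\infty,\Tr(N\rho))$ directly, which is exactly the inequality the paper invokes later in the proof of Theorem~\ref{thm:eigvalbound}. The paper's route instead yields structural information about the optimizers (support in one Jordan block) and stays inside the SDP framework.

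Two small things to tidy. First, when $\|MNM\|_\infty=0$ the vector $Nu/\|Nu\|$ is undefined; in that case the two ranges are orthogonal, so any unit $n\in\mathrm{range}(N)$ works. Second, in the pure-state bound, state explicitly that $\alpha\in[0,\pi/2]$ and that $\cos^2$ is decreasing there, so that $\alpha\ge\max(\theta-\varphi,0)$ yields the claimed inequality.
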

By the previous argument, this lets us recast upper bounding the sum of the first $l$ eigenvalues of the average $\mathbb{E}_P[\omega_P]$, as simply upper bounding $\mathbb{E}_P [F(\|PQP\|_\infty,1-\eps)]$ for a rank-$l$ projector $Q$, which will be much more convenient. For details of this argument, see  Section \ref{sec:finproof}. The rest of this section will be dedicated to proving Lemma \ref{prop:optsdp}.

We start by noting that this program admits a pure optimal solution in view of the following simple lemma.\footnote{The natural extension of this result to three matrices is false because the expectation values of a single-qubit mixed state under the three non-identity Pauli matrices completely characterize it.}
\begin{lemma} \label{lem:pure}
Let $A$ and $B$ be Hermitian matrices and let $\rho$ be a mixed state. There exists a pure state $\ket{\psi}$ such that:
\[\Tr(A \rho) = \braket{\psi|A|\psi} \]
\[\Tr(B \rho) = \braket{\psi|B|\psi}. \]
\end{lemma}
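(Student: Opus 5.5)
The plan is to reduce to a two-dimensional problem via the spectral decomposition of $\rho$ and a phase-randomization (or convexity) argument. Write $\rho = \sum_j q_j \state{\phi_j}$ with $q_j \geq 0$, $\sum_j q_j = 1$. Set $a_j = \braket{\phi_j|A|\phi_j}$ and $b_j = \braket{\phi_j|B|\phi_j}$. The point $(\Tr(A\rho), \Tr(B\rho)) = \sum_j q_j (a_j,b_j)$ is then a convex combination of points in the joint numerical range
\[W(A,B) = \{(\braket{\psi|A|\psi}, \braket{\psi|B|\psi}) : \|\psi\|=1\} \subseteq \mathbb{R}^2.\]
It therefore suffices to know that $W(A,B)$ is convex.

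This is precisely the Toeplitz–Hausdorff theorem applied to the (generally non-Hermitian) matrix $T = A + iB$. Indeed, $\braket{\psi|T|\psi} = \braket{\psi|A|\psi} + i\braket{\psi|B|\psi}$, so $W(A,B)$ is identified with the numerical range of $T$, which is a convex subset of $\mathbb{C}$. Given convexity, the convex combination $\sum_j q_j(a_j,b_j)$ lies in $W(A,B)$, and this yields the desired $\ket{\psi}$.

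To keep the argument self-contained, I would prove convexity directly, with the following steps.
\begin{enumerate}
\item Reduce to two points by induction on the number of terms. Combine two points first, producing a pure state realizing their weighted average, and then combine that with the next point.
\item For two unit vectors $\ket{u}$ and $\ket{v}$, the set of values on unit vectors in $\mathrm{span}\{u,v\}$ is the numerical range of a $2\times 2$ compression of $T$.
\item That numerical range is an ellipse (possibly degenerate) together with its interior, hence convex. The line segment between the two values therefore stays inside it.
\end{enumerate}
If $u$ and $v$ are linearly dependent, the two points coincide and there is nothing to show.

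The main obstacle is the elliptical range lemma for $2\times 2$ matrices. I would avoid it by an elementary path argument. First, multiply $v$ by a phase $e^{i\theta}$ so that the cross term $\mathrm{Im}\braket{u|\cdot|v}$ is arranged suitably. Then, after an affine change of the target plane that sends the two endpoints to $0$ and $1$, consider the path $\psi_s = ((1-s)u + s e^{i\theta}v)/\|\cdot\|$. Choosing $\theta$ so that the imaginary part vanishes along the path keeps the image on the real segment. By continuity, the path then passes through every intermediate point, which is the standard proof of Toeplitz–Hausdorff.
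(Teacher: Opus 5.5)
Your proof is correct and is essentially the paper's: decompose $\rho$ spectrally, view $(\Tr(A\rho),\Tr(B\rho))$ as $\Tr(\rho(A+iB))$, which is a convex combination of points of the numerical range of $A+iB$, and conclude by the Toeplitz--Hausdorff theorem. The only difference is that the paper cites Toeplitz--Hausdorff, while you also sketch its standard path proof; that sketch works provided the phase $\theta$ is chosen after the affine normalization $T\mapsto \alpha T+\beta I$, which your final sentence effectively does.
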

\begin{proof}
Consider the spectral decomposition of $\rho$:
\[\rho = \sum_i p_i \ket{\psi_i}\bra{\psi_i}.\]
Let $C = A + iB$ and set $a = \Tr(A \rho)$, $b = \Tr(B \rho)$, both of which are real because $A$ and $B$ are assumed to be Hermitian. Then,
\[a + ib = \Tr(\rho C) = \sum_i p_i \braket{\psi_i | C | \psi_i}\]
Now the set of numerical values of $\bra{\phi}C\ket{\phi}$ is a convex set,\footnote{The most general form of this result is known as the Toeplitz-Hausdorff Theorem.} and therefore there exists $\ket{\psi}$ with $\Tr(\rho C) = \braket{\psi | C | \psi}$. The result follows from comparing the real and imaginary parts.
\end{proof}

Next, we prove a special case of Lemma \ref{prop:optsdp}.
\begin{lemma} \label{prop:optsdpspecial}
    The statement of Lemma \ref{prop:optsdp} holds when $M$ and $N$ are rank-1 projectors on $\mathbb{C}^2$.
\end{lemma}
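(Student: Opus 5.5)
The plan is to reduce the two-dimensional rank-1 case to an explicit optimization over pure states on the Bloch sphere. Write $M = \state{m}$ and $N = \state{n}$ with unit vectors $\ket{m},\ket{n}\in\mathbb{C}^2$, and set $x = |\braket{m|n}|^2$. Then $MNM = x\,\state{m}$, so $\|MNM\|_\infty = x$, and the claim is that the optimal value equals $F(x,t)$.

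By Lemma \ref{lem:pure} applied to $A=M$ and $B=N$, we may restrict the program \eqref{eq:sdp} to pure states $\rho=\state{\psi}$. The problem then becomes: maximize $|\braket{m|\psi}|^2$ subject to $|\braket{n|\psi}|^2\ge t$.

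\textbf{Case $x \ge t$ (truncated branch).} Take $\ket{\psi}=\ket{m}$. This gives $\Tr(N\rho)=x\ge t$ and objective value $1$, which is trivially optimal. This matches $F(x,t)=1$ when $x > t$. When $x = t$ the formula gives $(\sqrt{t\cdot t}+\sqrt{(1-t)^2})^2 = 1$, so the two branches agree there.

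\textbf{Case $x<t$.} The key step is geometric. Pure qubit states correspond to unit Bloch vectors, and $|\braket{a|b}|^2=\cos^2(\theta_{ab}/2)$, where $\theta_{ab}$ is the angle between the Bloch vectors of $\ket{a}$ and $\ket{b}$. Define angles $\alpha$ and $\tau$ in $[0,\pi]$ by $\cos^2(\alpha/2)=x$ and $\cos^2(\tau/2)=t$, so that $\theta_{mn}=\alpha$ and $\alpha>\tau$. The constraint says $\psi$ lies in the spherical cap of angular radius $\tau$ around $n$. Maximizing $\cos^2(\theta_{m\psi}/2)$ amounts to minimizing the spherical distance from $m$ to that cap. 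Since $m$ lies outside the cap, the spherical triangle inequality shows that this minimum equals $\alpha-\tau$. It is attained at the point on the great circle through $n$ and $m$ at angle $\tau$ from $n$, on the side of $m$.

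It then remains to compute $\cos^2((\alpha-\tau)/2)$. Expanding gives
\[\cos\tfrac{\alpha}{2}\cos\tfrac{\tau}{2}+\sin\tfrac{\alpha}{2}\sin\tfrac{\tau}{2} = \sqrt{xt}+\sqrt{(1-x)(1-t)},\]
and squaring yields $F(x,t)$.

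The step I expect to be the main obstacle is justifying this Bloch-sphere geometry cleanly, in particular that the optimum is attained on the boundary $\Tr(N\rho)=t$. As a more algebraic alternative, I would parametrize $\ket{\psi}=\cos\phi\,\ket{n}+e^{i\chi}\sin\phi\,\ket{n^\perp}$ and write $\ket{m}=\sqrt{x}\,\ket{n}+\sqrt{1-x}\,\ket{n^\perp}$, after a phase choice. Then
\[|\braket{m|\psi}|\le \sqrt{x}\cos\phi+\sqrt{1-x}\,|\sin\phi|,\]
with equality when $\chi=0$. The constraint reads $\cos^2\phi\ge t$. The bound is a cosine of $\phi-\phi_0$, where $\cos^2\phi_0=x<t$, so it is increasing in $|\phi|$ up to $\phi_0$. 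Hence the maximum over the allowed interval is at $\cos^2\phi=t$, which gives $F(x,t)$.
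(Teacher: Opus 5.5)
Your proposal is correct and follows the paper's proof. You reduce to pure states via Lemma \ref{lem:pure}, settle the case $\|MNM\|_\infty \ge t$ with $\ket{\psi}=\ket{m}$, and otherwise show the optimum lies on the boundary $|\braket{n|\psi}|^2 = t$. Your algebraic alternative is exactly the paper's phase-aligned one-parameter optimization. The monotonicity of $\cos(\phi-\phi_0)$, or equivalently the great-circle triangle inequality on the Bloch sphere, fills in what the paper calls a ``simple calculus exercise''. That triangle inequality already fully justifies the geometric step you flagged as a possible obstacle, so there is no gap.
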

\begin{proof}
Assume without loss of generality that $N = \state{0}$ and $M = \state{\phi}$, with $\ket{\phi} = \alpha \ket{0} + \beta \ket{1}$ and $\alpha$ real and nonnegative. We then have that $\|MNM\|_\infty = \alpha^2$. In view of Lemma \ref{lem:pure}, the SDP reduces to optimizing over a pure state $\ket{\psi} = \gamma\ket{0}+\delta\ket{1}$, where again $\gamma$ can be assumed to be real and nonnegative without loss of generality. The objective function is then $|\alpha \gamma + \beta^* \delta|^2$, and the constraints are that $\gamma^2 + |\delta|^2 = 1$ and $\gamma^2 \geq t$.

In the event that $\alpha^2 \geq t$, setting $\gamma = \alpha$, $\delta = \beta$ produces a value of 1, which is the highest possible. If not, $|\alpha\gamma +\beta^*\delta|^2$ is maximized when the two summands have the same phase. Since $\alpha$ and $\gamma$ are real, an optimal $\delta$ there makes $\beta^* \delta$ real and nonnegative. Letting $\beta = \sqrt{1-\alpha^2} \exp(i \theta)$, $\delta$ then should be $\sqrt{1-\gamma^2}\exp(i\theta)$, and the objective function becomes $(\alpha\gamma + \sqrt{(1-\alpha^2)(1-\gamma^2)})^2$, with the constraint $\gamma \geq t$. In the $\alpha \leq t$ case, it is then a simple calculus exercise to see that it is optimal to set the single remaining variable $\gamma = \sqrt{t}$, resulting in the promised objective. 
\end{proof}

We will now handle the general case of Lemma \ref{prop:optsdp}. In order to do this, we will need the following result.
\begin{fact}[Jordan's Lemma, see \cite{Zhu_2013}] \label{lem:jordan}
Let $U$ and $V$ be two subspaces of $\mathbb{C}^d$, of dimension $d_1$ and $d_2$ respectively, and let $\Pi_U$ and $\Pi_V$ be the projectors onto them. There exist orthonormal bases $u_1, \ldots, u_{d_1}$ and $v_1, \ldots, v_{d_2}$ of $U$ and $V$, respectively, such that $\braket{u_i | v_j} = 0$ for all $i \neq j$. We have that the values of $\left|\braket{u_i | v_i}\right|$ for $1 \leq i \leq \min(d_1, d_2)$ are the singular values of $\Pi_U \Pi_V$, and in particular that
\[\max_{i} \left|\braket{u_i | v_i}\right|^2 = \|\Pi_U \Pi_V \Pi_U\|_\infty = \|\Pi_V \Pi_U \Pi_V\|_\infty.\]
The bases $\{u_i\}_i, \{v_i\}_i$ are called the Jordan bases of the subspaces $U$ and $V$.
\end{fact}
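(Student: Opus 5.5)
The plan is to derive the Jordan bases directly from a singular value decomposition of the ``overlap matrix'' between the two subspaces, so that no induction on invariant two-dimensional blocks is needed.

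\textbf{Setup.} Fix arbitrary orthonormal bases of $U$ and $V$. Collect them as the columns of isometries $W \in \mathbb{C}^{d\times d_1}$ and $X \in \mathbb{C}^{d\times d_2}$, so that $\Pi_U = WW^\dagger$ and $\Pi_V = XX^\dagger$. Define the $d_1\times d_2$ matrix $A = W^\dagger X$, whose entries are the inner products between the two fixed bases.

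\textbf{Rotating the bases.} Take a full singular value decomposition $A = Y\Sigma Z^\dagger$. Here $Y$ is a $d_1\times d_1$ unitary, $Z$ is a $d_2\times d_2$ unitary, and $\Sigma$ is a rectangular $d_1\times d_2$ matrix. Its only nonzero entries are $\Sigma_{ii}=s_i\geq 0$ for $i\leq \min(d_1,d_2)$. Set $u_i = (WY)e_i$ and $v_j = (XZ)e_j$. Since $WY$ and $XZ$ are again isometries whose column spaces are $U$ and $V$, these are orthonormal bases of $U$ and $V$. Their overlaps are
\[\braket{u_i|v_j} = (Y^\dagger W^\dagger X Z)_{ij} = (Y^\dagger A Z)_{ij} = \Sigma_{ij}.\]
This vanishes for $i\neq j$ and equals $s_i$ for $i=j$, which gives the orthogonality claim. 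Padding with zeros handles the case $d_1\neq d_2$, and indices beyond $\min(d_1,d_2)$ simply have no partner.

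\textbf{Singular values and the norm identity.} We have $\Pi_U\Pi_V = W A X^\dagger$. Because $W$ and $X$ are isometries, this operator has the same nonzero singular values as $A$, namely the $s_i = |\braket{u_i|v_i}|$. Finally,
\[\Pi_U\Pi_V\Pi_U = (\Pi_U\Pi_V)(\Pi_U\Pi_V)^\dagger \quad\text{and}\quad \Pi_V\Pi_U\Pi_V = (\Pi_U\Pi_V)^\dagger(\Pi_U\Pi_V),\]
using $\Pi_U^2=\Pi_U$ and $\Pi_V^2=\Pi_V$. Hence both have spectral norm $\|\Pi_U\Pi_V\|_\infty^2 = \max_i s_i^2 = \max_i |\braket{u_i|v_i}|^2$, which is the displayed identity.

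\textbf{Main obstacle.} The only delicate point is bookkeeping rather than substance. One must use the full (not thin) SVD, so that the rotated families really are complete bases of $U$ and $V$ when $d_1\neq d_2$. One must also check that the identification of singular values of $A$ with those of $\Pi_U\Pi_V$ ignores only zero singular values, which do not affect the maximum.
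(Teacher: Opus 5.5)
Your proof is correct and complete. The paper gives no proof of this statement to compare against: it imports it as a Fact with a pointer to \cite{Zhu_2013}.

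What you wrote is the standard principal-vector construction from numerical linear algebra. There, principal angles and vectors are defined through the SVD of the overlap matrix $W^\dagger X$ of two orthonormal bases, which is the setting of the cited reference. The proof of Jordan's lemma more common in quantum information goes differently: it peels off two-dimensional subspaces invariant under both projectors, for instance via eigenvectors of $\Pi_U\Pi_V\Pi_U$, and inducts.

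Your route gives each of the three claims in essentially one line: the relation $\braket{u_i|v_j}=0$ for $i\neq j$, the identification of $|\braket{u_i|v_i}|$ with singular values, and the norm identity. The only bookkeeping is using the full SVD and tracking the extra zero singular values of the $d\times d$ operator $\Pi_U\Pi_V$. You handle both. Note that the statement's phrase ``are the singular values'' is only true up to those zeros.

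The inductive route instead gives directly the block decomposition the paper actually uses in the proof of Lemma~\ref{prop:optsdp}. That proof needs the spans $T_i$ of $u_i,v_i$ to be mutually orthogonal and invariant under both projectors. This also follows immediately from what you proved. The relations $\braket{u_i|v_j}=0$ for $i\neq j$, together with orthonormality within each basis, give $T_i\perp T_j$. The identities $\Pi_U v_i=\braket{u_i|v_i}\,u_i$ and $\Pi_V u_i=\braket{v_i|u_i}\,v_i$ give invariance, with the right-hand side read as $0$ when the partner index does not exist. So your version is enough for the paper's application, and your SVD additionally makes $\braket{u_i|v_i}$ real and nonnegative, which matches the normalization assumed in Lemma~\ref{prop:optsdpspecial}.
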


\begin{proof}[Proof of Lemma \ref{prop:optsdp}]
The dual of the SDP \eqref{eq:sdp} is:
\begin{equation}\label{eq:dual}
\begin{aligned}
\min_{\lambda,\mu}\quad & -\lambda\, t - \mu \\
\text{s.t.}\quad
&  \lambda N + \mu I - M \succeq 0, \\
& \lambda \ge 0, \quad \mu\in\mathbb{R}.
\end{aligned}
\end{equation}
It isn't hard to check that program \eqref{eq:sdp} satisfies Slater's condition. Therefore, strong duality holds, which implies that the values of programs \eqref{eq:sdp} and \eqref{eq:dual} are equal. Furthermore, if ($\mu^*, \lambda^*$) is an optimal solution of \eqref{eq:dual}, setting
\[A = \lambda^* N + \mu^* I - M,\]
we have that a feasible solution $\rho$ of $\eqref{eq:sdp}$ is optimal if and only if complementary slackness is satisfied, i.e., if and only if $A\rho = 0$.

Let $\ket{\psi^*}$ be a pure optimal solution of \eqref{eq:sdp}, whose existence is guaranteed by Lemma \ref{lem:pure}. $\ket{\psi^*}$ can be assumed to lie in the space spanned by the ranges of $M$ and $N$ without loss of generality. Let $v_1, \ldots, v_{d_1}$ and $w_1, \ldots, w_{d_2}$ be Jordan bases of the range of $M$ and $N$, respectively, as promised by Lemma \ref{lem:jordan}. For $i \leq \min(d_1, d_2)$, let $T_i$ be the subspace spanned by $v_i$ and $w_i$, and for $\min(d_1, d_2) < i \leq \max(d_1, d_2)$, let $T_i$ be the subspace spanned by $v_i$ if $d_1 > d_2$ and by $w_i$ otherwise. We can then write
\[\ket{\psi^*}= \sum_i \sqrt{p_i} \ket{\psi_i}\]
with $\ket{\psi_i} \in T_i$ for every $i$. Since $\ket{\psi^*}$ is a primal optimal solution, it follows that
\[0 = \braket{\psi|A|\psi} = \sum_{i,j} \sqrt{p_i p_j} \braket{\psi_i|A|\psi_j} = \sum_i p_i \braket{\psi_i|A|\psi_i}.\]
The last equality above follows because the $T_i$ are mutually orthogonal and invariant under multiplication by $M$ or $N$, and hence by $A$. Because $A$ is positive semidefinite, this implies that $\ket{\psi_i}$ satisfies complementary slackness for every $i$ with $p_i > 0$. Also, similarly, 
\[t \leq \braket{\psi | N | \psi} = \sum_i p_i \braket{\psi_i | N | \psi_i}.\]
This shows that there exists some $i$ such that $\ket{\psi_i}$ is an optimal solution of the primal program. It can be seen that we can assume without loss of generality that $i \leq \min(d_1, d_2)$, as otherwise either the objective is zero or the constraint is violated. It follows that there exists an optimal primal solution of program \eqref{eq:sdp} which is sitting in one of the Jordan blocks $T_i$ of dimension $2$. Hence, in view of Lemma \ref{prop:optsdpspecial}, the optimal value of the program \eqref{eq:sdp} is:
\begin{align*}
\max\limits_{i \in [\min(d_1, d_2)]} F(\left|\braket{v_i|w_i}\right|^2, t) &= F\left(\max\limits_{i \in [\min(d_1, d_2)]} \left|\braket{v_i|w_i}\right|^2, t\right)\\
                        &= F\left(\|MNM\|_\infty, t\right)
\end{align*}
where the fact that $F$ is monotonic in the first argument was used for the first equality. This completes the proof.
\end{proof}

\subsection{Finishing the proof}\label{sec:finproof}
With the result of the previous section in hand, we can prove Theorem \ref{thm:eigvalbound}, thereby finishing the proof of Theorem \ref{thm:minentropy}. We start by proving the following concentration inequality.
\begin{lemma} \label{lem:concentration}
    Let $Q$ be a projector on $B_1B_2$. Define the function $f: G(d,k) \rightarrow \mathbb{R}$ by:
    \[f(P) = \|Q(I^{B_1} \otimes P^{B_2})Q\|_\infty.\]
    If $P$ is sampled from the Haar measure on $G(d,k)$, then, for some universal constant $c' > 0$, we have that for all $t > 0$
    \[\Pr\left[|f(P) - \mathbb{E}[f(P)]| \geq t\right] \leq \exp(-c'dt^2)\]
    
\end{lemma}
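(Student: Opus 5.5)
The plan is to apply the concentration inequality on $G(d,k)$ from Lemma \ref{prop:concentration}. That lemma gives exactly the stated tail bound once we know $f$ is $\kappa$-Lipschitz with respect to the Schatten $2$-norm for an absolute constant $\kappa$; the constant $c'$ then absorbs $\kappa^2$. So the whole task reduces to a Lipschitz estimate for $f(P) = \|Q(I^{B_1}\otimes P^{B_2})Q\|_\infty$.

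For the Lipschitz estimate, take $P, P' \in G(d,k)$. By the reverse triangle inequality for the operator norm,
\[ |f(P) - f(P')| \leq \|Q(I\otimes (P-P'))Q\|_\infty. \]
Since $Q$ is a projector, $\|Q\|_\infty \leq 1$, so submultiplicativity gives
\[ \|Q(I\otimes (P-P'))Q\|_\infty \leq \|I\otimes(P-P')\|_\infty = \|P-P'\|_\infty. \]
Finally, the spectral norm is at most the Frobenius norm, so $\|P-P'\|_\infty \leq \|P-P'\|_2$. Hence $f$ is $1$-Lipschitz with respect to the Schatten $2$-norm on $G(d,k)$.

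Plugging $\kappa = 1$ into Lemma \ref{prop:concentration} gives $\Pr[|f(P)-\mathbb{E}f(P)|\geq t] \leq \exp(-c\,d\,t^2)$. In the proof of that lemma, the Lipschitz constant doubles when we pass to the unitary group; this only changes the universal constant, so we get the claim with $c' = c/4$ (or whatever constant the lemma delivers).

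There is no real obstacle here. The only point to check is that the Lipschitz bound holds uniformly in $Q$ and in the dimension of $B_1$, which it does because $Q$ enters only through $\|Q\|_\infty\leq 1$ and tensoring with $I^{B_1}$ preserves the spectral norm. This uniformity is what makes the bound independent of the rank $l$ of $Q$, as needed later in the proof of Theorem \ref{thm:eigvalbound}.
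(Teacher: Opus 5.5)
Your proposal is correct and follows the paper's proof essentially line for line. The paper likewise bounds $|f(P_1)-f(P_2)|$ by the reverse triangle inequality, drops $Q$ using $\|Q\|_\infty\le 1$, uses $\|I\otimes(P_1-P_2)\|_\infty=\|P_1-P_2\|_\infty\le\|P_1-P_2\|_2$, and then invokes Lemma \ref{prop:concentration}; your remark that the factor of $2$ picked up when passing to $U(d)$ is absorbed into the universal constant is a correct (and slightly more careful) reading of that lemma.
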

\begin{proof}
Given $P_1, P_2$, we have
\begin{align*}
|f(P_1) - f(P_2)| &\leq \|Q(I \otimes (P_1-P_2))Q\|_\infty\\
              &\leq \|I \otimes (P_1-P_2)\|_\infty\\
              &= \|P_1-P_2\|_\infty\\
              &\leq \|P_1-P_2\|_2
\end{align*}
where the reverse triangle inequality was applied for the first inequality. The result then follows from the concentration inequality for Lipschitz functions in the Schatten $2$-norm in Fact \ref{prop:concentration}.                            
\end{proof}

We will need to calculate the expectation of $f(P)$ in order to use the concentration inequality above. We do this in the following lemma.
\begin{lemma} \label{lem:expectation}
Let $Q$ be a rank-$l$ projector on $B_1B_2$. Defining $f: G(d,k) \rightarrow \mathbb{R}$ by:
\[f(P) = \|Q(I^{B_1} \otimes P^{B_2})Q\|_\infty\]
We have that
\[\mathbb{E}_{P \sim \mu_{\text{Haar}}}[f(P)] \leq \frac{k}{d} + O\left(\sqrt{\frac{l}{d}}\right).\]
\end{lemma}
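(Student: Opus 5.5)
The plan is to split off the mean and control the fluctuation by an $\eps$-net argument on the $l$-dimensional range of $Q$. Write $f(P) = \sup_{x} \braket{x|(I^{B_1}\otimes P^{B_2})|x}$, where the supremum is over unit vectors $x$ in the range of $Q$. Since $\mathbb{E}_P[P] = \frac{k}{d} I$, it is natural to centre the operator. Set
\[M'_P = Q\left(I^{B_1}\otimes \left(P - \tfrac{k}{d}I\right)^{B_2}\right)Q,\]
which is Hermitian and supported on the range of $Q$. Then $Q(I\otimes P)Q = \frac{k}{d}Q + M'_P$, so $f(P) \leq \frac{k}{d} + \|M'_P\|_\infty$. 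It therefore suffices to show $\mathbb{E}\|M'_P\|_\infty = O(\sqrt{l/d})$. Centring first is essential: a multiplicative loss from a net argument applied directly to $f$ would destroy the leading $\frac{k}{d}$ term, whereas here the loss only multiplies the fluctuation.

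For a fixed unit $x$ in the range of $Q$, let $\rho_x$ be the reduced state of $\state{x}$ on $B_2$, and consider $g_x(P) = \braket{x|M'_P|x} = \Tr((P - \frac{k}{d}I)\rho_x)$. It has mean exactly $0$ over Haar-random $P$. It is $1$-Lipschitz in the Schatten $2$-norm, because $|\Tr((P_1-P_2)\rho_x)| \leq \|P_1-P_2\|_\infty \leq \|P_1-P_2\|_2$. Lemma \ref{prop:concentration} then gives $\Pr[|g_x(P)| \geq t] \leq \exp(-c\,d\,t^2)$.

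Next I take a $\frac14$-net $\mathcal{N}$ of the unit sphere of the range of $Q$, which is isometric to $\mathbb{C}^l$. By Fact \ref{fc:sphere-net}, $|\mathcal{N}| \leq 9^{2l}$. For Hermitian $M'$ and unit vectors $x, y$ with $\|x-y\|_2 \leq \frac14$, we have $|\braket{x|M'|x} - \braket{y|M'|y}| \leq 2\cdot\frac14\|M'\|_\infty$. Taking $x$ to be the top eigenvector in absolute value gives $\|M'_P\|_\infty \leq 2\sup_{y\in\mathcal{N}}|g_y(P)|$, which is the Hermitian analogue of Fact \ref{lem:net2}. A union bound then yields
\[\Pr\left[\|M'_P\|_\infty \geq 2t\right] \leq 9^{2l}\exp(-c\,d\,t^2).\]

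Finally I integrate the tail, as in the proof for the Kraus operator protocol. Choose $t_0 = C\sqrt{l/d}$ with $C$ large enough that $c\,d\,t_0^2 \geq 4l\ln 9 + 1$. Then
\[\mathbb{E}\|M'_P\|_\infty \leq 2t_0 + 2\int_{t_0}^\infty 9^{2l}e^{-cdt^2}\,dt.\]
Applying Fact \ref{lem:gaussianbound}, the integral is at most $O(1/(d t_0)) = O(1/\sqrt{ld}) = O(\sqrt{l/d})$. Hence $\mathbb{E}[f(P)] \leq \frac{k}{d} + O(\sqrt{l/d})$. The main obstacle is the net step, namely avoiding any multiplicative loss on $\frac{k}{d}$ or extra logarithmic factors; centring the operator before discretising, and using the fixed net parameter $\frac14$, is exactly what handles it.
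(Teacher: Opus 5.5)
Your proposal is correct and follows essentially the same route as the paper: split off $\frac{k}{d}$ by the triangle inequality, bound the centred operator via a constant-scale net on the range of $Q$, apply Lipschitz concentration to the mean-zero quadratic forms, take a union bound over the $2^{O(l)}$ net points, and integrate the tail with $t_0=\Theta(\sqrt{l/d})$. Your $\frac14$-net with the Hermitian quadratic-form estimate is slightly more careful than the paper's step: the paper uses a $\frac12$-net and cites Fact~\ref{lem:net2}, which is about $\|Ax\|_2$ rather than $\braket{x|A|x}$, to get the factor $2$, but for quadratic forms a $\frac12$-net gives no bound, so your version is the correct way to write that step.
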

\begin{proof}
We begin by noting that, using the triangle inequality,
\begin{align*}
\mathbb{E}[\|Q(I \otimes P)Q\|_\infty] &\leq \frac{k}{d} \mathbb{E}[\|Q(I \otimes I)Q\|_\infty]  + \mathbb{E}\left[\left\|Q\left(I \otimes \left(P - \frac{k}{d}I\right)\right)Q\right\|_\infty\right] \\
                        &= \frac{k}{d} + \mathbb{E}\left[\left\|Q\left(I \otimes \left(P - \frac{k}{d}I\right)\right)Q\right\|_\infty\right].
\end{align*}
All that remains to do is to bound the latter term, which is a routine exercise in random matrix theory. Take $\ket{v_1}, \ldots, \ket{v_N}$ to be a $(1/2)$-net over the complex sphere in the range of $Q$. This range is assumed to be of dimension $l$, so that we may take $N = C^l$ as per Fact \ref{fc:sphere-net}, for some constant $C > 1$. For $i \in [N]$, define the random variable $X_i$ by
\[X_i = \left\langle v_i \middle| \left(I \otimes \left(P - \frac{k}{d}I\right)\right) \middle| v_i\right\rangle.\]
Setting $X = \max_i |X_i|$, Fact \ref{lem:net2} gives that for any $P$, 
\[\left\|Q\left(I \otimes \left(P - \frac{k}{d}I\right)\right)Q\right\|_\infty \leq 2 X.\]
Because $\mathbb{E}[P] = \frac{k}{d}I$, it holds that $\mathbb{E}[X_i] = 0$ for all $i$. We can do calculations similar to those in Lemma \ref{lem:concentration} to show that $X_i$ is $1$-Lipschitz in $P$. Then applying Fact \ref{prop:concentration} we have that, for any $t$,
\[ \Pr[X_i > t] \leq \exp(-c'dt^2)\]
for all $i$. Then by a union bound we get, 
\[\Pr[X > t] \leq N \exp(-c'dt^2).\]
For any $t_0 > 0$, we can upper bound $\mathbb{E}[X]$ as follows:
\begin{align*}
\mathbb{E}[X] &= \int_0^\infty \Pr[X > t] dt \\
     &= \int_0^{t_0} \Pr[X > t] dt + \int_{t_0}^\infty \Pr[X > t] dt \\
     &\leq t_0 + \frac{N}{2c't_0 d} \exp(-c'dt_0^2)
\end{align*}
where the bound in Fact \ref{lem:gaussianbound} was used in the last inequality. Setting $t_0 = \Theta\left(\sqrt{\frac{l}{d}}\right)$ makes the second term in $O\left(\frac{1}{\sqrt d}\right)$ and hence proves the result.
\end{proof}

We can now complete the proof of the main theorem of this section.
\begin{proof}[Proof of Theorem~\ref{thm:eigvalbound}]
Let $G$ and $\omega$ and $\nu$ be as in the statement of the theorem. A well-known characterization of the sum of the largest $l$ eigenvalues is
\[ \sum_{j=1}^l\nu_j = \sup_{0 \preceq Q \preceq I, \text{rank}(Q)=l}\Tr(Q\omega).\]
Moreover, the optimal $Q$ may be taken to be a projector. Therefore, it is enough to bound $\Tr(Q\omega)$ for any rank-$l$ projector $Q$. 
Let $Q$ be such a projector. For $P \in G(d,k)$, let $f(P$) be as defined in Lemma \ref{lem:expectation}, 
and set $t(P) = \Tr((I^{B_1} \otimes P^{B_2}) \omega_P)$. By definition, we have, $\mathbb{E}_{P \sim \mu}t(P) = 1-\eps$. Lemma \ref{prop:optsdp} then gives:
\begin{align*}
\Tr(Q\omega) & = \mathbb{E}_{P\sim \mu}[\Tr(Q\omega_P)]\\
            &\leq \mathbb{E}_{P \sim \mu}[F(f(P), t(P))] \\ 
           &\leq F(\mathbb{E}_{P \sim \mu} [f(P)], 1 - \varepsilon)
\end{align*}
where the concavity of $F$ was used for the second inequality.  It is therefore enough to control the expectation of $f(P)$ under  $P\sim \mu$. We have already calculated its expectation under the Haar distribution in Lemma \ref{lem:expectation}. Using Fact \ref{lem:expectationbound} we therefore get for $r=\log K$, 
\begin{align*}
\mathbb{E}_{P \sim \mu} [f(P)] &\leq \mathbb{E}_{P \sim \mu_\text{Haar}} [f(P)] + \mathbb{E}_{P \sim \mu} \left[\left|f(P) - \mathbb{E}_{P \sim \mu_\text{Haar}} [f(P)]\right|\right] \\
                            &\leq \frac{k}{d} + O\left(\sqrt{\frac{l}{d}}\right) + 2\left(\mathbb{E}_{P \sim \mu_\text{Haar}} \left[\left|f(P) - \mathbb{E}_{P \sim \mu_\text{Haar}} [f(P)]\right|^r\right]\right)^{1/r}.
\end{align*}

Finally, we can upper bound the last term in the above expression using Fact \ref{prop:subgaussian} along with Lemma \ref{lem:concentration}. This gives us,
\[ \left(\mathbb{E}_{P \sim \mu_\text{Haar}} \left[\left|f(P) - \mathbb{E}_{P \sim \mu_\text{Haar}} [f(P)]\right|^r\right]\right)^{1/r} \leq O\left(\sqrt{\frac{r}{d}}\right) \leq O\left(\sqrt{\frac{\log K}{d}}\right).\]
Putting everything together, we get,
\[ \sum_{j=1}^l\nu_j = \Tr(Q\omega) \leq F\left(\frac{k}{d} + O\left(\sqrt{\frac{l}{d}} + \sqrt{\frac{\log K}{d}}\right), 1-\eps\right). \qedhere \]
\end{proof}

\section{Applications}
\subsection{Incompressibility of an ensemble of flat states}\label{sec:incompress}
In this section, we prove our result about the impossibility of visible compression of an ensemble of flat states, as a corollary of our entanglement lower bound for RSP in Theorem \ref{thm:minentropy}. First we state the definition of visible compression (without entanglement).
\begin{definition}[Visible state compression without entanglement]
Fix an ensemble $\mathcal{E} = (\mu, \rho)$ of quantum states in $\mathbb{C}^d$. A visible $(d',\eps)$-compression scheme for $\mathcal{E}$ is an encoding map $E: \mathbb{C}^d \to \mathbb{C}^{d'}$ and a decoding map $D: \mathbb{C}^{d'}\to\mathbb{C}^d$ such that $D$ is CPTP ($E$ can be any measurable map), and $d'\leq d$, such that
\[ \mathbb{E}_{\rho \sim \mu}\left[\|D\circ E(\rho) - \rho\|_{\tr}\right] \leq \eps.\]
\end{definition}
We now prove Theorem \ref{cor:incompress}, whose statement is recalled below.
\incompress*
\begin{proof}
Such a compression procedure gives rise to a $(d,k)$-RSP protocol which consists in Alice preparing $E(\rho)$ on her end and teleporting it to Bob, who then applies the decompression map $D$ to get $D\circ E(\rho)$, and this is close to $\rho$. This protocol requires $\log d'$ EPR pairs and $2\log d'$ bits of communication, and the average error of the protocol is the same as that of the original procedure. Take $\eta > 0$, and take $d_0$ to be the smallest value of $d$ such that if the second term in the first argument of the fidelity function in the statement of Theorem \ref{thm:minentropy} is $A\sqrt{\frac{m}{d}}$, then
\[A\sqrt{\frac{2\log d}{d}} < \frac{\eta}{2}.\]
Setting $\delta = F\left(\frac{k}{d}+\frac{\eta}{2}, 1-\varepsilon\right)$, we see that $\delta < 1$ for any $\eps$ as in the statement of the corollary. Moreover, setting $\gamma = \frac{1-\delta}{2}$, and noting that $F$ is monotonically increasing in the first argument, we can compute $H^{\delta+\gamma}$ of $\log d'$ EPR pairs:
\[H^{\delta + \gamma}_{\text{min}}(A)_\sigma = \log d' - \log(1-\gamma - \delta) = \log d' - \log(1 - \delta) + 1. \]
Now Theorem \ref{thm:minentropy} implies that
\[ \log d' - \log(1-\delta) + 1 \geq \log d + 3 \log(1-\delta) - 3,\]
which can be rearranged to give the result.
\end{proof}

\subsection{An entanglement-optimal bounded-error protocol for the equality function}
We believe that the protocols we gave for $(d,k)$-RSP could turn out to be a useful subroutine for designing entanglement-assisted protocols in the future. Here we describe such a protocol for computing the equality function on $n$ bits.

In the $\mathrm{EQ}_n$ problem, Alice and Bob are given $x,y \in \{0,1\}^n$ and need to determine whether they are equal. It is known that $n$ bits (or $n$ qubits in the absence of shared entanglement) of communication are necessary and sufficient to achieve this with probability $1$, but allowing for a small error probability $\varepsilon > 0$ changes the story significantly. It is known that, classically, the sharing of $\log n + O(\log \frac{1}{\eps})$ public random bits, along with communication $\log \frac{1}{\varepsilon} + O(1)$ (which crucially does not depend on $n$) are necessary and sufficient. Clearly, the same is possible if Alice and Bob share this many EPR pairs instead of random bits as they can be converted into random bits by measuring them in the computational basis.

Our protocol for $\mathrm{EQ}_n$ will instead use $O(1)$ communication and only $\frac{1}{2}\log n$ many EPR pairs. Our result is formally restated below.
\eq*
The protocol achieving Theorem \ref{thm:equpper} is given in Figure \ref{fig:proteq}.

\begin{center}
\begin{mdframed}[nobreak=true]
\textbf{Given.}
\begin{itemize}
\item A collection $\{P_x\}_{x \in \{0,1\}^n}$ of elements of $G(d,k)$
\item An error parameter $\eps > 0$
\end{itemize}

\textbf{Input.}
Alice and Bob get inputs $x \in \{0,1\}^n$ and $y\in \{0,1\}^n$ respectively.

\textbf{Protocol.}
\begin{enumerate}
  \item Alice and Bob run a $(d,k)$-RSP protocol with worst-case error $\frac{\eps}{2}$ given Alice input $P_x$. 
  \item Bob does the $\{P_y, I-P_y\}$ measurement on the target register of the RSP protocol; he declares that $x=y$ if he obtained the first outcome and $x\neq y$ otherwise.
\end{enumerate}

\end{mdframed}
\captionsetup{hypcap=false}
\captionof{figure}{An entanglement-assisted protocol for the equality function on $n$ bits}
\label{fig:proteq}
\end{center}

It is easy to see that the worst-case error of the protocol in Figure \ref{fig:proteq} is at most
\[\frac{\eps}{2} + \max_{x \neq y} \frac{1}{k} \Tr(P_x P_y).\]
Therefore, we need to pick $\{P_x\}_x$ such that $\max_{x\neq y} \Tr(P_xP_y) \leq \frac{\eps}{2}$ in order for the protocol to succeed with probability $1-\eps$. 
The following lemma shows that for $d = \Theta\left(\frac{\sqrt{n}}{\varepsilon^{3/2}}\right)$ and $k = \Theta(\eps d)$, such a choice of $P_x$'s exist. This proves Theorem \ref{thm:equpper}, assuming we use our Kraus operator protocol for $(d,k)$-RSP.

\begin{lemma}
    Take $d \in \mathbb{N}$, and for $\varepsilon \in (0,1)$, set $k = \lceil\frac{\varepsilon d}{2} \rceil$. For $m=\Theta(d^2 \varepsilon^3)$, there exist projectors $P_1, \ldots, P_{2^m} \in G(d,k)$ such that, for all $i \neq j$, we have
    \[\frac{1}{k} \Tr(P_iP_j) < \frac{\varepsilon}{2}.\]
\end{lemma}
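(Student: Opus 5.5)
The plan is the probabilistic method with Haar-random projectors: sample $P_1,\ldots,P_{2^m}$ independently and uniformly from $G(d,k)$, show that a single pair has small overlap except with probability $\exp(-\Omega(d^2\varepsilon^3))$, and finish with a union bound over the fewer than $2^{2m}$ pairs. I flag at the outset that this only works with a smaller $k$ than the statement prescribes: with $k=\lceil \varepsilon d/2\rceil$ exactly, the obstruction in the third paragraph seems to make the claim false for most parameters. The planned steps are as follows.

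First, compute the mean overlap of a fixed pair. Fix $P_j$ and let $P_i$ be Haar-random. Since $\mathbb{E}[P_i]=\frac{k}{d}I$, we get
\[\mathbb{E}\left[\tfrac{1}{k}\Tr(P_iP_j)\right]=\tfrac{1}{k}\cdot\tfrac{k^2}{d}=\tfrac{k}{d}.\]
Second, prove concentration. The map $g(P)=\frac{1}{k}\Tr(PP_j)$ has Lipschitz constant of order $1/\sqrt{k}$ in the Schatten $2$-norm, by Cauchy--Schwarz with $\|P_j\|_2=\sqrt{k}$. So Lemma~\ref{prop:concentration} gives
\[\Pr\left[g(P)\geq \tfrac{k}{d}+t\right]\leq \exp(-c\,dk\,t^2).\]
With $k=\Theta(\varepsilon d)$ and $t=\Theta(\varepsilon)$, this is $\exp(-\Theta(d^2\varepsilon^3))$. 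Third, the union bound over pairs succeeds once $m=c_0 d^2\varepsilon^3$ with $c_0$ small enough, giving $\max_{i\neq j}\frac{1}{k}\Tr(P_iP_j)<\frac{k}{d}+t$.

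The main obstacle is the mean, not the concentration. With $k=\lceil\varepsilon d/2\rceil$ the mean $\frac{k}{d}$ already satisfies $\frac{k}{d}\ge\frac{\varepsilon}{2}$, so no choice of $t>0$ brings the bound below $\varepsilon/2$. Before proceeding, I would check whether any construction can beat the mean. Write $S=\sum_i P_i$ and $M=2^m$. Then
\[\sum_{i\ne j}\Tr(P_iP_j)=\|S\|_2^2-Mk\ge \frac{M^2k^2}{d}-Mk.\]
Hence
\[\max_{i\neq j}\tfrac{1}{k}\Tr(P_iP_j)\ \ge\ \frac{k}{d}-\frac{1-k/d}{M-1}.\]
For $M=2^{\Theta(d^2\varepsilon^3)}$ the right-hand side is $\frac{k}{d}$ up to an exponentially small correction. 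So the stated inequality can hold only when $\frac{k}{d}-\frac{\varepsilon}{2}$ is itself exponentially small, which requires $\varepsilon d/2$ to be essentially an integer. Even then one needs near-equality in a Welch-type bound with $M\gg d^2$ projectors, which Rankin/Welch-type bounds rule out.

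So I would first record this averaging obstruction. I would then carry out the random construction with the rank chosen so that the mean sits a constant fraction below the threshold, e.g.\ $k=\lceil\varepsilon d/4\rceil$ and $t=\varepsilon/8$. This yields $\frac{1}{k}\Tr(P_iP_j)<\frac{3\varepsilon}{8}<\frac{\varepsilon}{2}$ for all $i\neq j$ with $m=\Theta(d^2\varepsilon^3)$. That is exactly what the equality protocol of Figure~\ref{fig:proteq} uses. The parameters of Theorem~\ref{thm:equpper} change only by constants, so the claimed $\frac12\log n+O(\log\frac1\varepsilon)$ entanglement is unaffected. The only delicate point in the concentration step is verifying the $1/\sqrt{k}$ Lipschitz constant, since the factor $dk$ in the exponent is what produces $d^2\varepsilon^3$.
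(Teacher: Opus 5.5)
Your diagnosis is correct, and it identifies exactly where the paper's own proof fails. The paper takes the route you describe: Haar-random $P_i$, the $\frac{1}{\sqrt{k}}$-Lipschitz bound for $Q\mapsto\frac{1}{k}\Tr(QP_0)$, concentration on $G(d,k)$ (Lemma~\ref{prop:concentration}), and a union bound over $\binom{2^m}{2}$ pairs. But it bounds $\Pr[\frac{1}{k}\Tr(P_iP_j)>\frac{\varepsilon}{2}]$ by $\exp(-\frac{c'}{4}dk\varepsilon^2)$. That is the tail bound with deviation $t=\varepsilon/2$ measured from $0$, not from the mean $k/d$. Since $k=\lceil\varepsilon d/2\rceil$ gives $k/d\ge\varepsilon/2$, this step is unjustified. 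In fact the statement itself is false in the regime that matters.

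Your obstruction can be made fully rigorous without the case analysis on whether $\varepsilon d/2$ is an integer. Put $X_i=P_i-\frac{k}{d}I$. These lie in the real space of traceless Hermitian matrices, of dimension $D=d^2-1$, and satisfy $\Tr(X_iX_j)=\Tr(P_iP_j)-\frac{k^2}{d}$. If $\frac{1}{k}\Tr(P_iP_j)<\frac{\varepsilon}{2}\le\frac{k}{d}$ for all $i\neq j$, then the $X_i$ have pairwise strictly negative inner products. Such a family in $\mathbb{R}^D$ has at most $D+1$ members. To see this, note that among any $D+1$ of them there is a dependency $\sum_{i\in I}a_iX_i=\sum_{j\in J}b_jX_j=:W$ with $a_i,b_j>0$, $I\cap J=\emptyset$ and $I\cup J\neq\emptyset$. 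Then $\Tr(W^2)=\sum_{i\in I,j\in J}a_ib_j\Tr(X_iX_j)\le 0$ forces $W=0$. Taking $I\neq\emptyset$ without loss of generality and pairing with one further $X_l$ gives $0=\sum_{i\in I}a_i\Tr(X_iX_l)<0$, a contradiction. Hence $2^m\le d^2$. This is violated for $m=\Theta(d^2\varepsilon^3)$ as soon as $m>2\log d$, in particular in the equality application, where $d^2\varepsilon^3=\Theta(n)$.

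Your repaired version, with $k=\lceil\varepsilon d/4\rceil$ and $t=\varepsilon/8$, is the right fix. It is consistent with the paper's informal choice $k=\Theta(\varepsilon d)$ just before the lemma, and it leaves Theorem~\ref{thm:equpper} intact up to constants. One small slip: because of the ceiling, $k/d\le\varepsilon/4+1/d$, so you get $\frac{1}{k}\Tr(P_iP_j)<\frac{3\varepsilon}{8}+\frac{1}{d}$ rather than $<\frac{3\varepsilon}{8}$. This is still at most $\frac{\varepsilon}{2}$ once $\varepsilon d\ge 8$. The remaining regime has $d^2\varepsilon^3<64$, and it is absorbed by taking the constant in $m=\Theta(d^2\varepsilon^3)$ small enough.
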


\begin{proof}
As in the rest of the paper, the existence of these projectors is shown using the probabilistic method. Note that for a fixed projector $P_0$, the function $f: G(d,k) \mapsto \mathbb{R}$ given by $f(Q) = \frac{1}{k} \Tr(QP_0)$ has Lipschitz constant $\frac{1}{\sqrt{k}}$. Therefore, in view of the concentration inequality in Fact \ref{lem:concentration}, if we sample $P_1, \ldots,  P_N$ uniformly at random in $G(d,k)$, the probability that there exist $i \neq j$ with $\frac{1}{k} \Tr(P_iP_j) >\frac{\varepsilon}{2}$ is bounded by
\[ \binom{2^m}{2} \cdot \exp\left(-\frac{c'}{4}dk\varepsilon^2\right).\]
This is smaller than $1$ provided that 
\[2^{2m}  \leq \exp\left(\frac{c'}{8}dk\varepsilon^2\right) \leq \exp\left(\frac{c'}{8}d^2\varepsilon^3\right)\]
which is satisfied for the value of $m$ given in the statement of the lemma.
\end{proof}



\section*{Acknowledgments}
We thank Ashwin Nayak for useful conversations about the paper \cite{Bab_Hadiashar_2020}, as well as Debbie Leung for useful conversations about the paper \cite{bennett2005remote}. We thank Rahul Jain for pointing out the Brothers Extension method in \cite{AJM+16} to us as well as for pointing out a mistake in an earlier version of the paper. Although all the high-level ideas in this paper were human-generated and everything in the paper was written by the two authors, we acknowledge the use of various LLMs such as GPT-5, Gemini and Claude to pin down certain technical details. The second author is supported by NSERC.

\printbibliography

\end{document}